\DeclareMathAlphabet{\pazocal}{OMS}{zplm}{m}{n}
\newcommand{\mb}{\mathbb}
\let\bbordermatrix\bordermatrix
\patchcmd{\bbordermatrix}{8.75}{4.75}{}{}
\patchcmd{\bbordermatrix}{\left(}{\left[}{}{}
\patchcmd{\bbordermatrix}{\right)}{\right]}{}{}
\newcommand{\sr}{\stackrel}
\newcommand{\rar}{\rightarrow}
\newcommand{\tri}{\sr{\triangle}{=}}
\newcommand{\be}{\begin{equation}}
\newcommand{\ee}{\end{equation}}
\newcommand{\bea}{\begin{eqnarray}}
\newcommand{\eea}{\end{eqnarray}}
\newcommand{\bes}{\begin{eqnarray*}}
\newcommand{\ees}{\end{eqnarray*}}
\newcommand{\bce}{\begin{center}}
\newcommand{\ece}{\end{center}}
\newcommand{\beae}{\begin{IEEEeqnarray}{rCl}}
\newcommand{\eeae}{\end{IEEEeqnarray}}
\def\VR{\kern-\arraycolsep\strut\vrule &\kern-\arraycolsep}
\def\vr{\kern-\arraycolsep & \kern-\arraycolsep}
\newcommand{\ben}{\begin{enumerate}}
\newcommand{\een}{\end{enumerate}}
\newcommand{\hso}{\hspace{.1in}}
\newcommand{\hst}{\hspace{.2in}}
\newcommand{\noi}{\noindent}
\newtheorem{theorem}{Theorem}[section]
\newtheorem{remark}{Remark}[section]
\newtheorem{corollary}{Corollary}[section]
\newtheorem{definition}{Definition}[section]
\newtheorem{lemma}{Lemma}[section]
\newtheorem{example}{Example}[section]
\newtheorem{conclusion}{Conclusion}[section]
\begin{document}

%\baselineskip=16pt
%\sloppy

%% Paper Title
%% You can use linebreaks \\ within to get better formatting as
%% desired.
\title{Information Structures of Maximizing Distributions of Feedback Capacity  for General  Channels with Memory \& Applications
}

%% Author names and affiliations:
%%
%% Avoiding spaces at the end of the author lines is not a problem with
%% conference papers because we don't use \thanks or \IEEEmembership.
%%
%% For several authors with only one affiliation:
%%
% \author{
%   \IEEEauthorblockN{Hui-Ting Chang and Stefan M.~Moser}
%   \IEEEauthorblockA{Department of Electrical and Computer Engineering\\
%     National Chiao Tung University (NCTU)\\
%     Hsinchu, Taiwan\\
%     Email: \{email-of-hui-ting,email-of-stefan\}@ieee.org}
% }
%%
%% For up to three affiliations:
%%

\author{Charalambos~D.~Charalambous and Christos~K.~Kourtellaris and Ioannis Tzortzis
%\thanks{Manuscript received October 9,~2012;~revised....}
\thanks{This work was financially supported by a medium size University of Cyprus grant entitled ``DIMITRIS".}
\thanks{The authors are with the Department of Electrical and Computer Engineering, University of Cyprus, 75 Kallipoleos Avenue, P.O. Box 20537, Nicosia, 1678, Cyprus, e-mail: $\{chadcha,kourtellaris.christos,tzortzis.ioannis@ucy.ac.cy\}$
}}

%%
%% For over three affiliations, or if they all won't fit within the width
%% of the page, use this alternative format:
%%
% \author{
%   \IEEEauthorblockN{
%     Michael Shell\IEEEauthorrefmark{1},
%     Homer Simpson\IEEEauthorrefmark{2},
%     James Kirk\IEEEauthorrefmark{3},
%     Montgomery Scott\IEEEauthorrefmark{3} and
%     Eldon Tyrell\IEEEauthorrefmark{4}}
%   \IEEEauthorblockA{
%     \IEEEauthorrefmark{1}School of Electrical and Computer Engineering\\
%     Georgia Institute of Technology, Atlanta, Georgia 30332--0250\\
%     Email: see http://www.michaelshell.org/contact.html}
%   \IEEEauthorblockA{
%     \IEEEauthorrefmark{2}Twentieth Century Fox, Springfield, USA\\
%     Email: homer@thesimpsons.com}
%   \IEEEauthorblockA{
%     \IEEEauthorrefmark{3}Starfleet Academy, San Francisco, California 96678-2391\\
%     Telephone: (800) 555--1212, Fax: (888) 555--1212}
%   \IEEEauthorblockA{
%     \IEEEauthorrefmark{4}Tyrell Inc., 123 Replicant Street, Los Angeles, California 90210--4321}
% }

\maketitle

\newpage

\tableofcontents
\newpage

\begin{abstract}
For any  class of  channel conditional distributions, with finite memory dependence on channel input RVs $A^n \tri \{A_i: i=0, \ldots, n\}$ or channel output RVs $B^n \tri \{B_i: i=0, \ldots, n\}$ or both,  
 we characterize the sets of channel input distributions, which maximize  directed information defined by
\begin{align}
 I(A^n \rar B^n) \tri \sum_{i=0}^n I(A^i;B_i|B^{i-1})  \nonumber %  \label{main_p}
\end{align}
and we derive the corresponding expressions, called  ``characterizations of Finite Transmission Feedback Information (FTFI) capacity''. The main theorems state that optimal channel input distributions occur in subsets ${\cal P}_{[0,n]}^{CI}\subseteq {\cal  P}_{[0,n]} \tri \big\{  {\bf P}_{A_i|A^{i-1}, B^{i-1}}:  i=0, \ldots, n\big\}$,  which satisfy conditional independence on past information. We derive similar characterizations, when general  transmission cost constraints are imposed. Moreover, we also show that the structural properties apply to general nonlinear and linear autoregressive channel models defined by discrete-time recursions on general alphabet spaces, and driven by arbitrary distributed noise processes.

We derive these structural properties  by invoking  stochastic optimal control theory and  variational equalities of directed information,  to identify  tight upper bounds on $I(A^n \rar B^n)$,  which are   achievable over subsets  of conditional  distributions  ${\cal P}_{[0,n]}^{CI} \subseteq {\cal P}_{[0,n]}$, which satisfy conditional independence and they are  specified by the dependence of  channel distributions and transmission cost functions on inputs and output symbols. 

We apply the characterizations to recursive Multiple Input Multiple Output Gaussian Linear Channel Models with  limited  memory on channel input and output sequences, and we show a separation principle between the computation of the elements of the optimal strategies.

The structural properties of optimal channel input distributions,  generalize the structural properties of Memoryless Channels with feedback, expressed in terms of conditional independence,  to any channel distribution with memory, and settle various long standing open problems in information theory. 

%,  under  general transmission cost constraints.   

%Encoder strategies which achieve the FTFI capacity and Feedback Capacity are constructed by utilizing the information structures of the optimal channel input distributions.

%Sufficient conditions are identified  for  feedback not to increase capacity of channels with memory. 

\end{abstract}

\section{Introduction}
\label{introduction}
Shannon's  mathematical model of a communication channel with feedback is defined  by $$\Big(\big\{ {\mb A}_i: i=-\infty, \ldots, n\big\},\big\{ {\mb B}_i: i=-\infty, \ldots, n\big\}, \big\{ {\bf P}_{A_i|A^{i-1}, B^{i-1}}:i=0,  \ldots, n\big\}, \big\{ {\bf P}_{B_i|B^{i-1}, A^{i}}:i=0,  \ldots, n\big\} \Big)$$ 
where $a^n\tri \{\ldots, a_{-1}, a_0, a_1, \ldots, a_n\}\in \times_{i=-\infty}^n {\mb A}_i$ are the    channel input symbols, $b^n \tri \{\ldots, b_{-1}, b_0, b_1, \ldots, b_n\}\in \times_{i=-\infty}^n {\mb B}_i$ are the channel output symbols, 
 ${\cal P}_{[0,n]} \tri \big\{ {\bf P}_{A_i|A^{i-1}, B^{i-1}}:i=0, 1, \ldots, n\big\}$ is the sequence of channel input conditional distributions with feedback, ${\cal C}_{[0,n]}\tri \big\{ {\bf P}_{B_i|B^{i-1}, A^{i}}:i=0, 1, \ldots, n\big\}$ is the sequence of channel conditional  distributions, and the initial distribution ${\bf P}_{A^{-1}, B^{-1}}\equiv \nu(da^{-1},db^{-1})$ is fixed. \\
 Shannon's operational definion for reliable communication of information over the channel is described via a sequence of feedback codes  $\{(n, { M}_n, \epsilon_n):n=0, 1, \dots\}$, which consist of the following elements.  \\
(a)  A set of uniformly distributed messages ${\cal M}_n \tri \{ 1,  \ldots, M_n\}$ and a set of encoding strategies,  mapping messages  into channel inputs of block length $(n+1)$, defined by\footnote{The superscript on expectation, i.e., ${\bf P}^g$ indicates the dependence of the distribution on the encoding strategies.} 
\begin{align}
{\cal E}_{[0,n]}^{FB}(\kappa) \triangleq & \Big\{g_i: {\cal M}_n \times {\mathbb A}^{i-1} \times {\mb B}^{i-1}  \longmapsto {\mb A}_i, \hso  a_0=g_0(w), a_1=g_1(w,a_0,b_0),\ldots, a_n=g_n(w, a^{n-1}, b^{n-1}), \nonumber \\
&\hso  w\in {\cal M}_n: \hso  \frac{1}{n+1} {\bf E}^g\Big(c_{0,n}(A^n,B^{n-1})\Big)\leq \kappa  \Big\}, \hso n=0, 1, \ldots. \label{block-code-nf-non}
\end{align}
The codeword for any $w \in {\cal M}_n$  is $u_w\in{\mb A}^n$, $u_w=(g_0(w), g_1(w, a_0, b_0),\dots,g_n(w, a^{n-1}, b^{n-1}))$, and ${\cal C}_n=( u_1,u_2,\dots,u_{{M}_n})$ is  the code for the message set ${\cal M}_n$, and $\{A^{-1}, B^{-1}\}=\{\emptyset\}$.  In general, the code  may depend on the initial data, depending on the convention, i.e.,  $(A^{-1}, B^{-1})=(a^{-1}, b^{-1})$, which are known to the encoder and decoder (unless specified otherwise). 
%(unless  it can be shown that in  the limit, as $n \longrightarrow \infty$, the induced channel output process  has a unique invariant distribution).  
\\
(b)  Decoder measurable mappings $d_{0,n}:{\mb B}^n\longmapsto {\cal M}_n$,  such that the average
probability of decoding error satisfies
\begin{align}
{\bf P}_e^{(n)} \triangleq \frac{1}{M_n} \sum_{w \in {\cal M}_n} {\bf  P}^g \Big\{d_{0,n}(B^{n}) \neq w |  W=w\Big\}\equiv {\bf  P}^g\Big\{d_{0,n}(B^n) \neq W \Big\} \leq \epsilon_n\nonumber
\end{align}
and the decoder may also assume knowledge of the initial data.\\
The coding rate or transmission rate over the channel is defined by  $r_n\triangleq \frac{1}{n+1} \log M_n$.
A rate $R$ is said to be an achievable rate, if there exists  a  code sequence satisfying
$\lim_{n\longrightarrow\infty} {\epsilon}_n=0$ and $\liminf_{n \longrightarrow\infty}\frac{1}{n+1}\log{{M}_n}\geq R$. \\The operational definition of feedback capacity of the channel is the supremum of all achievable rates, i.e., $C\triangleq \sup \{R: R \: \: \mbox{is achievable}\}$.

 Given a source process $\big\{X_i: i=0,1, \ldots, \big\}$ with finite entropy rate, which is mapped into messages to be encoded and transmitted over the channel, and  satisfies  conditional independence \cite{massey1990} 
\begin{align}
{\bf P}_{B_i|B^{i-1}, A^i, X^k}={\bf P}_{B_i|B^{i-1}, A^i} \hso   \forall k \in \{0,1, \ldots, n\},\hso i=0, \ldots, n \label{CI_Massey_N} 
\end{align}
 under appropriate conditions, it is shown in  \cite{kramer2003,tatikonda-mitter2009,permuter-weissman-goldsmith2009}, using tools from \cite{dobrushin1959,pinsker1964,gallager1968,blahut1987,cover-thomas2006,ihara1993,verdu-han1994,han2003},  that the supremum of all achievable rates is characterized  by the information  quantity $C_{A^\infty \rar B^\infty}^{FB}$, defined by the extremum problem 
%Coding theorems for  channels with memory with and without feedback are developed extensively over the years, in an anthology of papers, such as,   \cite{dobrushin1959,pinsker1964,gallager1968,blahut1987,cover-thomas2006,ihara1993,verdu-han1994,han2003,kramer2003,tatikonda-mitter2009,permuter-weissman-goldsmith2009,gamal-kim2011}.\\ 
\begin{align}
C_{A^\infty\rar  B^\infty}^{FB} \tri  \liminf_{n \longrightarrow \infty} \frac{1}{n+1} C_{A^n \rar  B^n}^{FB}, \hst  C_{A^n \rar B^n}^{FB} \tri   \sup_{    {\cal P}_{[0,n]} }I(A^n\rar B^n) \label{cap_fb_1}
 \end{align}
where $I(A^n \rar B^n)$ is the directed information from $A^n$ to $B^n$,  defined by \cite{marko1973,massey1990}
 \begin{align}
 I(A^n\rar B^n) \tri & \sum_{i=0}^n I(A^i;B_i|B^{i-1})= \sum_{i=0}^n {\bf E}_{\nu} \Big\{ \log \Big( \frac{ d{\bf P}_{B_i|B^{i-1}, A^i}(\cdot|B^{i-1}, A^i)}{d{\bf P}_{B_i|B^{i-1}}(\cdot|B^{i-1})}(B_i)\Big)\Big\} \label{intro_fbc1a}
  \end{align}
Here, ${\bf E}_\nu\{\cdot\}$ denotes expectation with respect to the joint distribution induced   by the channel  input conditional distribution from  ${\cal P}_{[0,n]}$, the specific channel conditional distribution from ${\cal C}_{[0,n]}$,   and the initial distribution $\nu(da^{-1},db^{-1})$.\\
A fundamental problem in such extremum problems of directed information,  is to determine the information structures of optimal channel input conditional distributions
 ${\cal P}_{[0,n]} \tri \big\{ {\bf P}_{A_i|A^{i-1}, B^{i-1}}:i=0, 1, \ldots, n\big\}$, for any  class of channel distributions, which maximize $I(A^n\rar B^n)$, equivalently, to characterize the subsets of ${\cal P}_{[0,n]}$ which satisfy conditional independence and maximize $I(A^n \rar B^n)$.

Our interest in the structural properties of optimization problem $C_{A^n\rar  B^n}^{FB}$ is the following. From the   converse coding theorem \cite{massey1990,kramer1998,permuter-weissman-goldsmith2009}, in view of (\ref{CI_Massey_N}), if the supremum over channel input distributions in $C_{A^n\rar  B^n}^{FB}$ exists, and its per unit time limit  exists and it is finite, then $C_{A^\infty\rar  B^\infty}^{FB}$ is a non-trivial  upper bound on the supremum of all achievable rates of feedback codes-the feedback capacity, while under stationary ergodicity or Dobrushin's directed information stability \cite{dobrushin1959,pinsker1964,ihara1993,tatikonda-mitter2009,permuter-weissman-goldsmith2009}, then $C_{A^\infty\rar  B^\infty}^{FB}$ is indeed the feedback capacity. \\ 
 When transmission cost constraints are imposed of the form (or variants of them)
\begin{align}
{\cal P}_{[0,n]}(\kappa)\tri  \Big\{ {\bf P}_{A_i|A^{i-1}, B^{i-1}},  i=0, \ldots, n:  \frac{1}{n+1} {\bf E}_\nu\Big(\sum_{i=0}^n \gamma_i(T^i A^n, T^i B^{n})\Big) \leq \kappa\Big\}, \hso \kappa \in [0, \infty) \label{cap_fb_3}
\end{align}
 the optimization problem (\ref{cap_fb_1}) is replaced by  
\begin{align}
C_{A^\infty\rar  B^\infty}^{FB}(\kappa) \tri  \liminf_{n \longrightarrow \infty} \frac{1}{n+1} C_{A^n \rar  B^n}^{FB}(\kappa), \hst  C_{A^n \rar B^n}^{FB}(\kappa) \tri   \sup_{    {\cal P}_{[0,n]}(\kappa) }I(A^n\rar B^n) \label{cap_fb_1_TC}
\end{align}
where  for  each $i$, the dependence of transmission cost function $\big\{\gamma_i(\cdot, \cdot): i=0, \ldots,n \big\}$, on input and output symbols is specified by  $T^i a^n \subseteq \{a_0, a_1, \ldots, a_i\},  T^i b^{n} \subseteq \{b_0, b_1, \ldots, b_{i}\}$, and these are either fixed or nondecreasing with $i$, for        $i=0,1, \ldots, n$. 

Our main objective is the following. Given a specific channel distribution and transmission cost function, we wish to determine the subsets of optimal channel input  distributions $ {\cal P}_{[0,n]}^{CI} \subseteq {\cal P}_{[0,n]}$ and $ {\cal P}_{[0,n]}^{CI}(\kappa) \subseteq {\cal P}_{[0,n]}(\kappa)$, which satisfy conditional independence and correspond to the  maximizing subsets of the extremum problems $C_{A^n\rar  B^n}^{FB}$ and $C_{A^n\rar  B^n}^{FB}(\kappa)$, respectively. Then to determine the corresponding characterizations, called Finite Transmission Feedback Information (FTFI) Capacity and Feedback capacity (i.e., their per unit time limiting versions), as it is done for Discrete Memoryless Channels (DMCs).  

\subsection{Literature Review}
\label{R-P-L}
Shannon and subsequently Dobrushin \cite{dobrushin1958} characterized the capacity of DMCs  (and  memoryless channels with continuous alphabets, subject to transmission cost $\int |a|^2 {\bf P}_A(da) \leq \kappa$), with and without feedback, and obtained the well-known two-letter expression 
\bea
C\tri \max_{{\bf P}_A} I(A;B). \label{cap_fb_c}
\eea
For memoryless channels without feedback, this characterization is obtained from the  upper bound
\bea
C_{A^n ; B^n}\tri \max_{{\bf P}_{A^n}} I(A^n;B^n) \leq \; \max_{{\bf P}_{A_i}, i=0, \ldots, n} \sum_{i=0}^n I(A_i;B_i)\leq (n+1) C. \label{cap_nf_c}
\eea
since this bound is achievable, when the  channel input distribution  satisfies conditional independence $
{\bf P}_{A_i|A^{i-1}}(da_i|a^{i-1})={\bf P}_{A_i}(da_i),         i=0, 1, \ldots, n$, and   $\{A_i:i =0, 1, \ldots, \}$ is identically distributed, which then implies  the  joint process $\{(A_i, B_i): i=0,1, \ldots, \}$ is independent and identically distributed. \\
For memoryless channels  with feedback, (\ref{cap_fb_c}) is often obtained by first applying  the converse to the coding theorem,  to show that feedback does not increase capacity \cite{cover-pombra1989}, which then implies
%provided it is  shown 
%which then implies
\begin{align}
 {\bf P}_{A_i|A^{i-1}, B^{i-1}}(da_i|a^{i-1}, b^{i-1})={\bf P}_{A_i}(da_i),        \hso  i=0, 1, \ldots, n \label{CI_DMC}
\end{align} 
   and $C$ is obtained if  $\{A_i:i =0, 1, \ldots, \}$ is identically distributed. That is, since feedback does not increase capacity, then mutual information and directed information are identical, in view of (\ref{CI_DMC}). However, as pointed out elegantly by Massey \cite{massey1990}, for channels with feedback it will be a mistake to use the same arguments as  in (\ref{cap_nf_c}). 
    The  conditional independence conditions imply  that the {\it Information Structure} of the maximizing channel input distributions is  the {\it Null Set}. \\  
In Section~\ref{randomized}, we develop a  methodology for directed information, which in principle, repeats the above steps,  to show that for many classes of channel distribution with memory subject to transmission cost constraints, the optimal channel input distributions occur in subsets, characterized by conditional independence. However,  each of the steps is  more involved due to the memory of the channels, and hence new tools are introduced  to established these achievable upper bounds.

     Cover and Pombra \cite{cover-pombra1989} (see also  \cite{ebert1970,ihara1993}) characterized the feedback capacity of  non-stationary  non-ergodic  Additive Gaussian Noise (AGN) channels  with memory,  
defined by 
\begin{align}
 B_i=A_i+V_i, \hst i=0,1, \ldots, n, \hst \frac{1}{n+1} \sum_{i=0}^{n+1} {\bf E} \Big\{ |A_i|^2\Big\} \leq \kappa, \hso \kappa \in [0,\infty) \label{c-p1989}    
\end{align}
where $\{V_i : i=0,1, \ldots, n\}$ is a real-valued jointly non-stationary Gaussian process $N(\mu_{V^n}, K_{V^n})$, under the  assumption that ``$A^n$ is causally related to $V^n$'' defined by\footnote{\cite{cover-pombra1989}, page 39, above Lemma~5.}  
\bea
 {\bf P}_{A^n, V^n}(da^n, dv^n)=\Big( \otimes_{i=0}^n  {\bf P}_{A_i|A^{i-1}, V^{i-1}}(da_i|a^{i-1}, v^{i-1})\Big) \otimes   {\bf P}_{V^n}(dv^n). \label{CR_1}
\eea 
In  \cite{cover-pombra1989},  the authors characterized feedback  capacity, via the maximization of mutual information between uniformly distributed messages and the channel output process, denoted by $I(W, B^n)$, and obtained the following  characterization \footnote{The methodology in \cite{cover-pombra1989} utilizes the converse coding theorem to obtain an upper bound on the entropy $H(B^n)$, by restricting  $\{A_i: i=0, \ldots, n\}$ to a Gaussian process.}.
\begin{align}
 C_{W; B^n}^{FB, CP}(\kappa) \tri& \max_{ \Big\{\frac{1}{n+1} \sum_{i=0}^n {\bf E} |A_i|^2 \leq \: \kappa \Big\} } I(W, B^n)=    \max_{ \Big\{\frac{1}{n+1} \sum_{i=0}^n {\bf E} |A_i|^2 \leq \: \kappa: \hso A_i=\sum_{j=0}^{i-1} \overline{\gamma}_{i,j}V_{j} + \overline{Z}_i : \hso i=0, 1, \ldots, n \Big\} }  H(B^n) - H(V^n ) \label{cp1989_a} \\
 =& \sup_{\big(\overline{\Gamma}^n, K_{\overline{Z}^n}\big): \frac{1}{n+1}Tr\big(\overline{\Gamma}^n K_{{V}^n} \overline{\Gamma}^T+ K_{\overline{Z}^n}\big)\leq \kappa}  \frac{1}{2}\log \frac{\Big|\big(\overline{\Gamma}^n+I\big)K_{V^n}(\overline{\Gamma}^n+I)^T+K_{\overline{Z}^n}\Big|}{\Big|K_{V^n}\Big|}  \label{cp1989}
  \end{align}
 where $\overline{Z}^n \tri \{\overline{Z}_i: i=0, 1, \ldots,n\}$ is a Gaussian process $N(0, K_{{\overline{Z}}^n})$,  orthogonal to $V^{n}\tri \{V_i: i=0, \ldots, n\}$, and   $\{\overline{\gamma}_{i,j}: i, j=0, \ldots, n\}$ are deterministic functions, which constitute  the entries of the lower diagonal matrix $\overline{\Gamma}^n$.  The feedback capacity is shown to be $C_{W; B^\infty}^{FB, CP}(\kappa) \tri \lim_{n \longrightarrow \infty} \frac{1}{n+1} C_{W; B^n}^{FB, CP}(\kappa)$.
%  Alajaji  \cite{alajaji1995} investigated the finite alphabet  version of \cite{cover-pombra1989}, defined by $B_i=A_i \oplus Z_i, i=0, \ldots,n,$ with alphabet spaces spaces ${\mathbb A}_i={\mathbb Z}_i=\big\{0,1,\ldots, q\big\}, i=0, \ldots, n$, without transmission cost,  and showed that  feedback does not increase the generalized capacity of such nonstationary non-ergodic channels \cite{verdu-han1994}, and that the maximizing channel input distribution is uniform.  The methodology adopted in \cite{alajaji1995} appears difficult to extend to channels with different noise and channel input alphabet spaces, and or  channels with transmission cost constraints, because it  is based on a converse to the coding theorem from  \cite{verdu-han1994}, which replaces  the distribution of $A^n$ by a uniform distribution.
Based on the characterization derived in \cite{cover-pombra1989}, several investigations of versions of the  Cover and Pombra \cite{cover-pombra1989} AGN channel are found  in the literature,  such as, \cite{ihara1993,yang-kavcic-tatikonda2007,kim2010}. Specifically,  in \cite{kim2010}, the  stationary ergodic version of Cover and Pombra \cite{cover-pombra1989} AGN channel, is revisited by utilizing characterization (\ref{cp1989}) to derive expressions for feedback capacity, $C_{W; B^\infty}^{FB, CP}(\kappa)$, using frequency domain methods, when  the noise power spectral density corresponds to a stationary Gaussian autoregressive moving-average model with finite memory. 
% of order $K$, then the optimal channel input distribution is  also of order $K$. 
% The methodology adopted in \cite{kim2010} is based  on power spectral densities, and hence it does  not extend  beyond stationary AGN   channels.\\  
For finite alphabet channels with memory and feedback,  expressions of feedback capacity are derived for certain  channels with symmetry,  in  \cite{permuter-cuff-roy-weissman2010,elishco-permuter2014,permuter-asnani-weissman2013,kourtellaris-charalambous2015,kourtellaris-charalambous-boutros:2015}, while in   \cite{yang-kavcic-tatikonda2005} it is illustrated   that if the input to the channel and the channel state are related by  a one-to-one mapping,  and the channel distribution is  $\big\{{\bf P}_{B_i|A_{i}, A_{i-1}}: i=0, \ldots, n\big\}$,  then dynamic programming can be used, in such optimization problems. In \cite{tatikonda-mitter2009}, the general concepts of dynamic programming are related to the computation of feedback capacity for Markov Channels (Definition~6.1 in  \cite{tatikonda-mitter2009}). In   \cite{chen-berger2005} the unit memory channel output (UMCO) channel  $\big\{{\bf P}_{B_i|B_{i-1}, A_i}: i=0, \ldots, n\}$, is analyzed under the assumption that the optimal channel input distribution is $\big\{{\bf P}_{A_i|B_{i-1}}: i=0, \ldots, n\}$. 
%The authors in \cite{chen-berger2005} showed that the UMCO channel can be transformed to one with state information.\\
% and that under certain conditions on the channel and channel input distributions, dynamic programming can be used to compute feedback capacity. 
%Coding theorems for  channels with memory with and without feedback are developed in many papers and books   \cite{dobrushin1959,pinsker1964,gallager1968,blahut1987,cover-thomas2006,ihara1993,verdu-han1994,han2003,kramer2003,tatikonda-mitter2009,permuter-weissman-goldsmith2009,gamal-kim2011}.
%, in three direction,   specifically,  for  jointly stationary ergodic  processes, for information stable processes, and for arbitrary nonstationary and nonergodic processes. 

\subsection{Channel Models and Transmission Cost Functions: Motivation and Objectives} 
 In general, it is almost impossible,  to determine the information structures of optimal channel input distributions directly from  $C_{A^\infty\rar  B^\infty}^{FB}$ and $C_{A^\infty\rar  B^\infty}^{FB}(\kappa)$. 
Indeed,  in the related theory of infinite horizon   Markov Decision (MD),  the fundamental question, whether optimizing the expected value of a fixed  pay-off functional over all non-Markov strategies occurs in the subclass of Markov strategies, is addressed from its finite horizon version. Then by using the Markovian property of strategies, the   infinite horizon or per unit time limit (i.e., asymptotic limit) over Markov strategies is analyzed
\cite{kumar-varayia1986}.  \\
%For specific MD models the analysis of the asymptotic limit, reveals several hidden properties of the role of optimal strategies to affect the controlled process.\\
However, classical stochastic optimal control or MD theory, is not directly applicable to  extremum problems of directed information, such as, (\ref{intro_fbc1a}), because the pay-off functional  is the directed information density,
\begin{align}
\iota_{A^n \rar B^n}(A^n, B^n) \tri \sum_{i=0}^n\frac{ d{\bf P}_{B_i|B^{i-1}, A^i}(\cdot|B^{i-1}, A^i)}{d{\bf P}_{B_i|B^{i-1}}(\cdot|B^{i-1})}(B_i)
\end{align}
and this pay-off depends nonlinearly on the channel input conditional distribution $\{ {\bf P}_{A_i|A^{i-1}, B^{i-1}}: i=0, \ldots, n\}$ via the channel output conditional distribution $\{{\bf P}_{B_i|B^{i-1}}: i=0, \ldots, n\}$.  This means, for general  extremum problems of feedback capacity, the information structure of optimal channel input distribution needs to be identified, before any method can be applied to compute feedback capacity, such as, the identification of sufficient statistics and dynamic programming \cite{kumar-varayia1986,vanschuppen2010}.

In this paper, our main objective is to determine the information structures of optimal channel input distributions, by  characterizing  the subsets of channel input distributions ${\cal P}_{[0,n]}^{CI}\subseteq {\cal P}_{[0,n]}$ and  ${\cal P}_{[0,n]}^{CI}(\kappa)\subseteq {\cal P}_{[0,n]}(\kappa)$, which satisfy conditional independence, and give   tight upper bounds on directed information $I(A^n \rar B^n)$, which are achievable, called the  ``characterizations of Finite Transmission Feedback Information (FTFI) capacity''. \\
We derive  characterizations of FTFI capacity  for any  class of time-varying  channel distributions and transmission cost functions, of the following type.

{\bf Channel Distributions.}
\begin{align}
&\mbox{\bf Class A.} \hso {\bf P}_{B_i|B^{i-1}, A^{i}}(db_i|b^{i-1}, a^{i}) = {\bf P}_{B_i|B^{i-1}, A_{i-L}^i}(db_i|b^{i-1}, a_{i-L}^i), \hso  i=0, \ldots, n, \label{CD_C2} \\
&\mbox{\bf Class B.} \hso {\bf P}_{B_i|B^{i-1}, A^{i}}(db_i|b^{i-1}, a^{i}) 
   ={\bf P}_{B_i|B_{i-M}^{i-1}, A_i}(db_i|b_{i-M}^{i-1}, a^i),\hso
    i=0, \ldots,n, \label{CD_C4a}\\
 & \mbox{\bf Class C.} \hso {\bf P}_{B_i|B^{i-1}, A^{i}}(db_i|b^{i-1}, a^{i}) 
  ={\bf P}_{B_i|B_{i-M}^{i-1}, A_{i-L}^i}(db_i|b_{i-M}^{i-1}, a_{i-L}^i),
   \hso  i=0, \ldots,n. \label{CD_C5}
\end{align}
{\bf Transmission Cost Functions.}
\begin{align}   
&\mbox{\bf Class A.}  \hso \gamma_i(T^ia^n, T^ib^{n}) ={\gamma}_i^{A.N}(a_{i-N}^i, b^{i}), \hst i=0, \ldots, n, \label{TC_1} \\
&\mbox{\bf Class B.}  \hso \gamma_i(T^ia^n, T^ib^{n}) = {\gamma}_i^{B.K}(a^i, b_{i-K}^{i}), \hst i=0, \ldots, n, \label{TC_2} \\
&\mbox{\bf Class C.}  \hso \gamma_i(T^ia^n, T^ib^{n}) = {\gamma}_i^{C.N,K}(a_{i-N}^i, b_{i-K}^{i}), \hst i=0, \ldots. n, \label{TC_3}
\end{align}
Here, $ \{K, L, M, N\}$  are nonnegative finite integers and we use the following convention. 
\begin{align}
\mbox{If $M=0$ then}\hso   {\bf P}_{B_i|B_{i-M}^{i-1}, \overline{A}^i}(db_i|b_{i-M}^{i-1}, \overline{a}^i)|_{M=0} = {\bf P}_{B_i| \overline{A}^i}(db_i| \overline{a}^i), \hso \mbox{for any} \hso \overline{A}^i \in \{A^i, A_{i-L}^i\}, \hso i=0,1, \ldots, n. \nonumber
%&\mbox{If $K=0$ then} \hso  {\gamma}_i^{C.N,K}(a_{i-N}^i, b_{i-K}^{i-1})\Big|_{K=0} \equiv {\gamma}_i^{C.N, 0}(a_{i-N}^i), \hso i=0, \ldots, n. \nonumber 
\end{align}
For $M=L=0$,  the channel  is memoryless.  By invoking function restriction, if necessary, the above transmission cost functions  include, as degenerate cases, many others, such as, $\gamma_i(\cdot, T^ib^{n}) = {\gamma}_i(\cdot, b_{i-K}^{i-1}), i=0, \ldots, n$. In this paper we do not treat the case $L=N=0$,  because these are investigated in \cite{kourtellaris-charalambousIT2015_Part_1}.  However, we provide discussions on the fundamental differences of the information structures of optimal channel input distributions, when the channels and transmission cost functions depend on past channel inputs, compared to $L=N=0$.\\ 
Channel distributions of  Class A, B or C, i.e.,  (\ref{CD_C2})-(\ref{CD_C5}), are induced  by various nonlinear channel models (NCM) driven by  noise processes \cite{caines1988}. 
%However, without proper re-formulation, these include NCM driven by independent noise processes.   
% and include  nonlinear and linear time-varying  Autoregressive models,  nonlinear and linear  channel models expressed in state space form .

We also derive characterizations of FTFI capacity  for any channel distribution induced by recursive Nonlinear Channel Models (NCM) driven by arbitrary distributed noise process $\{V_i: i=0, \ldots, n\}$ with memory and arbitrary alphabet spaces $\{{\mb V}_i: i=0, \ldots, n\}$, of the following type.

{\bf Nonlinear Channel Models with Correlated Noise.} 
\begin{align}
& B_i   = h_i^D(\overline{B}^{i-1}, \overline{A}^i, V_i) \hst   \mbox{for any} \hso \overline{A}^i \in \{A^i, A_{i-L}^i\}, \hso \overline{B}^{i-1} \in \{B^{i-1}, B_{i-M}^{i-1}\},   \hso i=0, \ldots, n,  \label{NCM-C_D_C_IN} \\
 &\frac{1}{n+1}{\bf E}_\nu \Big\{ \sum_{i=0}^n  \gamma_i(T^iA^n, T^iB^{n}) \Big\}  \leq \kappa, \label{NCM-A.D-IC_C_IN}  \\ %hst I \tri \max\{L, N\}, \: J\tri \max\{M, L, N, K\}, 
&{\bf P}_{V_i|V^{i-1}, A^i}(dv_i|v^{i-1}, a^i)=  {\bf P}_{V_i|\overline{V}^{i-1}}(dv_i|\overline{v}^{i-1}), \hso \overline{v}^{i-1} \in \big\{v_{i-T}^{i-1}, v^{i-1}\big\},\hso i=0, \ldots, n \label{CI_M_IN}
\end{align} 
where $\{h_i^D(\cdot, \cdot, \cdot): i=0, \ldots, n\}$ are nonlinear mappings and  $B_{-M}^{-1}=b_{-M}^{-1}, A_{-L}^{-1}=a_{-L}^{-1}$ are the initial data.\\
% and  the noise process distribution   satisfies conditional independence 
%\bea
%{\bf P}_{V_i|V^{i-1}, A^i}(dv_i|v^{i-1}, a^i)=  {\bf P}_{V_i|\overline{V}^{i-1}}(dv_i|\overline{v}^{i-1}), \hso \overline{v}^{i-1} \in \big\{v_{i-T}^{i-1}, v^{i-1}\big\},\hso i=0, \ldots, n. \label{CI_M_IN}
%\eea
Specifically, we show that we can apply the main theorems of the characterizations of FTFI capacity for Class A, B, C channels and tranmsission cost functions, with slight modification, to derive the characterizations of FTFI capacity for NCMs with correlated noise.  

The channel distributions of Class A, B, C and the NCMs include  nonlinear and  linear time-varying  autoregressive models and linear  channel models expressed in state space form \cite{caines1988}. Our main theorems generalize  many existing results  found in the literature, for example, non-stationary and non-ergodic  Additive Gaussian Noise channels investigated by  Cover and Pombra \cite{cover-pombra1989} and  stationary deterministic channels  \cite{kim2008},  and finite alphabet channels with channel state information investigated in \cite{chen-berger2005,yang-kavcic-tatikonda2005,permuter-cuff-roy-weissman2010,permuter-asnani-weissman2013,elishco-permuter2014,kourtellaris-charalambous2015,kourtellaris-charalambous-boutros:2015}. 
However, the derivations of characterizations of FTFI capacity and realizations of optimal channel input distributions by random processes are fundamentally different from any of the above references.

\subsection{Methodology \& Main Results}
\label{meth}
The methodology we apply to derive the information structures of optimal channel input distributions and the corresponding characterizations of FTFI capacity, combines  stochastic optimal control theory \cite{hernandezlerma-lasserre1996} and variational equalities of directed information \cite{charalambous-stavrou2013aa}.  
This   method is applied in \cite{kourtellaris-charalambousIT2015_Part_1} to derive  characterizations of FTFI capacity for channel distributions of Class A and C, with $L=0$, with and without transmission cost functions of Class A or C, with $N=0$.  \\
In this paper, we apply the method with some variations,  to any combination of channel distributions and transmission cost functions of class A, B, C, and to NCMs with correlated noise, as follows. \\

\noi{\bf Class A, B, C Channel Distributions and Transmission Cost Functions.}\\
First, we identify the connection between stochastic optimal control theory and  extremum problems    $C_{A^n \rar B^n}^{FB}, C_{A^n \rar B^n}^{FB}(\kappa)$ (see also Figure~\ref{figurefig3}), as follows.

\begin{description}
\item[(i)] The information measure $I(A^n \rar B^n)$ is the pay-off; 

\item[(ii)]  the channel output process $\{B_i: i=0, 1, \ldots, n\}$ is the controlled process;

\item[(iii)] the channel input process $\{A_i: i=0,1, \ldots, n\}$ is the control process;

\item[(iv)]  the channel output process $\{B_i: i=0, 1, \ldots, n\}$   is controlled, by controlling the conditional  probability distribution  $\big\{ {\bf P}_{B_i|B^{i-1}}: i=0, \ldots, n\big\}$, via the choice of the   transition probability distribution $\big\{ {\bf P}_{A_i| A^{i-1}, B^{i-1}}: i=0, \ldots, n\big\}\in {\cal P}_{[0,n]}$ or ${\cal P}_{[0,n]}(\kappa)$ called the control object.
\end{description}
Second, we identify  variational equalities of directed information, which can be used to determine achievable upper bounds on directed information over  subsets of channel input conditional distributions, ${\cal P}_{[0,n]}^{CI}(\kappa)  \subseteq {\cal P}_{[0,n]}(\kappa)$, characterized by conditional independence.   
\begin{figure}
  \centering
    \includegraphics[width=0.75\textwidth]{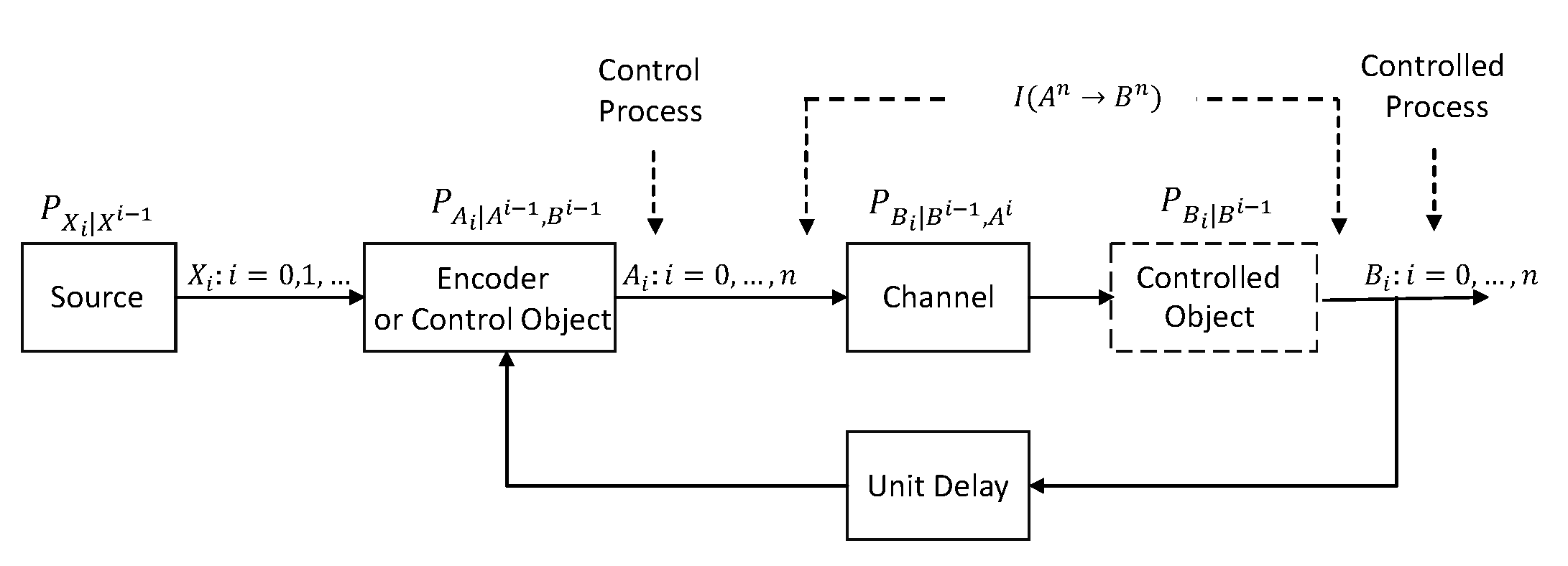}
    \label{figurefig3}
      \caption{Communication block diagram and its analogy to stochastic optimal control.}
\end{figure}
%----------------------------------------------------------
%-----------     E N D    F I G U R E ---------------------
%----------------------------------------------------------
%We apply the  analogy to stochastic optimal control theory and the variational equalities of directed information, to 
We show that for  any combination of channel distributions and transmission cost functions  of class A, B, or C, the  maximization of  $I(A^n \rar B^n)$ over ${\cal P}_{[0,n]}(\kappa)$, occurs in a subset  ${\cal P}_{[0,n]}^{CI}(\kappa)$,  which  satisfy  conditional independence, as follows.  
\begin{align}
&{\cal P}_{[0,n]}^{CI}(\kappa)  \tri \Big\{  {\bf P}_{A_i|A^{i-1}, B^{i-1}}(da_i|a^{i-1}, b^{i-1})={\pi }_i(da_i| {\cal  I}_i^{\bf P}) \equiv     {\mb P}\big\{A_i \in da_i| {\cal  I}_i^{\bf P}\big\}: i=0, \ldots, n\Big\} \bigcap {\cal P}_{[0,n]}(\kappa), \label{IS_int_1}\\
& {\cal  I}_i^{\bf P} \subseteq \big\{a^{i-1}, b^{i-1}\big\}, \hso i=0, \ldots, n, \\
&{\cal  I}_i^{\bf P} \tri \:  \mbox{Information Structure of optimal channel input distributions for $i=0,\ldots,n$}. \label{IS_int_3}
\end{align}
Further, we show that the information structure ${\cal I}_i^{{\bf P}}, i=0,1, \ldots, n$,  is specified by the memory of the channel conditional distribution, and the dependence of the transmission cost function on the channel input and output symbols. This procedure allows us to determine the dependence, of the joint distribution of $\{(A_i, B_i): i=0, \ldots, n\}$, and  the conditional  distribution $\{ {\bf P}_{B_i|B^{i-1}}: i=0, \ldots, n\}$  on  the control object, $\big\{{\pi }_i(da_i| {\cal  I}_i^{\bf P}): i=0, \ldots, \big\}$, and to determine the   characterizations of FTFI capacity. 

\noi {\bf NCMs with Correlated Noise.} \\
For any NCM defined by (\ref{NCM-C_D_C_IN})-(\ref{CI_M_IN}), with limited memory, i.e., $\big\{\overline{A}^i, \overline{B}^{i-1}, \overline{V}^i\big\}=\big\{A_{i-L}^i, B_{i-M}^{i-1}, V_{i-T}^i\big\}, \gamma_i(\cdot, \cdot)=\gamma_i^{C.L, M}(\cdot, \cdot), i=0, \ldots, n$, and under the assumption that the functions  mappings for fixed $(b_{i-M}^{i-1}, a_{i-L}^i)$ defined by 
\bea
h_i^D(b_{i-M}^{i-1}, a_{i-L}^i, \cdot): {\mathbb V}_i \longmapsto h_i^D(b_{i-M}^{i-1}, a_{i-L}^i, v_i), \hso i=0, \ldots, n \label{IV_IN}
\eea
 are invertible and measurable, with inverse $\overline{h}_i^D(b_i,b_{i-M}^{i-1}, a_{i-L}^i), i=0, \ldots, n$,  we first apply the converse to the coding theorem to derive the tight upper bound 
\begin{align} 
 R \leq   \liminf_{n \longrightarrow \infty}\frac{1}{n+1} {C}_{W ; B^n}^{FB,D}(\kappa) 
\end{align}
where 
\begin{align}
&{C}_{W ; B^n}^{FB,D}(\kappa)\tri  \sup_{{\cal P}_{[0,n]}^{D}(\kappa)}\sum_{i=0}^n  I(A_{i-L}^i, V_{i-T}^{i-1}; B_i|B^{i-1}), \label{CCIS_8_BC_D_n_INTRO}\\
&{\cal P}_{[0,n]}^{D}(\kappa) \tri \Big\{{\bf P}_{A_i| A^{i-1}, V^{i-1},  B^{i-1}}, i=0,\ldots,n:  \frac{1}{n+1} {\bf E}_\nu\Big( \sum_{i=0}^n \gamma_i^{C.L, M}(A_{i-L}^i, B_{i-M}^i) \leq \kappa  \Big)  \Big\}. 
\end{align} 
That is, ${C}_{W ; B^n}^{FB,D}(\kappa)$ is the analog of ${C}_{A^n \rar  B^n}^{FB}(\kappa)$. Then we show that  the methodology described above for Class A, B, C channels and transmission costs, with slight variations, is directly applicable, and we derive characterizations of the FTFI capacity, by  showing that the maximization in (\ref{CCIS_8_BC_D_n_INTRO}), occurs in subsets of  ${\cal P}_{[0,n]}^{D}(\kappa)$, which satisfy conditional independence.  

We emphasize that our objective and methodology  descibed above,  are fundamentally different from  any derivations given in  the literature, such as, 
%the  derivation proposed  for Lemma~2 in \cite{berger_shannon_lecture} corresponding to the unit memory channel output channel, 
  Theorem~1 in   \cite{yang-kavcic-tatikonda2005},   Theorem~1   in   \cite{yang-kavcic-tatikonda2007},    Theorem~1 in  \cite{permuter-cuff-roy-weissman2010},  and further adopted in subsequent work in   \cite{elishco-permuter2014,permuter-asnani-weissman2013}. Specifically, we show that the supremum of directed information over all channel input conditional distributions occurs in a smaller set, satisfying a conditional independence condition, which is  analogous to  (\ref{IS_int_1}). This is different from the derivations given  in  \cite{yang-kavcic-tatikonda2005,yang-kavcic-tatikonda2007,permuter-cuff-roy-weissman2010}. This point is further elaborated in  Section~\ref{randomized}.

In Section~\ref{VE_SE}, we introduce the notation and the variational equalities of directed information.\\
In Section~\ref{randomized}, we derive the information structures of optimal channel input distributions for any combination of channel distributions and transmission cost functions of Class A, B or C. \\
In section~\ref{exa_gen},  we consider the  application example of general Multiple-Input Multiple-Output (MIMO) Gaussian channels with memory on past channel input and output symbols, and quadratic cost constraint, i.e., class C, with $L=M=1$.  We show that the optimal channel input distribution corresponding to the characterization of the  FTFI capacity exhibits a  separation principle. We show this separation principle by using the orthogonal decomposition of  realizations of optimal channel input distributions.   \\
Via the separation principle, we derive an expression for the optimal channel input distribution, and we  relate the characterization of FTFI capacity to the so-called Linear-Quadratic-Gaussian partially observable stochastic optimal control problem \cite{kumar-varayia1986}.\\
In Section~\ref{DTRM}, we first derive a converse to the coding theorem for NCMs defined by  (\ref{NCM-C_D_C_IN})-(\ref{CI_M_IN}) and (\ref{IV_IN}) and then  we derive analogous information structures of optimal channel input distributions and corresponding characterizations of FTFI capacity. \\
Throughout the paper we relate the characterizations of FTFI capacity of various channels and the realizations of optimal channel input distributions 
to existing results given in the literature.

\section{Extremum problems of Directed Information and Variational Equalities}
\label{VE_SE}
In this section, we introduce  the basic notation,  the  precise definition of extremum problem of FTFI capacity (\ref{cap_fb_1}), the variational equalities of directed information \cite{charalambous-stavrou2012}, and some of their properties.

% and a basic lemma derived in \cite{gihman-skorohod1979} is introduced, which gives conditions for any conditional distribution to be realized  by deterministic measuurable functions driven by uniform random variables.

Throughout the paper we use the following notation.
\begin{align}
& {\mathbb R}: \hso  \mbox{set of  real numbers};  \nonumber \\
& {\mathbb Z}: \hso  \mbox{set of  integer};  \nonumber \\
& {\mathbb N}_0: \hso  \mbox{set of nonnegative integers} \hso \{0, 1,2,\dots\}; \nonumber \\
%& {\mathbb N}^n: \hso  \mbox{set of first $n+1$ %natural numbers} \hso \{0, 1,2,\dots, n\}; \nonumber \\
& {\mathbb R}^n: \hso  \mbox{set of  $n$ tuples of real  natural}; \nonumber \\
& {\mb S}_+^{p \times p}: \hso \mbox{set of symmetric positive semi-define}\ p\times p  \hso \mbox{matrices}\ A \in {\mathbb R}^{p \times p}; \nonumber   \\
&{\mb S}_{++}^{p \times p}: \hso \mbox{subset of positive definite matrices of the set}\hso  {\mb S}_+^{p \times p};  \nonumber \\
&\langle \cdot, \cdot \rangle: \hso \mbox{inner product of elements of vectors spaces;} \nonumber  \\
&(\Omega, {\cal F}, {\mathbb P}): \mbox{probability space, where ${\cal F}$ is the $\sigma-$algebra generated by subsets of $\Omega$}; \nonumber \\ 
& {\cal  B}({\mathbb  W}): \hso \mbox{Borel $\sigma-$algebra of a given topological space  ${\mathbb W}$};  \nonumber \\
&{\cal M}({\mathbb W}): \hso \mbox{set of all probability measures on ${\cal  B}({\mathbb W})$ of a Borel space ${\mathbb W}$}; \nonumber\\
&{\cal K}({\mathbb V}|{\mathbb W}): \hso \mbox{set of all stochastic kernels on  $({\mathbb V}, {\cal  B}({\mathbb V}))$ given $({\mathbb W}, {\cal  B}({\mathbb W}))$ of Borel spaces ${\mathbb W}, {\mathbb V}$}; \nonumber
%\\
%&X \leftrightarrow Y \leftrightarrow Z: \hso \mbox{Conditional independence of RVs $(X, Z)$ given RV $Y$}. \nonumber
\end{align}
All spaces (unless stated otherwise) are complete separable metric spaces also called  Polish spaces, i.e., Borel spaces. This generalization is adopted   to treat simultaneously discrete, finite alphabet,  real-valued ${\mathbb R}^k$ or complex-valued ${\mathbb C}^k$ random processes for any positive integer $k$, and general ${\mathbb R}^k-$valued  random processes with absolute summable $p$-moments, 
%characterized by $\ell^p({\mathbb N} \times \Omega, {\cal F}, {\mathbb P}; {\mathbb R}^n)$-spaces, 
$p =1, 2, \ldots$, (see \cite{dunford1988}) etc. 

%\subsection{Basic Notions of Probability}
%\label{subsec-prob}
Given two measurable spaces $({\mb X}, {\cal  B}({\mb X}))$,  $({\mb Y}, {\cal  B}({\mb Y}))$ then ${\mb X} \times {\mb Y} \tri \{(x,y):  x\in {\mb X}, y \in {\mb Y}\}$ is the cartesian product of ${\mb X}$ and ${\mb Y}$, and for $A \in  {\cal  B}({\mb X})$ and $B \in {\cal  B}({\mb Y})$ then $A \times B$ is called a measurable rectangle. The product measurable space of $({\mb X}, {\cal  B}({\mb X}))$ and $({\mb Y}, {\cal  B}({\mb Y}))$ is denoted by $({\mb X} \times {\mb Y}, {\cal  B}({\mb X})\otimes  {\cal  B}({\mb Y}))$, where  ${\cal  B}({\mb X})\otimes   {\cal  B}({\mb Y})$ is the product $\sigma-$algebra generated by $\{A \times B:  A \in {\cal  B}({\mb X}), B\in  {\cal  B}({\mb Y})\}$.  \\
A Random Variable (RV)  defined on a probability space $(\Omega, {\cal F}, {\mathbb P})$ by the mapping $X: (\Omega, {\cal F}) \longmapsto ({\mb X}, {\cal  B}({\mb X}))$  induces a probability distribution $ {\bf P}(\cdot) \equiv {\bf P}_X(\cdot)$ on  $({\mb X}, {\cal  B}({\mb X}))$ as follows\footnote{The subscript $X$ is often omitted.}.
\begin{align}
{\bf P}(A) \equiv  {\bf P}_X(A)  \tri {\mathbb P}\big\{ \omega \in \Omega: X(\omega)  \in A\big\},  \hso  \forall A \in {\cal  B}({\mb X}).
 \end{align}
A RV is called discrete if there exists a countable set ${\cal S}_X\tri \{x_i: i \in {\mathbb N}_0\}$ such that $\sum_{x_i \in {\cal S}_X} {\mathbb  P} \{ \omega \in \Omega : X(\omega)=x_i\}=1$. The probability distribution ${\bf P}_X(\cdot)$  is then concentrated on  points in ${\cal S}_X$, and it is defind by 
\bea
 {\bf P}_X(A)  \tri \sum_{x_i \in {\cal S}_X \bigcap A} {\mathbb P} \big\{ \omega \in \Omega : X(\omega)=x_i\big\}, \hso \forall A \in {\cal  B}({\mb X}). 
\eea 
%{\bf put material on integral and conditional distribution, like K-V page 14-18}
If the cardinality of ${\cal S}_X$ is finite then the RV is finite-vaued  and it is called a finite alphabet RV. \\
Given another RV $Y: (\Omega, {\cal F}) \longmapsto ({\mb Y}, {\cal  B}({\mb Y}))$,   ${\bf P}_{Y|X}(dy| X)(\omega)$ is called the conditional distribution of RV $Y$ given RV $X$. The conditional distribution of RV $Y$ given $X=x$ is denoted by ${\bf P}_{Y|X}(dy| X=x)  \equiv {\bf P}_{Y|X}(dy|x)$. Such conditional distributions are  equivalently  described   by  stochastic kernels or transition functions ${\bf K}(\cdot|\cdot)$ on $ {\cal  B}({\mb Y}) \times {\mathbb X}$, mapping ${\mb X}$ into ${\cal M}({\mathbb Y})$ (the space of probability measures on $({\mathbb Y}, ({\cal B}({\mathbb Y}))$, i.e., $x \in {\mathbb X}\longmapsto {\bf K}(\cdot|x)\in{\cal M}({\mathbb Y})$,  such that for every $F \in {\cal  B}({\mb Y})$, the function ${\bf K}(F|\cdot)$ is ${\cal  B}({\mb X})$-measurable.\\
The family of such probability distributions on $({\mb Y}, {\cal B}({\mb Y}))$ parametrized by $x \in {\mb X}$, is defined by ${\cal K}({\mb Y}| {\mb X})\tri \Big\{{\bf K}(\cdot| x) \in {\cal M}({\mathbb Y}): \hso x \in {\mathbb X}\Big\}$.

 \subsection{FTFI Capacity and Variational Equalities}
\label{def-sub2}

The  communication block diagram is shown in  Figure~\ref{figurefig3}.
The channel input and  channel output alphabets are  sequences of Polish measurable spaces (complete separable metric spaces) $\{({\mb A}_i,{\cal  B}({\mb A }_i)):i\in\mathbb{Z}\}$ and  $\{({\mb  B}_i,{\cal  B}({\mb  B}_i)):i\in\mathbb{Z}\}$, respectively, and 
their history spaces  are the product spaces ${\mb A}^{\mathbb{Z}}\tri {{\times}_{i\in\mathbb{Z}}}{\mb A}_i,$ ${\mb  B}^{\mathbb{Z}}\tri {\times_{i\in\mathbb{Z}}}{\mb  B}_i$.  These spaces are endowed with their respective product topologies, and  ${\cal  B}({\Sigma}^{\mathbb{Z}})\tri \otimes_{i\in\mathbb{Z}}{\cal  B}({\Sigma }_i)$,  denotes the $\sigma-$algebra on ${\Sigma }^{\mathbb{Z}}$, where ${\Sigma}_i \in  \big\{{\mb A}_i, {\mb  B}_i\big\}$,  ${\Sigma}^{\mathbb{Z}} \in  \big\{{\mb A}^{\mathbb Z}, {\mb  B}^{\mathbb Z}\big\}$,  generated by cylinder sets. Points in ${\Sigma }_k^m \tri \times_{j=k}^m {\Sigma}_j$ are denoted by $z_{k}^m \tri \{z_k, z_{k+1}, \ldots, z_m\} \in {\Sigma}_k^m$,   $(k, m)\in   {\mathbb Z} \times {\mathbb Z}$.
% We often restrict  ${\mb Z}$ to  ${\mb N}_0$. 

Next, we introduce the various distributions.

\noi{\bf Channel Distributions with Memory.}  A sequence of stochastic kernels or distributions defined by 
\begin{align}
{\cal C}_{[0,n]} \tri \Big\{Q_i(db_i|b^{i-1},a^{i})= {\bf P}_{B_i|B^{i-1}, A^i}  \in {\cal K}({\mb  B}_i| {\mb  B}^{i-1} \times {\mb A}^i) :  i=0,1, \ldots, n \Big\}. \label{channel1}
\end{align}
 At each time instant $i$ the conditional distribution of the channel is affected causally by  past channel output symbols $b^{i-1} \in {\mb B}^{i-1}$ and current and past channel input symbols $a^{i} \in {\mb A}^i, i=0,1, \ldots, n$. 
% The distribution at time $t=0$ is either fixed or the conditioning information is fixed, depending to the convention used. 

\noi{\bf Channel Input Distributions with Feedback.}  A  sequence of stochastic kernels defined by 
\bea
{\cal P}_{[0,n]} \tri  \Big\{  P_i(da_i|a^{i-1},b^{i-1})={\bf P}_{A_i|A^{i-1}, B^{i-1}}  \in  {\cal K}({\mb A}_i| {\mb A}^{i-1} \times {\mb  B}^{i-1}):   i=0,1, \ldots, n \Big\}. \label{rancodedF}
\eea
At each time instant $i$ the conditional channel input distribution with feedback is affected causally by past  channel inputs and  output symbols  $\{a^{i-1}, b^{i-1}\} \in {\mb A}^{i-1} \times {\mb B}^{i-1}, i=0,1, \ldots, n$.

{\bf Admissible Histories.} For each $i=-1, 0, \ldots, n$, we introduce the space ${\mb G}^{i}$ of admissible histories of channel input and output symbols,   as follows. Define
\begin{IEEEeqnarray}{rCl}
 {\mb  G}^i\triangleq {\mb A}^{-1} \times  {\mb B}^{-1} \times  \mathbb{A}_0\times \mathbb{B}_0\times \hdots \times \mathbb{A}_{i-1}\times\mathbb{B}_{i-1}\times  \mathbb{A}_i\times {\mb B}_i,
 \;  i=0, \ldots,n,  \; {\mb G}^{-1}= {\mb A}^{-1} \times {\mb B}^{-1}.
\end{IEEEeqnarray}
A typical element of ${\mb G}^i$ is a sequence of the form $(a^{-1}, b^{-1},a_0,b_0,\hdots, a_{i},b_i)$.  We equip the space 
%${\mb G}^i$
  ${\mb G}^i$  with the natural $\sigma$-algebra 
%  ${\cal B}({\mb G}^i)$ and 
  ${\cal B}({\mb G}^i)$, for $i=-1,0,\ldots, n$. Hence, for each $i$, the information structure of the channel input distribution  is
\begin{align}
{\cal I}_i^P\tri \Big\{A^{-1}, B^{-1}, A_0, B_0, \ldots, A_{i-1}, B_{i-1}\Big\},  \; i=0,1, \ldots, n, \hst {\cal I}_0^P \tri  \big\{A^{-1}, B^{-1}\big\}.
\end{align}
This implies at time $i=0$, the initial distribution is $P_0(da_0|a^{-1}, b^{-1})=P_0(da_0| {\cal I}_0^P)=P_0(da_0|a^{-1}, b^{-1})$. However, we can modify  ${\cal I}_0^P$ to  consider an alternative convention such as ${\cal I}_0^P =\{\emptyset\}$ or ${\cal I}_0^P=\{b^{-1}\}$, etc..

{\bf  Joint and Marginal Distributions.}  Given any channel input distribution $\big\{{ P}_i(da_i|a^{i-1}, b^{i-1}): i=0,1, \ldots, n\big\} \in {\cal P}_{[0,n]}$,  the channel distribution $\big\{Q(db_i| b^{i-1}, a^{i-1}): i=0,1, \ldots, n\big\}\in {\cal C}_{[0,n]}$,  and the initial probability distribution ${\bf P}(da^{-1},db^{-1})\equiv \nu(da^{-1},db^{-1})\in {\cal M}({\mb G}^{-1})$, then we can  uniquely define the  induced joint distribution  ${\bf P}_\nu^{P}(da^n, db^n)$ on  the  canonical space $\Big({\mb  G}^n, {\cal  B}({\mb G}^n)\Big)$, and we can construct a probability space $\Big(\Omega, {\cal F}, {\mathbb P}\Big)$ carrying the sequence of RVs $\{(A_i, B_i): i=\ldots, -1, 0, \ldots, n\}$, as follows.
\begin{align}
{\bf P}_\nu^P(da^n, db^n)\equiv & {\bf P}^P_{\nu}( da^{-1}, db^{-1},da_0,db_0,\hdots,da_{n},db_n) \label{JOINT_1} \\
 =& \nu(da^{-1}, db^{-1})\otimes P_0(da_0|a^{-1},b^{-1})  \otimes Q_0(db_0|b^{-1},a^{-1},a_0) \otimes 
 P_1(da_1|a^{-1}, a_0, b^{-1},b_0) \nonumber \\
 &\otimes\hdots\otimes Q_{n-1}(db_{n-1}|b^{n-2},a^{n-1}) \otimes  P_{n}(da_{n}|b^{n-1},a^{n-1})\otimes Q_n(db_n|b^{n-1},a^{n})\label{upm1}\\
 \equiv & \nu(da^{-1}, db^{-1})   \otimes_{j=0}^n \Big(Q_j(db_j|b^{j-1}, a^j)\otimes P_j(da_j|a^{j-1}, b^{j-1})\Big) \label{CIS_2gg_new} 
\end{align}
such that for $j=0, \ldots, n$, 
\begin{align}
&{\mb P}\big\{(A^{-1}, B^{-1}) \in C\big\}= {\bf P}^P_{\nu}( C)=\nu(C), \hst   C\in{\cal B}({\mb G}^{-1})\\
&{\mb P}\big\{A_j \in D|A^{j-1}= a^{j-1}, B^{j-1}=b^{j-1}\} =  {\bf P}^P_{\nu}(D|a^{j-1}, b^{j-1})  =P_j(D|a^{j-1}, b^{j-1}), \hst  D\in{\cal B}({\mb A}_j)\\
&{\mb P}\big\{B_j \in E|B^{j-1}= b^{j-1}, A^{j}=a^{j}\} =  {\bf P}^P_{\nu}(E|b^{j-1}, a^{j})=Q_j(E|b^{j-1},a^j),  \hst  E\in {\cal B}({\mb B}_{j}).
\end{align}
%
%\begin{align}
% {\mathbb P}\big\{A^n \in d{a}^n, B^n \in d{b}^n\big\}  \tri\;  &
%{\bf P}^P(da^n, db^n), \hso n \in {\mathbb N} \nonumber \\
%=\; & \otimes_{j=0}^n \Big( { \bf P}(db_j|b^{j-1}, a^j) \otimes {\bf P}(da_j| a^{j-1}, b^{j-1})\Big) \label{CIS_2gde2new} \\
%=&\otimes_{j=0}^n \Big(Q_j(db_j|b^{j-1}, a^j)\otimes P_j(da_j|a^{j-1}, b^{j-1})\Big). \label{CIS_2gg_new} 
%\end{align}
Further, we define the joint distribution of $\big\{B^{-1}, B_0, \ldots, B_n\big\}$ and the conditional  probability  distribution of $B_i$ given $B^{i-1}$  by\footnote{Throughout the paper the superscript notation ${\bf P}^P(\cdot), \Pi_{0,n}^P(\cdot), etc., $ indicates the dependence of the distributions on the channel input conditional distribution.}
\begin{align}
{\mathbb  P}\big\{B^n \in db^n\big\} \tri \; & {\bf P}_\nu^{P}(db^n) =  \int_{{\mb A}^n}  {\bf P}_\nu^{ P}(da^n, db^n) ,   \hso  n \in {\mathbb N},  \label{CIS_3g}\\
\equiv \; & \nu(db^{-1})\otimes \overrightarrow{\Pi}_{0,n}^{P}(db^n), \hso  \overrightarrow{\Pi}_{0,n}^{P}(db^n)\tri  \otimes  \Pi_0^{ P}(db_0|b^{-1})\otimes \ldots \otimes \Pi_n^P(db_n|b^{n-1}) \label{MARGINAL} \\
\Pi_i^{ P}(db_i|b^{i-1})= \; &  \int_{{\mb A}^i} Q_i(db_i|b^{i-1}, a^i)\otimes P_i(da_i|a^{i-1}, b^{i-1}) \otimes {\bf P}^{P}(da^{i-1}|b^{i-1}), \hso  i=0, \ldots, n . \label{CIS_3a}
\end{align}
The above distributions are parametrized by the distribution ${\bf P}(da^{-1}, db^{-1})=\nu(da^{-1}, db^{-1})$ or ${\bf P}(db^{-1})=\nu(db^{-1})$.   We denote the expectation operator with respect to ${\bf P}_\nu^P(da^n, db^n)$ by ${\bf E}_\nu^P$. Moreover, if $\nu$ is concentrated at $(A^{-1},B^{-1})=(a^{-1},b^{-1})$ we write ${\bf P}_{a^{-1}, b^{-1}}^P$ and  ${\bf E}_{a^{-1}, b^{-1}}^P$; in this case, the above distributions are parametrized by $(A^{-1},B^{-1})=(a^{-1},b^{-1})$.
This notation is often omitted when it is clear from the context.

\noi{\bf Transmission Cost.}   The cost of transmitting and receiving  symbols  is a  measurable function $c_{0,n}:{\mb A}^n\times{\mb  B}^{n} \longmapsto [0,\infty)$. The average transmission cost  is defined by 
\begin{align}
\frac{1}{n+1} {\bf E}_\nu^P \Big\{ c_{0,n}(A^n, B^{n}) \Big\}\leq  \kappa, \; \hst  c_{0,n}(a^n, b^{n}) \tri \sum_{i=0}^n \gamma_i(T^ia^n, T^ib^{n}), \;      \kappa \in [0,\infty) \label{transmissioncost}
\end{align}
where the superscript notation ${\bf E}_\nu^{ P }\{\cdot\}$ denotes the dependence of the joint distribution on the choice of  conditional distribution $\{P_i(da_i|a^{i-1}, b^{i-1}) : i=0, \ldots, n\} \in {\cal P}_{[0,n]}$.
The set of  channel input distributions with feedback and transmission cost is defined by 
\begin{align}
{\cal P}_{[0,n]}(\kappa) \tri  \Big\{  P_i(da_i|a^{i-1}, b^{i-1}) \in {\cal M}({\mb A}_i),  i=0, \ldots, n: 
\frac{1}{n+1} {\bf E}_\nu^{ P} \Big( c_{0,n}(A^n, B^{n}) \Big)\leq  \kappa\Big\}  \subset {\cal P}_{[0,n]} . \label{rc1}
\end{align}

{\bf FTFI Capacity.} 
The pay-off or  directed  information  $I(A^n \rar B^n)$ is defined as follows. 
\begin{align}
I(A^n\rar B^n) \tri  &\sum_{i=0}^n {\bf E}_\nu^{{ P}} \Big\{  \log \Big( \frac{dQ_i(\cdot|B^{i-1},A^i) }{d\Pi_i^{{ P} }(\cdot|B^{i-1})}(B_i)\Big)\Big\}  \\
=& \sum_{i=0}^n \int_{{\mb G}^{i}  }^{}   \log \Big( \frac{ dQ_i(\cdot|b^{i-1}, a^i) }{d\Pi_i^{{ P}}(\cdot|b^{i-1})}(b_i)\Big) {\bf P}_\nu^{ P}( da^i, db^i)  
 \equiv   {\mathbb I}_{A^n\rightarrow B^n}^\nu({P}_i,{ Q}_i,  :~i=0,1,\ldots,n)\label{CIS_6} 
\end{align}
where the notation in the right hand side of (\ref{CIS_6}) illustrates that  $I(A^n \rar B^n)$ is  a functional of the two sequences of conditional  distributions, $\big\{{P}_i(da_i|a^{i-1}, b^{i-1}), { Q}_i(db_i|b^{i-1}, a^i): i=0,1, \ldots, n\big\}$ and the fixed distribution $\nu(\cdot)$.

Next, we introduce the definition of FTFI capacity $C_{A^n \rar B^n}^{FB}$,  for Class A, B, C channel distributions and transmission cost functions, using the above notation.  \\

\begin{definition} (Extremum problem with feedback)\\
\label{def-gsub2sc}
Given  any  channel  distribution from the class ${\cal C}_{[0,n]}$, and any initial distribution $(A^{-1},B^{-1}) \sim \nu(da^{-1}, db^{-1})\in {\cal M}({\mathbb A}^{-1} \times {\mathbb B}^{-1})$, 
find the {\it Information Structure } of the optimal channel input distribution $\big\{P_i(da_i|a^{i-1}, b^{i-1}): i=0, \ldots, n\big\}  \in  {\cal P}_{[0,n]} $ (assuming it exists) of the extremum problem defined by
\begin{align}
C_{A^n \rar B^n}^{FB} \tri \sup_{ \big\{P_i(da_i|a^{i-1}, b^{i-1}): i=0, \ldots,n \big\} \in  {\cal P}_{[0,n]} } I(A^n\rar B^n), \hst   I(A^n \rar B^n) =(\ref{CIS_6}). \label{prob2}
\end{align}
When an  transmission cost constraint is imposed the extremum problem is defined by 
 \begin{align}
 C_{A^n \rar B^n}^{FB}(\kappa) \tri \sup_{\big\{P_i(da_i|a^{i-1}, b^{i-1}): i=0,\ldots, n\big\} \in {\cal P}_{[0,n]}(\kappa) } I(A^n\rar B^n).  \label{prob2tc}
\end{align}
\end{definition}

Our first objective is to determine the information structures of optimal channel input distributions for any combination of channel distribution and transmission cost of class A, B, or C, as discussed by (\ref{IS_int_1})-(\ref{IS_int_3}).  Clearly, 
for each time $i$ the largest information structure of the channel input distributions of   problem $C_{A^n \rar B^n}^{FB}$ and  $C_{A^n \rar B^n}^{FB}(\kappa)$   is ${\cal I}_i^P \tri \{a^{i-1}, b^{i-1}\}, i=0,1, \ldots, n$.

{\bf Alternative Equivalent Representation of Directed Information.} 
Often, it is convenient to use  alternative equivalent representations of the sets ${\cal P}_{[0,n]}, {\cal C}_{[0,n]}$  and induced joint distribution, and marginal distribution, via the causally conditioned compound probability distributions, defined as follows. Introduce the distributions    $\overrightarrow{Q}_{0,n}(\cdot|a^{n})\in {\cal M}({\mb  B}_0^n)$ parametrized by $(a^n, b^{-1}) \in {\mb A}^n \times {\mb B}^{-1}$ and $\overleftarrow{P}_{0,n}(\cdot|b^{n-1})\in {\cal M}({\mb  A}_0^n)$ parametrized by $(a^{-1},b^{n-1}) \in {\mb A}^{-1} \times  {\mb B}^{n-1}$, and defined by   
\begin{align}
\overrightarrow{Q}_{0,n}(db^n|a^n)\tri \otimes_{i=0}^n Q_i(db_i|b^{i-1}, a^i), \hso \overleftarrow{P}_{0,n}(da^n|b^{n-1})\tri \otimes_{i=0}^n P_i(da_i|a^{i-1}, b^{i-1}).
\end{align}
For a fixed $(A^{-1}, B^{-1})=(a^{-1}, b^{-1})$ these compound distribution define uniquely  the following joint and marginal distributions.
\begin{align}
  {\bf P}^P_{a^{-1},b^{-1}}(da^n, db^n)= (\overleftarrow{P}_{0,n} \otimes \overrightarrow{Q}_{0,n})(da^n, db^{n}), \hso {\bf P}_{b^{-1}}^P(db^n) = \overrightarrow{\Pi}_{0,n}^{\overleftarrow{P}}(db^n)\tri \int_{{\mb A}_0^n} (\overleftarrow{P}_{0,n} \otimes \overrightarrow{Q}_{0,n})(da^n, db^{n}). \label{CIS_3a_new}
\end{align}
 It is shown in \cite{charalambous-stavrou2013aa}, that the set of distributions $\overrightarrow{Q}_{0,n}(\cdot|a^{n})\in {\cal M}({\mb  B}_0^n)$ and   $\overleftarrow{P}_{0,n}(\cdot|b^{n-1})\in {\cal M}({\mb  A}_0^n)$ are convex. Moreover,  given a fixed ${\bf P}(da^{-1},db^{-1})=\nu(da^{-1},db^{-1})$, directed  information  $I(A^n \rar B^n)$ is equivalently defined as follows. 
\begin{align}
I(A^n\rar B^n) \tri  &\sum_{i=0}^n {\bf E}_{\nu}^{{ P}} \Big\{  \log \Big( \frac{dQ_i(\cdot|B^{i-1},A^i) }{d\Pi_i^{{ P} }(\cdot|B^{i-1})}(B_i)\Big)\Big\}  \\
=&\int_{{\mb G}^{n}   }^{}   \log \Big( \frac{ d \overrightarrow{Q}_{0,n}(\cdot|a^i) }{d\overrightarrow{\Pi}_{0,n}^{ \overleftarrow{P}}(\cdot)}(b^n)\Big) (\overleftarrow{P}_{0,n} \otimes \overrightarrow{Q}_{0,n})(da^n, db^{n})\otimes \nu(da^{-1},db^{-1}) \\
 \equiv &  {\mathbb I}_{A^n\rightarrow B^n}^\nu(\overleftarrow{P}_{0,n}, \overrightarrow{ Q}_{0,n})\label{CIS_6_P}
\end{align}
where the notation in the right hand side of (\ref{CIS_6_P}) illustrates the   functional dependence  on $\{\overleftarrow{P}_{0,n}(da^n|b^{n-1})$, $\overrightarrow{ Q}_{0,n}(db^n|a^n)\}$ and  the fixed distribution $\nu(da^{-1},db^{-1})$. These are equivalent representations  \cite{charalambous-stavrou2013aa}.  \\
Further, it is shown in \cite{charalambous-stavrou2013aa}, that for a fixed $\nu(\cdot)$, the functional ${\mathbb I}_{A^n\rightarrow B^n}^\nu(\overleftarrow{P}_{0,n}, \overrightarrow{ Q}_{0,n})$ is convex in  $\overrightarrow{Q}_{0,n}(\cdot|a^{n})\in {\cal M}({\mb  B}_0^n)$ for a fixed   $\overleftarrow{P}_{0,n}(\cdot|b^{n-1})\in {\cal M}({\mb  A}_0^n)$ and concave in $\overleftarrow{P}_{0,n}(\cdot|b^{n-1})\in {\cal M}({\mb  A}_0^n)$ for a fixed $\overrightarrow{Q}_{0,n}(\cdot|a^{n})\in {\cal M}({\mb  B}_0^n)$. These convexity and concavity properties imply that any extremum problem of feedback capacity is a convex optimization problem over appropriate sets of channel input distributions.

\begin{comment}

In communication applications, if the joint process $\{(A_i, B_i): i=0, 1, \ldots, \}$ is stationary ergodic or directed information stability holds, then the per unit time limiting versions of Definition~\ref{def-gsub2sc},  that is, $\liminf_{n \longrightarrow \infty} \frac{1}{n+1}C_{A^n \rar B^n}^{FB}$, has an operational meaning in terms of the supremum of all achievable rates.  However,  in general,   the optimal channel input distribution of the per unit time limiting versions of Definition~\ref{def-gsub2sc} depends on the initial data $b^{-1}$, and the supremum of all achievable rates is given by .....For indecomposable channels .... 

\end{comment}

{\bf Variational Equalities of Directed Information.} Next, we introduce the two variational equalities of directed information, derived in \cite{charalambous-stavrou2013aa},  which we  employ in many of the derivations. \\

\begin{theorem}(Variational Equalities)\\
\label{thm-var} 
Given a channel input distribution $\big\{P_i(da_i|a^{i-1},b^{i-1}): i=0,1, \ldots, n\big\} \in {\cal P}_{[0,n]}$ and channel distribution  $\big\{Q_i(db_i|b^{i-1},a^i) : i=0,1, \ldots, n\big\} \in {\cal C}_{[0,n]} $,  define the  corresponding  joint and  marginal distributions  ${\bf P}_{a^{-1}, b^{-1}}^P(da^n,db^n)$ and $\big\{ {\Pi }_i^P(db_i|b^{i-1}): i=0, \ldots, n\big\}$  by (\ref{JOINT_1})-(\ref{CIS_3a}).

(a)  Let   $\big\{V_i(db_i|b^{i-1}) \in {\cal M}({\mb B}_i): i=0, \ldots, n\}$ be an arbitrary distribution.   
Then the following variational equality holds.
\begin{align}
I(A^n\rightarrow B^n) =\inf_{  \big\{ V_i(db_i| b^{i-1} )\in {\cal M}({\mb  B}_{i}): i=0,1, \ldots, n\big\}}\sum_{i=0}^n  \int_{{\mb G}^i}   \log \Big( \frac{dQ_i(\cdot|b^{i-1},a^i) }{ dV_i(\cdot|b^{i-1})}(b_i)\Big)   {\bf P}_\nu^P(da^i,db^i)     \label{BAM52a}
\end{align}
and the infimum in  (\ref{BAM52a}) is achieved at 
\bea
V_i(db_i|b^{i-1})= {\Pi}_i^P(db_i|b^{i-1}),\hso  i=0, \ldots, n \hso  \mbox{ given by (\ref{CIS_3a})}.
\eea
(b) Let $\big\{S_i(db_i|b^{i-1},a^{i-1}) \in {\cal M}({\mb B}_i) : i=0, \ldots, n\big\}$ and $\big\{R_i(da_i|a^{i-1},b^i) \in {\cal M}({\mb A}_i): i=0,1,\ldots,n\big\}$ be arbitrary distributions and define the joint distribution on ${\cal M}({\mb A_0^n }\times {\mb B}_0^n)$ by $\otimes_{i=0}^n\Big(S_i(db_i|b^{i-1},a^{i-1})\otimes {R}_i(da_i|a^{i-1},b^i)\Big)$. 
%For a fixed $\overleftarrow{P}_{0,n}(da^n|b^{n-1}) \in {\cal M}({\mb A}^n)$ and   $\overrightarrow{Q}_{0,n}(db^n|a^n) \in {\cal M}({\mb B}^n)$,  define the following functional.
Then the following variational equality holds. 
\begin{align}
I(A^n\rightarrow{B}^n) =&\sup_{\substack{\big\{S_i(db_i|b^{i-1},a^{i-1})\otimes{R}_i(da_i|a^{i-1},b^{i})\in{\cal M}({\mb A}_i\times{\mb B}_i): i=0,1,\ldots,n\big\}\\\big\{S_i(db_i|b^{i-1},a^{i-1})\in{\cal M}({\mb B}_i),~R_i(da_i|a^{i-1},b^{i})\in{\cal M}({\mb A}_i)\big\}}}  \sum^n_{i=0}\int_{ {\mb G}^i}\log\Bigg(\frac{d{R}_i(\cdot|a^{i-1},b^{i})}{dP_i(\cdot|a^{i-1},b^{i-1})}(a_i)\nonumber \\
&.\frac{dS_i(\cdot|b^{i-1},a^{i-1})}{d\Pi_{i}^P(\cdot|b^{i-1})}(b_i)\Bigg) 
 {\bf P}_\nu^P(da^i, db^i)   \label{equation15a}
\end{align}
and the supremum in (\ref{equation15a}) is achieved when the following identity holds. 
\begin{align}
\frac{dP_i(\cdot|a^{i-1},b^{i-1})}{d{R}_i(\cdot|a^{i-1},b^i)}(a_i).\frac{d{Q}_i(\cdot|b^{i-1},a^{i})}{dS_i(\cdot|b^{i-1},a^{i-1})}(b_i)=1-a.a.(a^n,b^n),~i=0,1,\ldots,n.\label{equation102}
\end{align}
Equivalently, the supremum in (\ref{equation15a}) is achieved at 
\bea
\otimes_{i=0}^n\Big(S_i(db_i|b^{i-1},a^{i-1})\otimes {R}_i(da_i|a^{i-1},b^i)\Big)= {\bf P}_{a^{-1}, b^{-1}}^P(da^n, db^n).
\eea
\end{theorem}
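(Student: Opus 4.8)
Both variational equalities will be obtained by the same device, and for both I will use only the representation of $I(A^n\rar B^n)$ through the induced output kernels $\{\Pi_i^P\}$, namely (\ref{CIS_6}) together with (\ref{CIS_3g})--(\ref{CIS_3a}) and (\ref{CIS_2gg_new}). For a given candidate auxiliary family I would subtract $I(A^n\rar B^n)$ from the proposed functional, show that the $i$-th stage of the resulting difference — after integrating out the current pair of symbols against the appropriate one-step conditional of ${\bf P}_\nu^P$ — collapses to a \emph{conditional relative entropy}, and then invoke non-negativity of relative entropy (equivalently the log-sum/Gibbs inequality), which supplies simultaneously the desired inequality and the characterization of the extremizer as the point at which that relative entropy vanishes ${\bf P}_\nu^P$-almost surely. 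Throughout, $\mathbb{D}(\mu\|\lambda)$ denotes relative entropy (information divergence) of $\mu$ with respect to $\lambda$.

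\textbf{Part (a).} Write $J(\{V_i\})$ for the right-hand side of (\ref{BAM52a}) without the infimum. Using (\ref{CIS_6}), $J(\{V_i\})-I(A^n\rar B^n)=\sum_{i=0}^n\int_{{\mb G}^i}\log\Big(\frac{d\Pi_i^P(\cdot|b^{i-1})}{dV_i(\cdot|b^{i-1})}(b_i)\Big)\,{\bf P}_\nu^P(da^i,db^i)$. The $i$-th integrand depends only on $b^i$, so I would marginalize ${\bf P}_\nu^P$ onto $b^i$ — which by (\ref{CIS_3g})--(\ref{MARGINAL}) is $\nu(db^{-1})\otimes\overrightarrow{\Pi}_{0,i}^P(db^i)$ — and then integrate $b_i$ against $\Pi_i^P(db_i|b^{i-1})$, which turns the $i$-th term into $\int_{{\mb B}^{i-1}}\mathbb{D}(\Pi_i^P(\cdot|b^{i-1})\|V_i(\cdot|b^{i-1}))\,{\bf P}_\nu^P(db^{i-1})\ge 0$. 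Each summand is non-negative and is governed only by $V_i$, so the infimum is attained termwise, precisely when $V_i(\cdot|b^{i-1})=\Pi_i^P(\cdot|b^{i-1})$ ${\bf P}_\nu^P$-a.s.; there every summand vanishes, hence $J=I(A^n\rar B^n)$ at that choice, which proves (\ref{BAM52a}).

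\textbf{Part (b).} Let $F(\{S_i,R_i\})$ denote the double sum in (\ref{equation15a}). Again by (\ref{CIS_6}), $I(A^n\rar B^n)-F(\{S_i,R_i\})=\sum_{i=0}^n\int_{{\mb G}^i}\log\Big(\frac{dQ_i(\cdot|b^{i-1},a^i)}{d\Pi_i^P(\cdot|b^{i-1})}(b_i)\cdot\frac{dP_i(\cdot|a^{i-1},b^{i-1})}{dR_i(\cdot|a^{i-1},b^i)}(a_i)\cdot\frac{d\Pi_i^P(\cdot|b^{i-1})}{dS_i(\cdot|b^{i-1},a^{i-1})}(b_i)\Big)\,{\bf P}_\nu^P(da^i,db^i)$. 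The two $\Pi_i^P$ factors cancel, leaving $\frac{dQ_i(\cdot|b^{i-1},a^i)}{dS_i(\cdot|b^{i-1},a^{i-1})}(b_i)\cdot\frac{dP_i(\cdot|a^{i-1},b^{i-1})}{dR_i(\cdot|a^{i-1},b^i)}(a_i)$, which the chain rule for Radon--Nikodym derivatives identifies as $\frac{d(P_i\otimes Q_i)}{d(S_i\otimes R_i)}(a_i,b_i)$, the density of the one-step joint transition kernel $(P_i\otimes Q_i)$ on ${\mb A}_i\times{\mb B}_i$ (first $P_i$, then $Q_i$) with respect to $(S_i\otimes R_i)$ (first $S_i$, then $R_i$), both conditioned on $(a^{i-1},b^{i-1})$. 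Since $(P_i\otimes Q_i)(\cdot,\cdot|a^{i-1},b^{i-1})$ is exactly the one-step conditional of ${\bf P}_\nu^P$ by (\ref{CIS_2gg_new}), integrating the current pair $(a_i,b_i)$ first makes the $i$-th term equal $\int_{{\mb G}^{i-1}}\mathbb{D}((P_i\otimes Q_i)(\cdot|a^{i-1},b^{i-1})\|(S_i\otimes R_i)(\cdot|a^{i-1},b^{i-1}))\,{\bf P}_\nu^P(da^{i-1},db^{i-1})\ge 0$. Hence $F\le I(A^n\rar B^n)$, with equality iff for every $i$ that density equals $1$ ${\bf P}_\nu^P$-a.s.; rewriting the density back as the product of the two conditional densities and setting it to $1$ is precisely the identity (\ref{equation102}), and it holds simultaneously for all $i$ iff $\otimes_{i=0}^n(S_i\otimes R_i)={\bf P}_{a^{-1},b^{-1}}^P$, which proves (\ref{equation15a}) together with its two equivalent optimality conditions.

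\textbf{Main obstacle.} The subtle part is not the algebra of relative entropies but the measure-theoretic bookkeeping behind the Radon--Nikodym manipulations: making rigorous the chain-rule/disintegration identity for densities of stochastic kernels composed in the two opposite orders (selecting jointly measurable versions of each density so that the products appearing in (\ref{equation15a}) and in the cancellation above are well defined), and disposing of the cases where the relevant absolute continuity fails — there the conditional relative entropy is $+\infty$, the functional is $\mp\infty$, and the inequalities are trivial. One must also justify interchanging the finite sum with the integral and the Fubini/tower-property marginalizations under the standing integrability assumptions, after which the argument reduces to the positivity of relative entropy exactly as sketched.
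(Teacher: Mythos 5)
Your argument is correct: both parts reduce, after the telescoping of Radon--Nikodym derivatives and the tower property, to the non-negativity of a conditional relative entropy, with the extremizer characterized by the vanishing of that divergence; this is the standard (and, as far as one can tell, the intended) proof of these identities. Note that the paper itself supplies no proof here — it defers entirely to Theorem~IV.1 of the cited reference — so your self-contained derivation, including the explicit identification of $\frac{dP_i}{dR_i}\cdot\frac{dQ_i}{dS_i}$ as the density of the one-step kernel $P_i\otimes Q_i$ with respect to $S_i\otimes R_i$ and the honest flagging of the absolute-continuity and measurable-version issues, is a faithful and complete reconstruction of what the citation is standing in for.
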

\begin{proof} These are derived in  \cite{charalambous-stavrou2013aa}, Theorem~IV.1.  
\end{proof} 

We shall use the  variation equality in (a) to  identify upper bounds on directed information, which  are  achievable over specific subsets of the set of distributions ${\cal P}_{[0,n]}$ and ${\cal P}_{[0,n]}(\kappa)$, which depend on the properties of the channel distribution and the transmission cost function. This procedure is applied recently in \cite{kourtellaris-charalambousIT2015_Part_1} to derive the information structures of optimal channel input distributions for channel distributions and transmission cost functions corresponding to $L=N=0$. 
 We apply the second variation equality to  identify lower bounds on directed information, which are achievable over specific subsets of the set of distributions ${\cal P}_{0,n]}$ and ${\cal P}_{[0,n]}(\kappa)$. The first variational equality encompasses as a special case, the maximum entropy properties of joint and conditional distributions, such as,  the maximizing entropy property of Gaussian distributions. 
% Specifically, if the corresponding densities of the distributions exist, then directed information is expressed as 
%\begin{align} 
% I(A^n \rar B^n)=& H(B^n)- \sum_{i=0}^n H(B_i|B^{i-1}, A^i) \\
% &= \inf_{ V_{B^n}(b^n)} \Big\{-\int_{{\mathbb B}^n}  
%\end{align} 
%  the decomposition if  
%

%In addition to the above variational equalities, we utilize the sufficient conditions for continuity of  the functional ${\mathbb I}_{A^n\rightarrow B^n}(\overleftarrow{P}_{0,n}, \overrightarrow{ Q}_{0,n})$ in 
%
%
%
% can also give sufficient conditions for the following  identities to hold.
%(iii) If the RVs are finite-valued (i.e., $\{ {\mb A}_i, {\mb B}_i: i=0, \ldots, n\}$ are finite alphabet spaces) then the following holds.
%\begin{align}
%&\sup_{ \big\{P_i(da_i|a^{i-1}, b^{i-1}): i=0, \ldots,n \big\} \in  {\cal P}_{[0,n]} } I(A^n\rar B^n) = \sup_{   \overleftarrow{P}_{0,n}(da^n|b^{n-1})  \in  {\cal M}({\mb A}^n)  } \inf_{ V_{0,n}(db^n) \in {\cal M}({\mb B}^n)}  {\mb I}_{0,n}(V_{0,n}; \overleftarrow{P}_{0,n}, \overrightarrow{Q}_{0,n})\\
%& =\inf_{ V_{0,n}(db^n) \in {\cal M}({\mb B}^n)} \sup_{   \overleftarrow{P}_{0,n}(da^n|b^{n-1})  \in  {\cal M}({\mb A}^n)  } {\mb I}_{0,n}(V_{0,n}; \overleftarrow{P}_{0,n}, \overrightarrow{Q}_{0,n})\\
%&= \inf_{  \big\{ V_i(db_i| b^{i-1} )\in {\cal M}({\mb  B}_{i}): i=0,1, \ldots, n\big\}}\sup_{ \big\{P_i(da_i|a^{i-1}, b^{i-1}): i=0, \ldots,n \big\} \in  {\cal P}_{[0,n]} }  \sum_{i=0}^n  \int_{{\mb A}^i\times {\mb  B}^i}   \log \Big( \frac{dQ_i(\cdot|b^{i-1},a^i) }{ dV_i(\cdot|b^{i-1})}(b_i)\Big)   {\bf P}^P(da^i,db^i).
%\end{align}
%

Often, we use the following  alternative version of the variational given in Theorem~\ref{thm-var}, (a).  \\
Given a channel input distribution $\big\{P_i(da_i|a^{i-1},b^{i-1}): i=0,1, \ldots, n\big\} \in {\cal P}_{[0,n]}$ and channel distribution  $\big\{Q_i(db_i|b^{i-1},a^i) : i=0,1, \ldots, n\big\} \in {\cal C}_{[0,n]} $,  define the  corresponding  joint and  marginal distributions  ${\bf P}_{a^{-1}, b^{-1}}^P(da^n,db^n)\equiv (\overleftarrow{P}_{0,n} \otimes \overrightarrow{Q}_{0,n})(da^n, db^{n}) \in {\cal M}({\mb A}_0^n \times {\mb  B}_0^n)$, $\overrightarrow{\Pi }_{0,n}^P(db^n) = \otimes_{i=0}^n{\Pi }_i^P(db_i|b^{i-1}) \equiv \overrightarrow{\Pi}_{0,n}^{\overleftarrow{P}}(db^n) \in {\cal M}({\mb  B}_0^n)$ by  (\ref{CIS_3a_new}). 

(a)  Let $\overrightarrow{V}_{0,n}(db^n) \tri   \otimes_{i=0}^n V_i(db_i|b^{i-1}) \in{\cal M}({\mb  B}_0^n)$ be any arbitrary distribution on ${\mathbb B}_0^n$, for a fixed $ B^{-1}=b^{-1}$,  which is uniquely defined by   $\big\{V_i(db_i|b^{i-1}) \in {\cal M}({\mb B}_i): i=0, \ldots, n\}$   and vice-versa. \\
For a fixed  $\nu(da^{-1}, db^{-1})\in {\cal M}({\mb A}^{-1} \times {\mb B}^{-1})$,  $\overleftarrow{P}_{0,n}(da^n|b^{n-1}) \in {\cal M}({\mb A}_0^n)$ and   $\overrightarrow{Q}_{0,n}(db^n|a^n) \in {\cal M}({\mb B}_0^n)$,  define the following functional. 
\begin{align}
&{\mb I}_{0,n}^\nu(\cdot, \overleftarrow{P}_{0,n}, \overrightarrow{Q}_{0,n})  : {\cal M}({\mb B}_0^n) \longmapsto \big\{{\mathbb R}, +\infty\},  \hso \overrightarrow{V}_{0,n}(db^n) \longmapsto  {\mb I}_{0,n}^\nu(\overrightarrow{V}_{0,n},\overleftarrow{P}_{0,n}, \overrightarrow{Q}_{0,n} ), \\
 &{\mb I}_{0,n}^\nu(\overrightarrow{V}_{0,n}, \overleftarrow{P}_{0,n}, \overrightarrow{Q}_{0,n}))\tri    \int_{{\mb G}^n}   \log \Big( \frac{d\overrightarrow{Q}_{0,n}(\cdot|a^n) }{ d\overrightarrow{V}_{0,n}(\cdot)}(b^n)\Big)     (\overleftarrow{P}_{0,n} \otimes \overrightarrow{Q}_{0,n})(da^n, db^{n}) \otimes \nu(da^{-1}, db^{-1}) \label{BAM52a_new}
\end{align}
 Then the following hold.\\
 (i) The functional ${\mb I}_{0,n}^\nu(\overrightarrow{V}_{0,n}, \overleftarrow{P}_{0,n}, \overrightarrow{Q}_{0,n})$ is convex in  $\overrightarrow{V}_{0,n}(db^n) \in{\cal M}({\mb  B}_0^n)$ for  fixed $\overleftarrow{P}_{0,n}(da^n|b^{n-1}) \in {\cal M}({\mb A}_0^n)$,    $\overrightarrow{Q}_{0,n}(db^n|a^n) \in {\cal M}({\mb B}_0^n)$, and $\nu(da^{-1}, db^{-1})\in {\cal M}({\mb A}^{-1} \times {\mb B}^{-1})$.\\
 (ii) The following variational equality holds.
\bea
I(A^n \rar B^n) =\inf_{ \overrightarrow{V}_{0,n}(db^n) \in {\cal M}({\mb B}_0^n)}  {\mb I}_{0,n}^\nu(\overrightarrow{V}_{0,n}, \overleftarrow{P}_{0,n}, \overrightarrow{Q}_{0,n}) \label{BAM52a_new}
 \eea
 and the infimum in (\ref{BAM52a_new}) is achieved at
$\overrightarrow{V}_{0,n}(db^n)= \overrightarrow{\Pi}_{0,n}^{\overleftarrow{P}}(db^n)$ given by (\ref{CIS_3a_new}). 

Variational equality given in Theorem~\ref{thm-var}, (a) is often appropriate when it is applied together with dynamic programming, while the alternative one is appropriate to understand the convexity properties of ${\mathbb I}_{0,n}^\nu(\cdot, \cdot, \cdot)$ as a functional of causally conditioned compound distributions.

\section{Characterization of FTFI   Capacity }
 \label{randomized}
 In this section, we derive  the information structures of optimal channel input distributions, as described in Section~\ref{meth}. Using the established  notation, the channel output process $\{B_i: i=0, 1, \ldots, n\}$   is controlled, by controlling   $\big\{ {\bf P}(db_i|b^{i-1})\equiv \Pi_i^{ P}(db_i| b^{i-1})\in {\cal M}({\mb  B}_i): i=0, \ldots, n\big\}$  via the choice of the control object $\big\{ P_i(da_i| a^{i-1}, b^{i-1}): i=0, \ldots, n\big\}\in {\cal P}_{[0,n]}$.\\
We derive the characterizations of FTFI capacity in the following sequence.  

{\it Step 1-Channel Distributions and Transmission Cost Functions of Class A or B with $L\neq 0, N\neq 0$.}  Given a channel distribution of Class A or B, and  transmission cost functions of Class A or B, where $\{L, N\}$ are finite and different than zero, we   show via stochastic optimal control and variational equality (\ref{BAM52a}), that at each time instant $i$, the optimal  channel input distribution lies in a subset $\overline{\cal P}_{[0,n]} \subseteq {\cal P}_{[0,n]}$, which satisfies conditional independence and it is  of finite memory with respect to past channel input symbols, for $i=0, \ldots, n$. 
This impies for each $i$,  the information structure of the optimal channel input distribution lies in a subset ${\cal I}_i^P \subseteq \{a^{i-1}, b^{i-1}\}, i=0, 1, \ldots, n$.

{\it Step 2-Channel Distributions and Transmission Cost Functions of Class C with $L\neq 0, N\neq 0$.} Given a channel distribution of Class C, and  transmission cost functions of Class C, since these are special cases of the ones in Step 1, then the optimal channel input distributions lie in a subset $\overline{\cal P}_{[0,n]} \subseteq {\cal P}_{[0,n]}$, which satisfy conditional independence. 

{\it Step 3-Channel Distributions and Transmission Cost Functions of Class C with $L=N=0$.} Given a channel distribution of Class C, and  transmission cost functions of Class C with $L=N=0$,  we can further apply stochastic optimal control and the variational equality (\ref{BAM52a}),   to the resulting optimization problem  of Step 1, to obtain an  upper bound, which is achievable over smaller subsets of  conditional distributions $\sr{\circ}{\cal P}_{[0,n]} \subset \overline{\cal P}_{[0,n]}$, which satisfy conditional independence and  have  finite memory with respect to channel output symbols. 
However, since this is already shown in \cite{kourtellaris-charalambousIT2015_Part_1}, we will concentrate on  the fundamental differences of the information structures between $L=N=0$ and $L\neq 0$ and/or  $N\neq 0$, i.e., corresponding to the channels and transmission cost functions considered in steps 1, 2.

Although, in  Step 1 we invoke  generalizations of  methods often applied  in stochastic optimal  control problems to show  that optimizing a   pay-off \cite{vanschuppen2010,kumar-varayia1986} over all non-Markov policies or strategies,  occurs in the smaller set of Markov policies, there  are certain issues which should be treated with caution, because  extremum problems of information theory are  distinct from  any of the common pay-offs of stochastic optimal control. We discuss some of the fundamental  differences below to clarify subsequent derivations of information structures of optimal channel input distributions. 

{\bf Stochastic optimal control Theory versus Extremum Problems of Information Theory.} 
In classical  stochastic optimal control theory \cite{hernandezlerma-lasserre1996}, we are often given a controlled process $\{X_i: i=0, \ldots, n\}$,  called the state process taking values in $\big\{{\mathbb X}_i: i=0, \ldots, n\big\}$,  affected by  a control process $\{U_i: i=0, \ldots,n \}$ taking values in $\big\{{\mathbb U}_i: i=0, \ldots, n\big\}$, and the corresponding  control object distribution ${\cal P}_{[0,n]}^{CO}\tri   \big\{{\bf P}_{U_i|U^{i-1}, X^{i}}: i=0, \ldots, n\big\}$ and a general non-Markov  controlled object distribution ${\cal C}_{[0,n]}^{CO}\tri \big\{{\bf P}_{X_i|X^{i-1}, U^{i-1}}: i=0, \ldots, n\big\}$. \\
However, often the controlled object distribution is Markov  conditional on the past control values, that is,  ${\bf P}_{X_i|X^{i-1}, U^{i-1}}={\bf P}_{X_i|X_{i-1}, U_{i-1}}-a.a. (x^{i-1}, u^{i-1}), i=0, \ldots, n$. Such Markov controlled objects are often  induced by  discrete recursions 
\bea
X_{i+1}=f_i(X_{i}, U_i, V_{i}), \hso X_0=x_0, \hso i=0, \ldots, n \label{SOC_3}
\eea
 where $\{V_i: i=0, \ldots, n\}$ is an independent noise process taking values in $\big\{{\mathbb V}_i: i=0, \ldots, n\big\}$, independent of the initial state $X_0$. Let us denote the set of such Markov distributions  or controlled objects by  ${\cal C}_{[0,n]}^{CO-M}\tri \big\{{\bf P}_{X_i|X_{i-1}, U_{i-1}}: i=0, \ldots, n\big\}$.\\
In stochastic optimal control theory,  we are also  given a sample pay-off function to grade the behaviour of each of the strategies, often of additive form, defined by
\bea 
l: {\mathbb X}^n \times {\mathbb U}^n \longmapsto (-\infty, \infty], \hst  l(x^n, u^n)\tri  \sum_{i=0}^n\ell_i(u_i,x_{i}) \label{SOC_4}
\eea 
  where the functions $\big\{\ell_i(\cdot, \cdot): i=0, \ldots, n\}$ are fixed and independent of the control object $\big\{{\bf P}_{U_i|U^{i-1}, X^{i}}: i=0, \ldots, n\big\}$.\\
The main problem of stochastic optimal control is the following. Given a Markov controlled object distribution from the set ${\cal C}_{[0,n]}^{CO-M}$, determine the optimal strategy among all non-Markov strategies in ${\cal P}_{[0,n]}^{CO}$, which impacts the minimum  average of the sample path pay-off,  i.e.,
\bea
J_{0,n}^F( {\bf P}_{U_i|U^{i-1}, X^{i}}^*, {\bf P}_{X_i|X_{i-1}, U_{i-1}}: i=0, \ldots, n) \tri  \inf_{{\cal P}_{0,n]}^{CO}} {\bf E}\Big\{  \sum_{i=0}^n\ell(U_i,X_{i})\Big\}. \label{SOC_1}
\eea
Hence, for any non-Markov strategy from the set  ${\cal P}_{[0,n]}^{CO}$, the functional $J_{0,n}^F( {\bf P}_{U_i|U^{i-1}, X^{i}}, {\bf P}_{X_i|X_{i-1}, U_{i-1}}: i=0, \ldots, n)$ depends on a fixed and given controlled object distribution $ {\bf P}_{X_i|X_{i-1}, U_{i-1}}, i=0, \ldots, n.$
Next, we discuss two features of stochastic optimal control which are distinct from  any extremum problem of directed information.

{\bf Feature 1.} The definition of stochastic optimal control formulation (\ref{SOC_1})  pre-supposes the following. \\
(i) The controlled object distribution  is Markov, i.e., ${\cal C}_{[0,n]}^{CO-M}\tri \big\{{\bf P}_{X_i|X_{i-1}, U_{i-1}}: i=0, \ldots, n\big\}$;\\
(ii) at each $i$, the sample path pay-off is single letter, i.e., $\ell_i(u_i, x_i)$ for $i=0, \ldots, n$.\\
If (i) and/or (ii) do not hold, then prior to arriving to the formulation (\ref{SOC_1}),  additional state variables are introduced,  which constitute the complete state process $\{X_i: i=0, \ldots, n\}$ so that (i) and (ii) hold. This may be due to  a  noise process  $\{V_i: i=0, \ldots, n\}$ which is correlated, a dependence of the  discrete recursion on past information, and a dependence of  the sample pay-off function $\ell_i(\cdot, \cdot)$  at each $i$ on additional variables than single letters  $(x_i,u_i)$, for $i=0, \ldots, n$, and converted into the formulation (\ref{SOC_1}), satisfying (i) and (ii), by  state augmentation, so that  the controlled object is Markov, and the sample path pay-off is single letter. The procedure  is given in \cite{bertsekas1995} for deterministic or non-randomized strategies, defined by 
\bea
{\cal E}_{[0,n]}^{CO} \tri \big\{e_i: {\mathbb U}^{i-1}\times {\mathbb X}^i \longmapsto {\mathbb U}_i, \hso i=0, \ldots, n: \hso   u_i=e_i(u^{i-1}, x^i), i=0, \ldots, n\big\}.
\eea
In view of the Markovian property of the controlled object, i.e., satisfying  ${\bf P}_{X_i|X^{i-1}, U^{i-1}}={\bf P}_{X_i|X_{i-1}, U_{i-1}}, i=0, \ldots, n$, then it can be shown  that   the optimization in (\ref{SOC_1}) over all non-Markov strategies  reduces to an optimization problem over  Markov strategies, as follows \cite{kumar-varayia1986,hernandezlerma-lasserre1996}.
\begin{align}
J_{0,n}^F({\bf P}_{U_i|U^{i-1}, X^{i}}^*,{\bf P}_{X_i|X_{i-1}, U_{i-1}}: i=0, \ldots, n)=&J_{0,n}^M({\bf P}_{U_i|X_{i}}^*, {\bf P}_{X_i|X_{i-1}, U_{i-1}}: i=0, \ldots, n)  \\
\tri & \inf_{ {\bf P}_{U_i|X_{i}}, i=0, \ldots, n} {\bf E}\Big\{  \sum_{i=0}^n\ell(U_i,X_{i})\Big\}. \label{SOC_2}
\end{align}
This further implies that the control process $\{X_i: i=0, \ldots, n\}$ is Markov, i.e., it satisfies ${\bf P}_{X_i|X^{i-1}}= {\bf P}_{X_i|X_{i-1}}, i=0, \ldots, n$. On the other hand, if ${\bf P}_{X_i|X^{i-1}}= {\bf P}_{X_i|X_{i-1}}, i=0, \ldots, n$ then (\ref{SOC_2}) holds.

{\bf Feature 2.} Given a general non necessarily  Markov controlled object $\big\{{\bf P}_{X_i|X^{i-1}, U^{i-1}}: i=0, \ldots, n\big\}$, one of the fundamental results of classical stochastic optimal control is that the optimization of the average  pay-off ${\bf E}\Big\{  \sum_{i=0}^n\ell(U_i,X_{i})\Big\}$ over all non-Markov randomized strategies ${\cal P}_{[0,n]}^{CO}$ does not incur a better performance than optimizing it over non-Markov and non-randomized strategies ${\cal E}_{[0,n]}^{CO}$, i.e., the following holds. 
\begin{align}
J_{0,n}^F( {\bf P}_{U_i|U^{i-1}, X^{i}}^*,&{\bf P}_{X_i|X^{i-1}, U^{i-1}}: i=0, \ldots, n)=   \inf_{{\cal E}_{[0,n]}^{CO} } {\bf E}\Big\{  \sum_{i=0}^n\ell(U_i,X_{i})\Big\} \label{SOC_5}\\
=& \inf_{g_i(X_i):\; i=0, \ldots, n } {\bf E}^g\Big\{  \sum_{i=0}^n\ell(U_i,X_{i})\Big\} \hso \mbox{if} \hso  {\bf P}_{X_i|X^{i-1}, U^{i-1}}={\bf P}_{X_i|X_{i-1}, U_{i-1}}, i=0, \ldots, n. \label{SOC_6}
\end{align}
We note that in any extremum problem  of directed information, Features 1 and 2 above do not hold.\\
Specifically, the sample path pay-off is the directed information density, and this is a functional of the channel output conditional distribution, which depends on the channel input conditional distribution. Since the directed information density is not a fixed functional,  then  Feature 1 of stochastic optimal control formulation does not hold for extremum problems of directed information. The dependence of the directed information density or sample path pay-off on the channel output conditional distribution, which is induced by the channel distribution and the channel input conditional distribution makes extremum problems of directed information distinct compared to classical stochastic optimal control problems.  \\
Further, Feature 2 does not hold in extremum problems of directed information, because if the channel input distributions are replaced by non-randomized deterministic strategies, then directed information is zero. \\
In view of Features 1 and 2 of stochastic optimal control, any application of stochastic optimal control techniques to derive the information structures of optimal channel input distributions, which maximize directed information, needs to be treated with caution. Often, stochastic optimal control techniques might not be directly applicable and properties of optimal channel input distributions need to be derived from first principles. Also,  additional properties of directed information density might be needed, such as, the variational equality of directed information, to determine achievable upper bounds. 

%\end{remark}

%In this section, we show the following.
%
%\begin{description}
%\item[(i)] Channel distributions (\ref{CD_C2})-(\ref{CD_C5}), are induced  by various nonlinear channel models (NCM), driven by arbitrary distributed noise processes. These include  nonlinear and linear time-varying  Autoregressive models, and nonlinear and linear  channel models expressed in state space form \cite{caines1988}. 
% 
% \item[(ii)] The optimal channel input conditional  distributions of Multiple-Input Multiple Output (MIMO) Gaussian Linear Channel Models (G-LCM), driven by correlated Gaussian noise processes,  which  maximize directed information $I(A^n \rar B^n)$,   are Gaussian.   
%\end{description}
%Claim (i) illustrates that  many of the existing channels investigated in the literature, for example,  \cite{cover-pombra1989,kim2008,chen-berger2005,yang-kavcic-tatikonda2005,permuter-cuff-roy-weissman2010,permuter-asnani-weissman2013,elishco-permuter2014,kourtellaris-charalambous2015}, induce channel distributions of Class A, B or C.
%Claim  (ii) generalizes the Cover and Pombra \cite{cover-pombra1989} characterization (\ref{cp1989_a}) of feedback capacity of  nonnstationary nonergodic Additive Gaussian channels driven by correlated noise.
%

\subsection{Channel Class A or B and Transmission Cost Class A or B}
\label{class_A}
%One of the important applications of variational equalities of directed information, as well as, those of mutual information, is to apply them in extremum problems of information theory to identify lower or upper bounds which are achievable.

Throughout this section we use the following definition of channel input distributions satisfying conditional independence.\\

\begin{definition}(Conditional independence for class A channels and class B transmission cost functions)\\
\label{CI_CLASS_A_B}
Consider the class of channel input conditional distributions ${\cal P}_{[0,n]}$ and define the set of channel input conditional distributions for Class A channels and Class B transmission cost constraints  by 
\begin{align}
{\cal P}_{[0,n]}^{A}(\kappa) \tri     \Big\{   P_i(da_i | a^{i-1}, b^{i-1}), \hso  i=0, 1, \ldots, n:\hso    \frac{1}{n+1} {\bf E}^{P}\Big(\sum_{i=0}^n \gamma_i^{A.N}(A_{i-N}^i, B^{i})\Big) \leq \kappa  \Big\}\subset {\cal P}_{[0,n]}. \label{cor-ISR_29_cc_CC}
\end{align}
A subclass of channel input conditional  distributions from ${\cal P}_{[0,n]}$ for Class A channels,  which satisfy conditional independence  is defined by 
\begin{align}
\overline{\cal P}_{[0,n]}^{A.L} \tri     \Big\{   P_i(da_i | a^{i-1}, b^{i-1})= \pi_i^{A.L}(da_i|a_{i-L}^{i-1}, b^{i-1})-a.a. (a^{i-1}, b^{i-1}):\hso i=0, 1, \ldots, n\Big\} \subset {\cal P}_{[0,n]}  \label{FRD_1}
\end{align}
A subclass of channel input  conditional distributions from the set ${\cal P}_{[0,n]}^{A}(\kappa)$,  for Class A channels and Class B transmission cost constraints, which satisfy conditional  independence is defined by 
\begin{align}
\overline{\cal P}_{[0,n]}^{A.I}(\kappa) \tri     \Big\{   & P_i(da_i | a^{i-1}, b^{i-1})= \pi_i^{A.I}(da_i|a_{i-I}^{i-1}, b^{i-1})-a.a. (a^{i-1}, b^{i-1}), \hso  i=0, 1, \ldots, n:  \nonumber \\
& \frac{1}{n+1} {\bf E}^{\pi^{A.I}}\Big(\sum_{i=0}^n \gamma_i^{A.N}(A_{i-N}^i, B^{i})\Big) \leq \kappa  \Big\}\subset {\cal P}_{[0,n]}^A(\kappa), \hso I \tri \max\{L, N\}. \label{cor-ISR_29_cc}
\end{align}
\end{definition}

\subsubsection{\bf Channel Class A and Transmission Cost Class A} 
\label{Class_A_SOC} 
Given  the channel distribution (\ref{CD_C2}), the   joint distribution is defined  by\footnote{Often we do not indicate the dependence of the distributions ${\bf P}(\cdot)$  and expectation ${\bf E}\{\cdot\}$ on the initial data, $\nu(db^{-1})$ or $b^{-1}$, because  these are easily extracted from the definitions.}  
\bea
{\bf P}^P(da^i, db^i) \tri \otimes_{j=0}^i\Big( P_j(da_j|a^{j-1}, b^{j-1})\otimes Q_j(db_j|b^{j-1}, a_{j-L}^j)\Big), \hso i=0, \ldots, n. \label{CIS_9a_new_1}
\eea
  Consequently,  directed information is given by 
\begin{align}
I(A^n\rightarrow B^n)=& 
\sum_{i=0}^n {\bf E}^{ P}\Big\{
\log\Big(\frac{dQ_i(\cdot|B^{i-1},A_{i-L}^i)}{d\Pi_i^{ P}(\cdot|B^{i-1})}(B_i)\Big)
\Big\} \equiv \sum_{i=0}^{n}I(A_{i-L}^i;B_i|B^{i-1})  \label{cor-ISR_25a_new}\\
=&\sum_{i=0}^n {\bf E}^{ P}\Big\{ \ell_i^P(A_i, A_{i-L}^{i-1},B^{i-1})    \Big\}\label{cor-ISR_25a_new_new}\\
\equiv& \sum_{i=0}^n {\bf E}^{ P}\Big\{ \ell_i^P(A_i, A_{i-L}^{i-1},S_i)    \Big\}
\end{align}
where the sample path pay-off and the conditional distribution of the channel output obtained from   (\ref{CIS_3a}) are given by
 \begin{align}
 \ell_i^P(a_i, a_{i-L}^{i-1},b^{i-1}) \tri & \int_{{\mb B}_i}^{}
\log\Big(\frac{dQ_i(\cdot|B^{i-1},a_{i-L}^i)}{d\Pi_i^{ P}(\cdot|b^{i-1})}(b_i)\Big)Q_i(db_i|b^{i-1},a_{i-L}^i) \\
\equiv& \ell^P(a_i, a_{i-L}^{i-1}, s_i), \hst s_j\tri b^{j-1}, \; j=0, \ldots, n, \\
\Pi_i^{ P} (db_i| b^{i-1})=&  \int_{{\mb A}^i} Q_i(db_i|b^{i-1},  a_{i-L}^i)\otimes P_i(da_i|a^{i-1}, b^{i-1})\otimes {\bf P}^{ P}(da^{i-1}|b^{i-1}), \hso  i=0, \ldots, n. \label{CIS_9a_new}
%\\
% =& \int_{{\mb A}^i} Q_i(db_i|b^{i-1},  a_{i-L}^i)\otimes {\bf P}^{ P}(da_{i-L}^{i}|b^{i-1})
\end{align}
It is important to note that for each $i$, the \'a posteriory distribution ${\bf P}^{ P}(da^{i-1}|b^{i-1})$ in (\ref{CIS_9a_new}) depends on the channel input distribution $\{P_j(da_j|a^{j-1}, b^{j-1}): j=0, \ldots, i-1\}$, for $i=0,1,\ldots, n$.  

\noi {\bf Information Structures of Optimal Channel Input Distributions.}
First, we make the following observation. 
For each $i$, the pay-off in (\ref{cor-ISR_25a_new_new}), i.e., $
\ell_i^P(a_i, a_{i-L}^{i-1},b^{i-1})$
  depends on $(a_{i-L}^{i-1},b^{i-1})$ through the channel distribution dependence on these variables, and the control object   $ \big\{g_j(a^{j-1}, b^{j-1})\tri  { P}_j(da_j| a^{j-1}, b^{j-1}): j=0, \ldots, i\big\}$, via   $\big\{\xi_j^P(b^{j-1})\tri \Pi_i^{ P}(db_i|b^{i-1}) : j=0, \ldots, i\}$, defined by ({\ref{CIS_9a_new}), for $i=0, \ldots, n$.  Moreover, for each $i$,  $\xi_i^P(b^{i-1})$  depends   on $b^{i-1}$ through the channel distribution and the  control object $ \big\{g_j(a^{j-1}, b^{j-1}): j=0, \ldots, i\big\}$, for $i=0, \ldots, n$. Moreover, for each $i$, the common information  to the encoder (channel input distribution) and to the decoder is the process $S_i\tri B^{i-1}$, for $i=0, \ldots, n$.\\
  Next,  we show that $\{\xi_i^P(b^{i-1})\tri \Pi_i^P(db_i|b^{i-1}): i=0, \ldots, n\}$ is a functional of the  object  $\big\{{\bf P}(da_i|a_{i-L}^{i-1}, b^{i-1}): i=0, \ldots, n\big\}$, instead of  the control object   $ \big\{g_j(a^{j-1}, b^{j-1})\tri  { P}_j(da_j| a^{j-1}, b^{j-1}): j=0, \ldots, i\big\}$. First, we apply Bayes' theorem and we use the property of the channel distribution, to deduce the following conditional independence holds. 
\begin{align}
{\bf P}(ds_{i+1}|s^i, a^i)= {\bf P}(ds_{i+1}|s_i, a_{i-L}^i), \hso i=0, \ldots, n-1. \label{joint_p}
\end{align}
  It is easy to verify that   the process $\{S_i \tri B^{i-1}: i=0, \ldots, n\}$ is Markov, and satisfies the following identities.
\begin{align}
{\bf P}^P(ds_{i+1}|s^{i})={\bf P}^P(ds_{i+1}|s_{i})=\int_{{\mathbb A}_{i-L}^i}{\bf P}(ds_{i+1}|s_i, a_{i-L}^i)\otimes {\bf P}(da_i|a_{i-L}^{i-1}, s_i) \otimes {\bf P}^P(da_{i-L}^{i-1}|s_i), \hso i=0, \ldots, n-1 \label{MV}
\end{align}
where the second equality follows from (\ref{joint_p}).
Further, we show that  the \'a posteriori distribution ${\bf P}^P(da_{i-L}^{i-1}|s_i)\equiv {\bf P}^P(da_{i-L}^{i-1}|b^{i-1})$ appearing in (\ref{MV}) is a functional of  the conditional distribution \\ $\{{\bf P}(da_j|a_{j-L}^{j-1}, s_j): j=0, \ldots, i-1\}$ instead of the distribution $P_j(da_j|a^{j-1}, s_j): j=0, \ldots, i-1\}$, as follows. By applying Bayes' theorem we obtain the following recursion. 
\begin{align}
{\bf P}^P(da_{i-L}^{i-1}|s_{i}) =& \frac{\int_{{\mathbb A}_{i-1-L}} Q_{i-1}(db_{i-1}|s_{i-1}, a_{i-1-L}^{i-1})\otimes {\bf P}(da_{i-1}|a_{i-1-L}^{i-2}, s_{i-1}) \otimes {\bf P}^P(da_{i-1-L}^{i-2}|s_{i-1})}{{\bf P}^P(db_{i-1}|s_{i-1})},\; i=1, \ldots, n, \label{rec_g1} \\
{\bf P}^P(da_{-L}^{-1}|s_0)=&\mbox{given} \label{rec_g2} 
\end{align}
where the denominator is given by
\begin{align}
{\bf P}^P(db_{i-1}|s_{i-1}) =& \int_{{\mathbb A}_{i-1-L}^{i-1}} Q_{i-1}(db_{i-1}|s_{i-1}, a_{i-1-L}^{i-1})\otimes {\bf P}(da_{i-1}|a_{i-1-L}^{i-2}, s_{i-1}) \otimes {\bf P}^P(da_{i-1-L}^{i-2}|s_{i-1}). \label{rec_g3}
\end{align}
For a fixed channel distribution $Q_{i-1}(\cdot|\cdot, \cdot)$ define the operator appearing in the numerator of (\ref{rec_g1}) by 
\begin{align}
&\Big(b_{i-1}, s_{i-1}, {\bf P}(\cdot|\cdot, s_{i-1} ), {\bf P}^P(\cdot|s_{i-1})\Big) \longmapsto T_{i-1}\Big(b_{i-1}, s_{i-1}, {\bf P}(\cdot|\cdot, s_{i-1}), {\bf P}^P(\cdot|s_{i-1}) \Big)(\cdot), \hso i=1, \ldots, n,\\
 &T_{i-1}\Big(b_{i-1}, s_{i-1}, {\bf P}(\cdot|\cdot, s_{i-1}), {\bf P}^P(\cdot|s_{i-1}) \Big)(da_{i-L}^{i-1}) \tri \int_{{\mathbb A}_{i-1-L}} Q_{i-1}(db_{i-1}|s_{i-1}, a_{i-1-L}^{i-1})\otimes {\bf P}(da_{i-1}|a_{i-1-L}^{i-2}, s_{i-1}) \nonumber \\
 & \hst \hst \otimes {\bf P}^P(da_{i-1-L}^{i-2}|s_{i-1})
\end{align}
Then (\ref{rec_g1}) is expressed as follows.
\begin{align}
{\bf P}^P(da_{i-L}^{i-1}|s_{i}) =& \frac{ T_{i-1}\Big(b_{i-1}, s_{i-1}, {\bf P}(\cdot|\cdot, s_{i-1}), {\bf P}^P(\cdot|s_{i-1}) \Big)(da_{i-L}^{i-1}) }{ \int_{{\mathbb A}_{i-L}^{i-1}}T_{i-1}\Big(b_{i-1}, s_{i-1}, {\bf P}(\cdot|\cdot, s_{i-1}), {\bf P}^P(\cdot|s_{i-1}) \Big)(da_{i-L}^{i-1})},\hso i=1, \ldots, n \\
\equiv & \tilde{T}_{i-1}\Big(b_{i-1}, s_{i-1}, {\bf P}(\cdot|\cdot, s_{i-1}), {\bf P}^P(\cdot|s_{i-1}) \Big)(da_{i-L}^{i-1}), \label{A_POST_1}\\
{\bf P}^P(da_{-L}^{-1}|s_0)=&\mbox{given}.
\end{align}
By iterating (\ref{A_POST_1}), then we deduce that the conditional distribution
${\bf P}^P(da_{i-L}^{i-1}|s_{i})$ is a functional of the control object  $\big\{{g}_{j-1}^{A.L}(a_{j-1-L}^{j-2}, b^{j-2})\equiv  \pi_{j-1}^{A.L}(da_{j-1}|a_{j-1-L}^{j-2}, b^{j-2}): j=0, \ldots,i \big\}$ and $b^{i-1}$, i.e., for each $i$,  ${\bf P}^P(da_{i-L}^{i-1}|b^{i-1})\equiv {\bf P}^{\pi^L}(da_{i-L}^{i-1}|b^{i-1})$. \\
Thus,  from (\ref{MV}) we deduce that for each $i$, the conditional distribution ${\bf P}^P(ds_{i+1}|s^i) \equiv {\bf P}^P(ds_{i+1}|b^{i-1})\equiv {\bf P}^{\pi^L}(ds_{i+1}|b^{i-1})$ is a functional of the control object  $\big\{{g}_{j}^{A.L}(a_{j-L}^{j-1}, b^{j-1})\equiv  \pi_{j}^{A.L}(da_{j}|a_{j-L}^{j-1}, b^{j-1}): j=0, \ldots,i \big\}$ and $b^{i-1}$. This implies, for each $i$,  that  $\Pi_i^P(db_i|b^{i-1})\equiv     \Pi_i^{\pi^{A.L}}(db_i|b^{i-1})\equiv \xi_i^{\pi^{A.L}}(b^{i-1})$ is also a functional  of the control object  $\big\{{g}_{j}^{A.L}(a_{j-L}^{j-1}, b^{j-1})\equiv  \pi_{j}^{A.L}(da_{j}|a_{j-L}^{j-1}, b^{j-1}): j=0, \ldots,i \big\}$ and $b^{i-1}$. \\
Using the above facts, we can express the distribution of $\{S_i: i=0, \ldots, n\}$ as follows.
\begin{align}
&{\bf P}^P(ds_{i+1})= \int_{{\mathbb B}^{i-1}\times {\mathbb A}_{i-L}^i} {\bf P}(ds_{i+1}|s_i, a_{i-L}^{i})\otimes {\bf P}(da_{i-L}^i|s_i)\otimes {\bf P}^P(ds_i), \hso i=0, \ldots, n \label{MV_1_neq_1} \\
&  \Longrightarrow \hso  {\bf P}^{\pi^{A.L}}(ds_{i+1})= \int_{{\mathbb B}^{i-1}\times {\mathbb A}_{i-L}^i}  {\bf P}(ds_{i+1}|s_i, a_{i-L}^i)\otimes \pi_i^{A.L}(da_{i-L}^i|s_i)\otimes {\bf P}^{\pi^{A.L}}(ds_i) \label{MV_1_neq}
\end{align}
where (\ref{MV_1_neq}) is due to iterating (\ref{MV_1_neq_1}).
Finally, we can express  directed information (\ref{cor-ISR_25a_new_new}) as follows. 
\begin{align}
I(A^n\rightarrow B^n)=& \sum_{i=0}^n {\bf E}^{ P}\Big\{ \ell_i^P(A_i, A_{i-L}^{i-1},S^{i-1})\\
=& \sum_{i=0}^n \int_{{\mathbb B}^{i-1} \times {\mathbb A}_{i-L}^{i-1}}  \ell_i^{\pi^{A.L}}(a_i, a_{i-L}^{i-1},s^{i-1}) \pi_i^{A.L}(da_i|a_{i-L}^{i-1}, s_i)\otimes {\bf P}^{\pi^{A.L}}(da_{i-L}^{i-1}|s_i) \otimes {\bf P}^{\pi^{A.L}}(ds_i)\\
\equiv & 
\sum_{i=0}^n {\bf E}^{\pi^{A.L}}\Big\{ \ell_i^{\pi^{A.L}}(A_i, A_{i-L}^{i-1},B^{i-1})    \Big\}  \label{cor-ISR_25a_new_new_1}
\end{align}  
where the joint and marginal distributions are induced by the channel distribution $\{Q_i(db_i|b^{i-1}, a_{i-L}^i): i=0, \ldots, n\}$ and the control object $\{\pi_{i}^{A.L}(da_{i}|a_{i-L}^{i-1}, b^{i-1}): i=0, \ldots, n\}$. Clearly, from  (\ref{cor-ISR_25a_new_new_1}) we deduce that   the maximization of  directed information $
I(A^n\rightarrow B^n)$ defined  by  (\ref{cor-ISR_25a_new}) over  $\big\{ g_i(a^{i-1}, b^{i-1})\tri { P}_i(da_i| a^{i-1}, b^{i-1}): i=0, \ldots, n\big\}$, occurs in the subset, which satisfies conditional independence 
\bea
{ P}_i(da_i| a^{i-1}, b^{i-1})={\bf P}(da_i| a_{i-L}^{i-1}, b^{i-1})\equiv \pi_i^{A.L}(da_i|a_{i-L}^{i-1}, b^{i-1})-a.a.(a_{i-L}^{i-1}, b^{i-1}), \hso i=0, \ldots, n.
\eea 
Hence, $\{{S}_i: i=0, \ldots, n\}$ is    the controlled process controlled by the control process $\{A_i: i=0, \ldots, n\}$. However, the information structure of any candidate of optimal  channel input distribution (encoder) for each $i$, is $\{A_{i-L}^{i-1}, B^{i-1}\}$, while that of the decoder is $\{B^i\}$. Nevertheless, with the above simplification,  we can  further pursue the optimization of (\ref{cor-ISR_25a_new_new_1}) over the channel input distributions $\big\{\pi_i^{A.L}(da_i| a_{i-L}^{i-1}, b^{i-1}): i=0, \ldots, n\big\}$, with some variations from classical stochastic optimal control theory, as often  done in Markov decision theory, \cite{hernandezlerma-lasserre1996,kumar-varayia1986}.\\
We note that the above derivation is done from first principles, without utilizing any of the properties of Markov decision. In Theorem~\ref{cor-ISR} we provide an alternative derivation, which is based on identifying an augmented state process so that the above properties of optimal channel input distributions can be obtained, directly  from stochastic optimal control theory of Markov processes.

For the degenerate channel $Q_i(db_i|b^{i-1}, a_i, a_{i-1})=\overline{Q}_i(db_i|a_i, a_{i-1}), i=0, \ldots, n$, the derivation in \cite{yang-kavcic-tatikonda2005} (see Theorem~1) and  in   \cite{yang-kavcic-tatikonda2007},  should   be read with caution, because the authors do not show that the supremum of directed information over all channel input conditional distributions $\{P_i(da_i|a^{i-1}, b^{i-1}): i=0, \ldots, n\}$, occurs in a smaller set, satisfying a conditional independence $P_i(da_i|a^{i-1}, b^{i-1})=\overline{P}(da_i|a_{i-1}, b^{i-1}), i=0, \ldots, n$. Similarly, the derivation of Theorem~1, in  \cite{permuter-cuff-roy-weissman2010}, for the unifilar finite state channel of Figure 2 in \cite{permuter-cuff-roy-weissman2010}, i.e., $p(y_i|x_i, s_{i-1}), s_i=f(s_{i-1}, x_i,y_i)$,  where it is shown that directed information becomes $I(X^n\rar Y^n)=\sum_{i=0}^n I(X_i, S_{i-1}; Y_i|Y^{i-1})$, should be read with caution. Specifically,   the derivation given in  \cite{permuter-cuff-roy-weissman2010}, under ``Proof of Equality (18): It will suffice to prove
by induction that if we have two input distributions ...'', page 3154, is not equivalent to the statement that maximizing   $\sum_{i=0}^n I(X_i, S_{i-1}; Y_i|Y^{i-1})$ over ${\bf P}_{X_i|X^{i-1}, Y^{i-1}}:  i=1, \ldots, n$ occurs in the subset of distributions satisfying conditional independence $\big\{{\bf P}_{X_i|X^{i-1}, Y^{i-1}}={\bf P}_{X_i|S_{i-1}, Y^{i-1}}: i=1, \ldots, n\big\}$ (see feedback channels in \cite{cover-pombra1989}).
The derivations of Theorem~1 in \cite{yang-kavcic-tatikonda2005}, Theorem 1  in   \cite{yang-kavcic-tatikonda2007},  and  Theorem~1 in  \cite{permuter-cuff-roy-weissman2010}  need to incorporate the above steps, or variants of them (as done shortly), to fill the gaps of  showing   that the optimal channel input distributions for the specific channels considered by the authors, occur in subsets satisfying conditional independence.

{\bf Sufficient Statistic.} For Class A channel distribution, the resulting characterization of FTFI capacity  corresponds to the maximization problem 
\bea
\sup_{\pi_i(da_i|a_{i-L}^{i-1}, b^{i-1}): i=0, \ldots, n} \sum_{i=0}^n {\bf E}^{\pi^{A.L}}\Big\{ \ell_i^{\pi^{A.L}}(A_i, A_{i-L}^{i-1},B^{i-1})    \Big\}
\eea
 and this is expressed in terms of the \'a posteriori distribution  $\big\{{\bf P}^{\pi^{A.L}}(da_{i-L}^{i-1}|b^{i-1}): i=0, \ldots, n\big\}$   satisfying recursion (\ref{rec_g1}), (\ref{rec_g2}). However, it will be erroneous to assume that this  \'a posteriori distribution is a sufficient statistic for the channel input distribution $\big\{\pi_i^{A.L}(da_i|a_{i-L}^{i-1}, b^{i-1}): i=0, \ldots, n\big\}$, because it is not a Markov recursion \cite{kumar-varayia1986}. Rather, it is the joint process $\big\{({\bf P}^{\pi^{A.L}}(da_{i-L}^{i-1}|B^{i-1}), B^{i-1}): i=0, \ldots, n\big\}$  which is Markov.

Next,  we  provide an alternative more information theoretic  derivation based on the variational equalities of  Theorem~\ref{thm-var}, applied to the channel distribution (\ref{CD_C2}), that is, to (\ref{cor-ISR_25a_new}). For convenience of the reader  we introduce the following application of 
Theorem~\ref{thm-var} to channel distribution (\ref{cor-ISR_25a_new}).

\ \

\begin{theorem}
\label{cor-ISR_VE}
(Variational equalities for Class A  channels)\\
Consider the channel distribution of Class A (\ref{CD_C2}), i.e., $\{Q_i(db_i|b^{i-1}, A_{i-L}^i): i=0, \ldots, n\}$ and directed information $I(A^n \rar B^n)$ defined by (\ref{cor-ISR_25a_new}), via distributions (\ref{CIS_9a_new_1}), (\ref{CIS_9a_new}).\\
The following hold.\\
(a) Let ${\cal V}_{[0,n]} \tri \big\{ V_i(db_i|b^{i-1})\in {\cal M}({\mb B}_i): i=0, \ldots, n\big\}$ be an arbitrary set of distributions. Then 
\begin{align}
    & \sup_{ {\cal P}_{[0,n]}}  I(A^n\rightarrow B^n) = \sup_{ {\cal P}_{[0,n]}   }  \sum_{i=0}^n {\bf E}^{ P}\Big\{
\log\Big(\frac{dQ_i(\cdot|B^{i-1},A_{i-L}^i)}{d\Pi_i^{ P}(\cdot|B^{i-1})}(B_i)\Big)
\Big\}  \\
&    =\sup_{{\cal P}_{[0,n]}  } \inf_{ {\cal V}_{[0,n]}}  \sum_{i=0}^n {\bf E}^{ P}\Big\{
\log\Big(\frac{dQ_i(\cdot|B^{i-1},A_{i-L}^i)}{dV_i(\cdot|B^{i-1})}(B_i)\Big) \label{equation15a_VE_1}
\Big\}.
\end{align}
Moreover, the infimum over ${\cal V}_{[0,n]}$ is achieved at $V_i(db_i|b^{i-1})=\Pi_i^{ P}(db_i|b^{i-1}), i=0, \ldots, n$ given by (\ref{CIS_9a_new}).

(b) Let $\big\{S_i(db_i|b^{i-1},a^{i-1}) \in {\cal M}({\mb B}_i) : i=0, \ldots, n\big\}$ and $\big\{R_i(da_i|a^{i-1},b^i) \in {\cal M}({\mb A}_i): i=0,1,\ldots,n\big\}$ be arbitrary distributions and define the joint distribution on ${\cal M}({\mb A}_0^n \times {\mb B}_0^n)$ by $\otimes_{i=0}^n\Big(S_i(db_i|b^{i-1},a^{i-1})\otimes {R}_i(da_i|a^{i-1},b^i)\Big)$.  Then 
\begin{align}
&\sup_{ {\cal P}_{[0,n]} }  \sum_{i=0}^n {\bf E}^{ P}\Big\{
\log\Big(\frac{dQ_i(\cdot|B^{i-1},A_{i-L}^i)}{d\Pi_i^{ P}(\cdot|B^{i-1})}(B_i)\Big)
\Big\}   \\
 =& \sup_{ {\cal P}_{[0,n]} }  \sup_{\substack{\big\{S_i(db_i|b^{i-1},a^{i-1})\otimes{R}_i(da_i|a^{i-1},b^{i})\in{\cal M}({\mb A}_i\times{\mb B}_i): i=0,1,\ldots,n\big\}\\\big\{S_i(db_i|b^{i-1},a^{i-1})\in{\cal M}({\mb B}_i),~R_i(da_i|a^{i-1},b^{i})\in{\cal M}({\mb A}_i)\big\}}}  \sum^n_{i=0} {\bf E}^P\Big\{   \log\Big(\frac{d{R}_i(\cdot|A^{i-1},B^{i})}{dP_i(\cdot|A^{i-1},B^{i-1})}(A_i)\nonumber \\
&.\frac{dS_i(\cdot|B^{i-1},A^{i-1})}{d\Pi_{i}^P(\cdot|B^{i-1})}(B_i)\Big) \Big\}   \label{equation15a_VE_2}
\end{align}
Moreover, the supremum over $\big\{S_i(db_i|b^{i-1},a^{i-1})\otimes{R}_i(da_i|a^{i-1},b^{i})\in{\cal M}({\mb A}_i\times{\mb B}_i): i=0,1,\ldots,n\big\}$ is achieved when the following identity holds. 
\begin{align}
\frac{dP_i(\cdot|a^{i-1},b^{i-1})}{d{R}_i(\cdot|a^{i-1},b^i)}(a_i).\frac{d{Q}_i(\cdot|b^{i-1},a_{i-L}^{i})}{dS_i(\cdot|b^{i-1},a^{i-1})}(b_i)=1-a.a.(a^i,b^i),~i=0,1,\ldots,n.\label{equation102_VE_3}
\end{align}
Equivalently, the supremum  is achieved at 
\bea
\otimes_{i=0}^n\Big(S_i(db_i|b^{i-1},a^{i-1})\otimes {R}_i(da_i|a^{i-1},b^i)\Big)= \otimes_{i=0}^n\Big(P_i(da_i|a^{i-1},b^{i-1})\otimes {Q}_i(db_i|b^{i-1}, a_{i-L}^i)\Big).
\eea
\end{theorem}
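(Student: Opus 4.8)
The plan is to obtain both parts directly from Theorem~\ref{thm-var}, the only new ingredient being a change of notation reflecting the structure of Class~A channels. The key observation is that a Class~A channel distribution (\ref{CD_C2}) is just an element of ${\cal C}_{[0,n]}$ whose kernels satisfy the function restriction $Q_i(db_i|b^{i-1},a^i)=Q_i(db_i|b^{i-1},a_{i-L}^i)$, $i=0,\ldots,n$, so that every construction of Section~\ref{def-sub2} applies verbatim and the induced joint and output distributions reduce to (\ref{CIS_9a_new_1}) and (\ref{CIS_9a_new}); the latter is merely (\ref{CIS_3a}) with this particular $Q_i$, which is legitimate because $Q_i$ does not depend on the ``old'' inputs $a^{i-L-1}$ and can be pulled out of the integral over those coordinates, and the directed information functional becomes (\ref{cor-ISR_25a_new}).

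For part~(a) I would fix an arbitrary $\{P_i(da_i|a^{i-1},b^{i-1}):i=0,\ldots,n\}\in{\cal P}_{[0,n]}$ and apply Theorem~\ref{thm-var}(a) to the pair $(P,Q)$ with $Q$ the Class~A channel. This produces, for that fixed $P$, the pointwise identity
\begin{align}
I(A^n\rar B^n)=\inf_{{\cal V}_{[0,n]}}\sum_{i=0}^n\int_{{\mb G}^i}\log\Big(\frac{dQ_i(\cdot|b^{i-1},a_{i-L}^i)}{dV_i(\cdot|b^{i-1})}(b_i)\Big)\,{\bf P}_\nu^P(da^i,db^i),\nonumber
\end{align}
with the infimum attained at $V_i(db_i|b^{i-1})=\Pi_i^P(db_i|b^{i-1})$ of (\ref{CIS_9a_new}). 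Since this is an equality valid for every $P$ --- so no minimax or saddle-point argument is required --- taking $\sup_{{\cal P}_{[0,n]}}$ of both sides gives (\ref{equation15a_VE_1}), and the statement on where the infimum is achieved carries over unchanged.

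For part~(b) I would again fix $P\in{\cal P}_{[0,n]}$ and apply Theorem~\ref{thm-var}(b) to $(P,Q)$ with $Q$ of Class~A, which yields the inner equality in (\ref{equation15a_VE_2}) pointwise in $P$; taking $\sup_{{\cal P}_{[0,n]}}$ of both sides then establishes (\ref{equation15a_VE_2}). The maximiser condition (\ref{equation102_VE_3}) is (\ref{equation102}) with $Q_i(\cdot|b^{i-1},a^i)$ replaced by $Q_i(\cdot|b^{i-1},a_{i-L}^i)$, and the equivalent ``achieved at'' statement follows by combining the conclusion $\otimes_{i=0}^n(S_i\otimes R_i)={\bf P}_{a^{-1},b^{-1}}^P$ of Theorem~\ref{thm-var}(b) with the Class~A factorisation (\ref{CIS_9a_new_1}) of ${\bf P}_{a^{-1},b^{-1}}^P$.

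I do not anticipate a genuine obstacle; the only step requiring care is the routine verification that the disintegration (\ref{CIS_9a_new_1})--(\ref{CIS_9a_new}) of the Class~A joint and output distributions is indeed the specialisation of the general formulae (\ref{JOINT_1})--(\ref{CIS_3a}), and that the integrability and measurability hypotheses under which Theorem~\ref{thm-var} was proved survive the function restriction defining Class~A --- both of which are immediate, since replacing $A^i$ by $A_{i-L}^i$ in the channel kernel is a pure restriction and leaves all the Radon--Nikodym derivatives appearing in Theorem~\ref{thm-var} well defined.
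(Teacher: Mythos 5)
Your proposal is correct and takes exactly the route the paper intends: the paper's own proof simply states that (a) and (b) are applications of Theorem~\ref{thm-var} to the specific Class~A channel and omits the details, which are precisely the ones you supply (fixing $P$, invoking the pointwise variational identity, and then taking $\sup_{{\cal P}_{[0,n]}}$ of both sides, noting correctly that no minimax interchange is needed since the identity holds for each fixed $P$). Your observation that the Class~A restriction leaves all constructions of Theorem~\ref{thm-var} intact is the only verification required, and it is handled appropriately.
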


\begin{proof} (a), (b) These are applications of 
Theorem~\ref{thm-var} to the specific channel, hence the derivations are omitted. 
\end{proof}

Next, we apply the variational equalities of Theorem~\ref{cor-ISR_VE} and  stochastic optimal control theory, to identify the information structure  of the optimal channel input conditional distribution, which maximizes (\ref{cor-ISR_25a_new}) over ${\cal P}_{[0,n]}$,  without and with a  transmission transmission cost constraint of Class A.\\

\begin{theorem}
\label{cor-ISR}
(Class A  channels and class A transmission cost functions)\\
Suppose the channel distribution is of Class A defined by (\ref{CD_C2}), i.e.,  
\bea
{\bf P}_{B_i|B^{i-1}, A^i}(db_i|b^{i-1}, a^i)=Q_i(db_i|b^{i-1}, a_{i-L}^i)-a.a. (b^{i-1}, a^i), \hso i=0, \ldots, n. \label{cor-ISR_29}
\eea
%Define the following restricted class of channel input distributions.
%\begin{align}
%\overline{\cal P}_{[0,n]}^{A.L} \tri     \Big\{   P_i(da_i | a^{i-1}, b^{i-1})= \pi_i^{A.L}(da_i|a_{i-L}^{i-1}, b^{i-1})-a.a. (a^{i-1}, b^{i-1}): i=0, 1, \ldots, n\Big\} \subset {\cal P}_{[0,n]}  \label{FRD_1}
%\end{align}
The following hold.\\
(a) Without Transmission Cost. The  maximization of $I(A^n \rar B^n)$ defined by (\ref{cor-ISR_25a_new}) over ${\cal P}_{[0,n]}$   occurs in $\overline{\cal P}_{[0,n]}^{A.L}$ defined by (\ref{FRD_1}) and the characterization of FTFI capacity is given by the following expression.
\begin{align}
{C}_{A^n \rar B^n}^{FB,A.L} = \sup_{\big\{\pi_i^{A.L}(da_i |a_{i-L}^{i-1}, b^{i-1}) \in {\cal M}({\mb A}_i) : i=0,\ldots,n\big\}} \sum_{i=0}^n {\bf E}^{ \pi^{A.L}}\Big\{
\log\Big(\frac{dQ_i(\cdot|B^{i-1},A_{i-L}^i)}{d\Pi_i^{ \pi^{A.L}}(\cdot|B^{i-1})}(B_i)\Big)
\Big\}  \label{cor-ISR_25a_a_c}
\end{align}
where 
\begin{align}
  \Pi_i^{\pi^{A.L}}(db_i | b^{i-1}) =& \int_{  {\mb A}_{i-L}^{i} }   Q_i(db_i |b^{i-1}, a_{i-L}^i) \otimes   {\pi}_i^{A.L}(da_i | a_{i-L}^{i-1}, b^{i-1})  \otimes {\bf P}^{\pi^{A.L}}(da_{i-L}^{i-1} | b^{i-1})  , \label{cor-ISR_31_c}\\
  {\bf P}^{\pi^{A.L}}(da^i,  d b^{i})=& \otimes_{i=0}^n \Big( Q_i(db_i|b^{i-1}, a_{i-L}^i) \otimes \pi_i^{A.L}(da_i|a_{i-L}^{i-1}, b^{i-1})\Big), \hso  i=0, \ldots, n \label{cor-ISR_32_c} \\
{\bf P}^{\pi^{A.L}}(da_{i-L}^{i-1}|b^{i-1}) =& \tilde{T}_{i-1}\Big(b_{i-1}, b^{i-2}, \pi_{i-1}^{A.L}(\cdot|\cdot, b^{i-2}), {\bf P}^{\pi^{A.L}}(\cdot|b^{i-2}) \Big)(da_{i-L}^{i-1}),\hso i=1, \ldots, n, \label{A_POST_1_n}\\
{\bf P}^{\pi^{A.L}}(da_{-L}^{-1}|b^{-1})=&\mbox{given}  \label{A_POST_1_nn}
\end{align}
 and the initial data ${\cal I}_0^P$  are specified by ${\cal I}_0^P\tri \{A_{-L}^{-1}, B^{-1}\}$ (or any other convention, ie.,  ${\cal I}_0^P\tri \{B^{-1}\}$). \\
(b) With Transmission Cost. Consider the average transmission cost constraint defined by (\ref{cor-ISR_29_cc_CC})   
%\begin{align}
%{\cal P}_{[0,n]}^{A}(\kappa) \tri     \Big\{   P_i(da_i | a^{i-1}, b^{i-1}),  i=0, 1, \ldots, n:   \frac{1}{n+1} {\bf E}^{P}\Big(\sum_{i=0}^n \gamma_i^{A.N}(A_{i-N}^i, B^{i})\Big) \leq \kappa  \Big\}\subset {\cal P}_{[0,n]} \label{cor-ISR_29_cc_CC}
%\end{align}
and suppose the following condition holds.
\begin{align}
%&(i) \hso  \gamma_i(T^ia^n,T^ib^{n-1})=\gamma_i^{A.N}(a_{i-N}^i, b^{i}), \hso i=0, \ldots, n; \label{cor-ISR_29_c} \\
\sup_{ {\cal P}_{[0,n]}^A(\kappa) }& I(A^{n}
\rar {B}^{n}) = \inf_{\lambda\geq 0} \sup_{ \big\{ P_i(da_i|a^{i-1}, b^{i-1}): i=0,\ldots, n\big\}\in {\cal P}_{[0,n]}   } \Big\{ I(A^{n}
\rar {B}^{n})  \nonumber \\
&\hst \hst - \lambda \Big\{ {\bf E}^P\Big(\sum_{i=0}^n \gamma_i^{A.N}(A_{i-N}^i,B^{i})\Big)-\kappa(n+1) \Big\}\Big\} \label{ISDS_6cc}
\end{align}  
where  $\lambda$ is the Lagrange multiplier associated with the transmission cost constraint.\\
The  maximization of $I(A^n \rar B^n)$ defined by (\ref{cor-ISR_25a_new}) over ${\cal P}_{[0,n]}^A(\kappa)$    occurs in the subset $\overline{\cal P}_{[0,n]}^{A.I}(\kappa)$ defined by (\ref{cor-ISR_29_cc}) 
%\begin{align}
%\overline{\cal P}_{[0,n]}^{A.I}(\kappa) \tri     \Big\{   & P_i(da_i | a^{i-1}, b^{i-1})= \pi_i^{A.I}(da_i|a_{i-I}^{i-1}, b^{i-1})-a.a. (a^{i-1}, b^{i-1}),  i=0, 1, \ldots, n:  \nonumber \\
%& \frac{1}{n+1} {\bf E}^{\pi^{A.I}}\Big(\sum_{i=0}^n \gamma_i^{A.N}(A_{i-N}^i, B^{i})\Big) \leq \kappa  \Big\}\subset {\cal P}_{[0,n]}^A(\kappa), \hso I \tri \max\{L, N\}. \label{cor-ISR_29_cc}
%\end{align}
and the characterization of FTFI capacity is given by the following expression.
\begin{align}
{C}_{A^n \rar B^n}^{FB,A.I}
= \sup_{\big\{\pi_i^{A.I}(da_i |a_{i-I}^{i-1}, b^{i-1}), i=0,\ldots,n: \frac{1}{n+1} {\bf E}^{\pi^{A.I}}\Big( \sum_{i=0}^n \gamma_{i}^{A.N}(A_{i-N}^i, B^{i})\Big) \leq \kappa   \big\}} \sum_{i=0}^n {\bf E}^{ \pi^{A.I}}\Big\{
\log\Big(\frac{dQ_i(\cdot|B^{i-1},A_{i-L}^i)}{d\Pi_i^{ \pi^{A.I}}(\cdot|B^{i-1})}(B_i)\Big)
\Big\}  \label{cor-ISR_25a_a_c_N}
\end{align}
 where the joint and marginal distributions  are 
% induced by  $\{Q_i(db_i|a_{i-L}^{i-1}, b^{i-1}), {\pi }_i^{A.I}(da_{i}|a_{i-I}^{i-1},b^{i-1}): i=0, \ldots, n\big\}$, as follows.
given by 
\begin{align}
  \Pi_i^{\pi^{A.I}}(db_i | b^{i-1}) =& \int_{  {\mb A}_{i-L}^{i} }   Q_i(db_i |b^{i-1}, a_{i-L}^i) \otimes   {\pi}_i^{A.I}(da_i | a_{i-I}^{i-1}, b^{i-1}) \otimes {\bf P}^{\pi^{A.I}}(da_{i-I}^{i-1}|  b^{i-1}), \label{cor-ISR_31_c_B}\\
  {\bf P}^{\pi^{A.I}}(da^i,  d b^{i})=& \otimes_{i=0}^n\Big( Q_i(db_i|b^{i-1}, a_{i-L}^i) \otimes \pi_i^{A.I}(da_i|a_{i-I}^{i-1}, b^{i-1})\Big), \hso  i=0, \ldots, n \label{cor-ISR_32_c_B} 
 \end{align} 
 and the \^a posteriori distribution satisfies a recursion similar to (\ref{A_POST_1_n}), (\ref{A_POST_1_nn}), and  the initial data are specified by the convention used.
\end{theorem}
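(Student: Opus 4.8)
The plan is to establish part (a) first and then to obtain part (b) by applying the same device to the Lagrangian in (\ref{ISDS_6cc}). For the first claim of (a), the work has essentially been carried out in the discussion preceding Theorem~\ref{cor-ISR_VE}: identities (\ref{joint_p})--(\ref{cor-ISR_25a_new_new_1}) exhibit $\Pi_i^P(db_i|b^{i-1})$, the running pay-off $\ell_i^P$, and hence $I(A^n\rar B^n)$ itself, as functionals of a channel input distribution $P\in{\cal P}_{[0,n]}$ only through the reduced object $\{\pi_j^{A.L}(da_j|a_{j-L}^{j-1},b^{j-1})\tri {\bf P}^P(da_j|a_{j-L}^{j-1},b^{j-1}):j=0,\ldots,n\}\in\overline{\cal P}_{[0,n]}^{A.L}$. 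Thus the marginalization $P\mapsto\{\pi_j^{A.L}\}$ maps ${\cal P}_{[0,n]}$ into $\overline{\cal P}_{[0,n]}^{A.L}$ without changing $I(A^n\rar B^n)$, so $\sup_{{\cal P}_{[0,n]}}I(A^n\rar B^n)=\sup_{\overline{\cal P}_{[0,n]}^{A.L}}I(A^n\rar B^n)$, which is (\ref{cor-ISR_25a_a_c}); formulas (\ref{cor-ISR_31_c})--(\ref{cor-ISR_32_c}) and the recursion (\ref{A_POST_1_n})--(\ref{A_POST_1_nn}) then follow by restricting (\ref{CIS_9a_new_1}), (\ref{CIS_9a_new}) and (\ref{A_POST_1}) to $\overline{\cal P}_{[0,n]}^{A.L}$, i.e.\ by integrating out the coordinates of $A^{i-1}$ outside $A_{i-L}^{i-1}$.

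For the alternative, stochastic-optimal-control derivation advertised above, I would first apply Theorem~\ref{cor-ISR_VE}(a) to write $\sup_{{\cal P}_{[0,n]}}I(A^n\rar B^n)=\sup_{{\cal P}_{[0,n]}}\inf_{{\cal V}_{[0,n]}}\sum_{i=0}^n{\bf E}^P\{\log(dQ_i(\cdot|B^{i-1},A_{i-L}^i)/dV_i(\cdot|B^{i-1}))(B_i)\}$. The purpose of this substitution is that, for a \emph{fixed} arbitrary $\{V_i\}\in{\cal V}_{[0,n]}$, the integrand becomes an \emph{external}, control-object-independent, additive running reward $r_i^V(a_{i-L}^i,b^{i-1})\tri\int_{{\mb B}_i}\log(dQ_i(\cdot|b^{i-1},a_{i-L}^i)/dV_i(\cdot|b^{i-1}))(b_i)\,Q_i(db_i|b^{i-1},a_{i-L}^i)$, so the obstruction recorded under ``Feature~1'' disappears. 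I would then identify $X_i\tri(A_{i-L}^{i-1},B^{i-1})$ as the controlled state: since $B_i\sim Q_i(\cdot|B^{i-1},A_{i-L}^i)$ depends on the past only through $(X_i,A_i)$, the process $\{X_i\}$ is controlled Markov, ${\bf P}(X_{i+1}\in\cdot|X^i,A^i)={\bf P}(X_{i+1}\in\cdot|X_i,A_i)$, and $r_i^V$ is a measurable function of $(X_i,A_i)$. The classical reduction recalled in the ``Feature~1'' discussion (\cite{hernandezlerma-lasserre1996,kumar-varayia1986}) then gives, for each fixed $\{V_i\}$, that the supremum over ${\cal P}_{[0,n]}$ of $\sum_i{\bf E}^P\{r_i^V(A_{i-L}^i,B^{i-1})\}$ is attained at a Markov policy $\pi_i(da_i|X_i)=\pi_i^{A.L}(da_i|a_{i-L}^{i-1},b^{i-1})\in\overline{\cal P}_{[0,n]}^{A.L}$; taking $V_i=\Pi_i^{\pi^{A.L}}(db_i|b^{i-1})$, which Theorem~\ref{cor-ISR_VE}(a) identifies as the minimizer, recovers (\ref{cor-ISR_25a_a_c}), in agreement with the direct computation.

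For part (b) I would run the same argument on the Lagrangian. Under assumption (\ref{ISDS_6cc}) it suffices, for each fixed $\lambda\ge0$, to show that $\sup_{{\cal P}_{[0,n]}}\{I(A^n\rar B^n)-\lambda\sum_{i=0}^n{\bf E}^P\gamma_i^{A.N}(A_{i-N}^i,B^i)\}$ is attained in the class of policies measurable with respect to $(A_{i-I}^{i-1},B^{i-1})$, $I\tri\max\{L,N\}$, intersected with ${\cal P}_{[0,n]}^A(\kappa)$, that is, in $\overline{\cal P}_{[0,n]}^{A.I}(\kappa)$. Applying Theorem~\ref{cor-ISR_VE}(a) to the directed-information term and integrating $B_i\sim Q_i(\cdot|B^{i-1},A_{i-L}^i)$ out of the cost term, the running reward becomes $r_i^V(a_{i-L}^i,b^{i-1})-\lambda\int_{{\mb B}_i}\gamma_i^{A.N}(a_{i-N}^i,b^{i-1},b_i)\,Q_i(db_i|b^{i-1},a_{i-L}^i)$, a measurable function of $(a_{i-I}^i,b^{i-1})$; hence the controlled Markov state must be enlarged to $X_i\tri(A_{i-I}^{i-1},B^{i-1})$ and the same reduction produces an optimal policy $\pi_i^{A.I}(da_i|a_{i-I}^{i-1},b^{i-1})$. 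The characterization (\ref{cor-ISR_25a_a_c_N}), together with (\ref{cor-ISR_31_c_B})--(\ref{cor-ISR_32_c_B}) and the \'a posteriori recursion, then follows exactly as in part (a), with $L$ replaced by $I$ in the conditioning of the input distribution.

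The main obstacle is the first step of the second route: the directed-information running pay-off genuinely depends on the control object, so the classical Markov-decision reductions do not apply verbatim (as stressed under ``Features~1 and~2''), yet we wish to use precisely those reductions. The variational equality of Theorem~\ref{cor-ISR_VE}(a) is what dissolves this, by trading $I(A^n\rar B^n)$ for a $\{V_i\}$-indexed family of genuine additive-cost control problems with external rewards; the remaining delicate point is to verify that the structural conclusion survives the outer $\inf_{{\cal V}_{[0,n]}}$, which it does because the minimizing $V_i=\Pi_i^{\pi^{A.L}}$ (resp.\ $\Pi_i^{\pi^{A.I}}$) is itself a functional of a policy in the reduced class, consistently with the first-principles identity (\ref{cor-ISR_25a_new_new_1}).
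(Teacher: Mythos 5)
Your proposal is correct and matches the paper's treatment: your first route is precisely the paper's ``first-principles'' derivation preceding the theorem (the recursions (\ref{rec_g1})--(\ref{A_POST_1}) showing that the \'a posteriori and output distributions, and hence $I(A^n\rar B^n)$, depend on $P$ only through the marginals $\pi_i^{A.L}$, culminating in (\ref{cor-ISR_25a_new_new_1})), while your second route is the paper's formal proof --- variational equality to externalize the reward, Markov-decision reduction on the state $(A_{i-L}^{i-1},B^{i-1})$ (augmented to $(A_{i-I}^{i-1},B^{i-1})$ for the Lagrangian in part (b)), evaluation at $V_i=\Pi_i^{\pi^{A.L}}$, and the reverse inequality by restriction to $\overline{\cal P}_{[0,n]}^{A.L}$. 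The delicate point you flag --- that the policy-dependent choice $V_i=\Pi_i^{\pi^{A.L}}$ is substituted into a bound established for each fixed $V$ --- is handled no more rigorously in the paper's own proof, which is presumably why the paper supplies both derivations.
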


\begin{proof} First, we show the pay-off is a functional of a certain process, called the state process and then we show that the state process is Markov given the past values of the state process and the past values of the channel inputs. Basically, we re-formulate the optimization problem so that the state and control processes satisfy  Feature 1, (i), (ii),  discussed below  (\ref{SOC_1}).  \\
(a) Recall (\ref{cor-ISR_25a_new}). By applying the  re-conditioning property of expectation, we obtain the following identities. 
 \begin{align}
I(A^n\rar B^n)  
 = &\sum_{i=0}^{n}  {\bf E}^{ P} \bigg\{    \log\bigg(\frac{dQ_i(\cdot| B^{i-1}, A_{i-L}^i)}
  {d\Pi_i^{ P}(\cdot | B^{i-1})}(B_i)\bigg)\bigg\}\label{CIS_6c_a}\\
 =& \sum_{i=0}^{n}  {\bf E}^{ P} \bigg\{  {\bf E}^{ P}\Big\{\log \Big(\frac{dQ_i(\cdot| B^{i-1},A_{i-L}^i)}
  {\Pi_i^{ P}(\cdot| B^{i-1})}(B_i)\Big)\bigg{|}A^i, B^{i-1} \Big\}\bigg\}\label{CIS_6d_a}   \\
  =& \sum_{i=0}^{n}  {\bf E}^{ P}\bigg\{   {\bf E}^{ P}\Big\{\log \Big(\frac{dQ_i(\cdot| B^{i-1},A_{i-L}^i)}
  {d\Pi_i^{ P}(\cdot| B^{i-1})}(B_i)\Big)\bigg{|}A_{i-L}^i, B^{i-1}\Big\}\bigg\}\label{CIS_6e_a}\\
=& \sum_{i=0}^{n} {\bf E}^{ P}\bigg\{  \ell_i^{ P}\big(A_i, \overline{S}_i\big)\bigg\}, \hst \overline{S}_i \tri (A_{i-L}^{i-1}, B^{i-1}), \hso S_i\tri B^{i-1}, \; i=0, \ldots, n,  \label{CIS_6f_a}\\
 \ell_i^{ P}\big(a_i, \overline{s}_i \big) \equiv & \ell_i^{ P}\big(a_i,a_{i-L}^{i-1}, s_{i} \big) = \int_{ {\mb  B}_i } \log \Big(\frac{Q_i(db_i| \overline{s}_i, a_i)}
  {\Pi_i^{ P}(db_i| s_{i})}\Big) Q_i(db_i| \overline{s}_i, a_i), \hso i=0, \ldots, n \label{BAM31_a}
  \end{align}
where (\ref{CIS_6e_a}) is due to the channel conditional independence property (\ref{cor-ISR_29}).
Hence, 
\begin{align}
\sup_{ {\cal P}_{[0,n]}}  \sum_{i=0}^n {\bf E}^{ P}\Big\{
\log\Big(\frac{dQ_i(\cdot|B^{i-1},A_{i-L}^i)}{d\Pi_i^{ P}(\cdot|B^{i-1})}(B_i)\Big)
\Big\}
 =\sup_{ {\cal P}_{[0,n]}    }\sum_{i=0}^{n} {\bf E}^{ P}\bigg\{  \ell_i^{P}\Big(A_i, \overline{S}_i\Big)\bigg\}. \label{SC_1A}
\end{align}
The pay-off functional $\big\{\ell_i^{ P}(a_i, \overline{s}_i): i=0, \ldots, n\big\}$ defined by (\ref{BAM31_a}) depends on $\big\{\overline{s}_i \tri (a_{i-L}^{i-1},b^{i-1}): i=0, 1, \ldots,n\big\}$, called the state process, via the channel distribution dependence on these variables, and the control object  $\big\{g_i(a^{i-1}, b^{i-1})\tri P_i(da_i|a^{i-1}, b^{i-1}): i=0, \ldots, n\big\}$, via  $\{\xi_i^P(b^{i-1})\tri \Pi_i^P(db_i|b^{i-1}): i=0, \ldots, n\}$. \\
Next, we give a different derivation than the one given earlier.
%, which was  based on the process $\{S_i: i=0, \ldots, n\}$.
For  each $i$, we can easily show, using Bayes' theorem, and  the property of the channel distribution, that  the conditional distribution of the state $\overline{S}_{i+1}$ given $\{\overline{S}^{i}, A^i\}$  is Markov, i.e.,  the following conditional independence holds. 
\begin{align}
{\bf P}(d\overline{s}_{i+1}|\overline{s}^i, a^i)= {\bf P}(d\overline{s}_{i+1}|\overline{s}_i, a_i), \hso i=0, \ldots, n-1. \label{ext_MP1}
\end{align}
Hence, $\Big\{{\bf P}(d\overline{s}_{i+1}|\overline{s}_i, a_i): i=0, \ldots, n-1\Big\}$ is the controlled object, i.e.,   $\{\overline{S}_i:  i=0, \ldots, n\}$ is the  controlled process,  control by the control process $\{A_i: i=0, \ldots, n\}$.\\
Note that if the pay-off function in  (\ref{SC_1A}), i.e., $\ell_i^{P}\Big(a_i, \overline{s}_i\Big)$, was fixed and independent of the channel input distribution $\{P_j(da_j|a^{j-1}, b^{j-1}): j=0, \ldots, i\}$, for $i=0, \ldots, n$,  then in view of the Markov property (\ref{ext_MP1}), it follows directly from  stochastic optimal control theory  \cite{hernandezlerma-lasserre1996} or \cite{kumar-varayia1986}, that the maximizing distribution occurs is the subset satisfying conditional independence $P_i(da_i|a^{i-1}, b^{i-1})={\bf P}(da_i|\overline{s}_i)\equiv \pi_i^{A.L}(da_i|a_{i-L}^{i-1}, b^{i-1})-a.a.(a^{i-1}, b^{i-1}), i=0, \ldots, n$. However, the dependence of the pay-off $\ell_i^{P}\Big(a_i, \overline{s}_i\Big)$ on $\{P_j(da_j|a^{j-1}, b^{j-1}): j=0, \ldots, i\}$ prevents us from using, directly stochastic optimal control theory, to establish this claim. \\
However,  we can bypass this technicality, by invoking the variational equalities of Theorem~\ref{cor-ISR_VE} to obtain  achievable upper bounds, when  the optimal control object satisfies $\big\{g_i(a^{i-1}, b^{i-1}) ={g}_i^{A.L}(a_{i-L}^{i-1}, b^{i-1})\equiv  \pi_i^{A.L}(da_i|a_{i-L}^{i-1}, b^{i-1}): i=0, \ldots, n\big\}$.
\begin{comment}
Using the definition of the  channel we obtain the following identity. 
\begin{align}
{\bf P}(ds_{i+1}, da_{i-L+1}^i|s^i, a_{i-L}^{i-1}) = {\bf P}(d s_{i+1}, da_{i-L+1}^i|s_i, a_{i-L}^{i-1}) , \hst  i=0, \ldots, n-1. \label{AN3_thm_9a}
\end{align}
\end{comment}
  Consider the set of arbitrary distributions  ${\cal V}_{[0,n]} \tri \big\{ V_i(db_i|b^{i-1})\in {\cal M}_i({\mb B}_i): i=0, \ldots, n\big\}$ and define the pay-off function 
\begin{align}
\ell_i^{V}\big(a_i, \overline{s}_i \big)  = \int_{ {\mb  B}_i } \log \Big(\frac{Q_i(db_i| \overline{s}_i, a_i)}
  {V_i(db_i| s_{i})}\Big) Q_i(db_i| \overline{s}_i, a_i), \hso i=0, \ldots, n.\label{equation15a_VE_300_b}
\end{align}
By virtue of  (\ref{equation15a_VE_1}), identity (\ref{SC_1A}), and inequality $\sup \inf \{ \cdot \} \leq \inf \sup \{\cdot\}$ we obtain the following upper bound.
\begin{align}
\sup_{ {\cal P}_{[0,n]}   }  \sum_{i=0}^n {\bf E}^{ P}\Big\{
\log\Big(\frac{dQ_i(\cdot|B^{i-1},A_{i-L}^i)}{d\Pi_i^{ P}(\cdot|B^{i-1})}&(B_i)\Big)
\Big\}   = \sup_{ {\cal P}_{[0,n]}    } \inf_{{\cal V}_{[0,n]}} \sum_{i=0}^{n} {\bf E}^{ P}\Big\{  \ell_i^{V}\Big(A_i, \overline{S}_i\Big)\Big\} \label{von_n_1}  \\
\leq & \inf_{{\cal V}_{[0,n]}}  \sup_{ {\cal P}_{[0,n]}    } \sum_{i=0}^{n} {\bf E}^{ P}\Big\{  \ell_i^{V}\Big(A_i, \overline{S}_i\Big)\Big\} \label{von_n_2} \\
\leq & \sup_{ {\cal P}_{[0,n]} } \sum_{i=0}^{n} {\bf E}^{ P}\Big\{  \ell_i^{V}\big(A_i, \overline{S}_i\big)\Big\}, \hso \forall \; V_i(db_i|b^{i-1})\in {\cal M}({\mb B}_i),  i=0, \ldots, n  \label{equation15a_VE_300}
\end{align}
Since the pay-off functions $\big\{\ell_i^{V}\big(a_i, \cdot\big): i=0, \ldots, n\big\}$ depend on $\big\{ \overline{s}_i=(a_{i-L}^{i-1}, b^{i-1}): i=0, \ldots, n\big\}$ (and  also $\zeta_i(b^{i-1})\tri V_i(db_i|b^{i-1})\in {\cal M}({\mb B}_i)$, whose information is already included in $\overline{s}_i$),  and the controlled object is Markov, i.e., (\ref{ext_MP1}) holds, 
then by  making use of recursions (\ref{rec_g1})-(\ref{rec_g3}) or applying the standard results of Markov decision of  stochastic optimal control theory  \cite{hernandezlerma-lasserre1996}, then the maximizing distribution in the right hand side of   (\ref{equation15a_VE_300}) occurs in the set $\overline{\cal P}_{[0,n]}^{A.L}$, 
 defined by (\ref{FRD_1}). Hence,  the following upper bound  is obtained.
\begin{align}
\sup_{ {\cal P}_{[0,n]}   }  \sum_{i=0}^n {\bf E}^{ P}\Big\{
\log\Big(\frac{dQ_i(\cdot|B^{i-1},A_{i-L}^i)}{d\Pi_i^{ P}(\cdot|B^{i-1})}(B_i)\Big)
\Big\} 
\leq & \sup_{\big\{\pi_i^{A.L}(da_i |a_{i-L}^{i-1}, b^{i-1}): i=0,\ldots,n\big\}} \sum_{i=0}^{n} {\bf E}^{ \pi^{A.L}}\Big\{  \ell_i^{V}\big(A_i, \overline{S}_i\big)\Big\},    \label{equation15a_VE_30_d}\\
\forall & \; V_i(db_i|b^{i-1})\in {\cal M}({\mb B}_i), i=0, \ldots, n \nonumber
\end{align}
where ${\bf E}^{\pi^{A.L}}$ means expectation with respect to joint distribution (\ref{cor-ISR_32_c}). 
Next, we evaluate the upper bound  (\ref{equation15a_VE_30_d}) at   $V_i(db_i|b^{i-1}) = \Pi_i^{\pi^{A.L}}(db_i | b^{i-1}), i=0, \ldots, n$,  defined by  (\ref{cor-ISR_31_c}), which implies 
\begin{align}
\ell_i^{V}\big(a_i, \overline{s}_i \big)\Big|_{V=\Pi^{\pi^{A.L}}} \equiv  \ell_i^{ \pi^{A.L}}\big(a_i,a_{i-L}^{i-1}, s_{i} \big) \tri \int_{ {\mb  B}_i } \log \Big(\frac{Q_i(db_i| \overline{s}_i, a_i)}
  {\Pi_i^{ \pi^{A.L}}(db_i| s_{i})}\Big) Q_i(db_i| \overline{s}_i, a_i), \hso i=0, \ldots, n \label{equation15a_VE_30_b}
\end{align}
to obtain the following upper bound.
\begin{align}
&\sup_{ {\cal P}_{[0,n]}   }  \sum_{i=0}^n {\bf E}^{ P}\Big\{
\log\Big(\frac{dQ_i(\cdot|B^{i-1},A_{i-L}^i)}{d\Pi_i^{ P}(\cdot|B^{i-1})}(B_i)\Big)
\Big\} 
\leq\sup_{\big\{\pi_i^{A.L}(da_i |a_{i-L}^{i-1}, b^{i-1}): i=0,\ldots,n\big\}      } \sum_{i=0}^{n} {\bf E}^{ \pi^{A.L}}\Big\{  \ell_i^{\pi^{A.L}}\big(A_i, \overline{S}_i\big)\Big\}   \label{equation15a_VE_30} \\
&  = \sup_{\big\{\pi_i^{A.L}(da_i |a_{i-L}^{i-1}, b^{i-1}): i=0,\ldots,n\big\}} \sum_{i=0}^n {\bf E}^{ \pi^{A.L}}\Big\{
\log\Big(\frac{dQ_i(\cdot|B^{i-1},A_{i-L}^i)}{d\Pi_i^{ \pi^{A.L}}(\cdot|B^{i-1})}(B_i)\Big)
\Big\} \equiv        {C}_{A^n \rar B^n}^{FB,A.L}. \label{equation102_VE_31}
\end{align}
Note that any other choice of $V_i(db_i|b^{i-1})$, other than $V_i(db_i|b^{i-1}) = \Pi_i^{\pi^{A.L}}(db_i | b^{i-1}), i=0, \ldots, n$ will not be consistent with the joint distribution induced by the channel distribution and $\{\pi_i^{A.L}(da_i |a_{i-L}^i, b^{i-1}): i=0, \ldots, n\}$, i.e., the distribution over which the expectation is taken in (\ref{equation15a_VE_30_d}). \\
The reverse inequality can be shown  by restricting the maximization in (\ref{SC_1A}) to the subset $\overline{\cal P}_{[0,n]}^{A.L} \subset {\cal P}_{[0,n]}$, which then implies the joint and transition probability distribution  of the channel output process are given by 
(\ref{cor-ISR_31_c}) and (\ref{cor-ISR_32_c}), and consequently the reverse inequality is obtained.\\
 We can also show the reverse inequality via an application of  variational equality (\ref{equation15a_VE_2}). We do so to illustrate the power of variational equalities. By virtue of (\ref{equation15a_VE_2}), and  by removing the supremum over $\big\{S_i(db_i|b^{i-1}, a^{i-1})\otimes R_i(da_i|a^{i-1}, b^i): i=0, \ldots, n\big\}$, and setting  
\begin{align}
S_i(db_i|b^{i-1},a^{i-1})\otimes& {R}_i(da_i|a^{i-1},b^i)= \frac{dP_i(\cdot|a^{i-1},b^{i-1})}{ d\pi^{A.L}(\cdot|a_{i-L}^{i-1}, b^{i-1})}(a_i) \frac{ d\Pi^P(\cdot|b^{i-1})}{d\Pi_i^{\pi^{A.L}}(\cdot|b^i)}(b_i) \nonumber \\
&.  \pi_i^{A.L}(da_i|a_{i-L}^{i-1}, b^{i-1}) \otimes dQ_i(db_i|b^{i-1},a_{i-L}^{i-1}), \hso a.s.,~i=0,1,\ldots,n.\label{equation102_VE_3}
\end{align}
where $\big\{\Pi_i^{\pi^{A.L}}(b_i|b^i): i=0, \ldots, n\big\}$ is given by  (\ref{cor-ISR_31_c}), then  the following  lower  bound is obtained.
\begin{align}
 \sup_{  {\cal P}_{[0,n]}   }  \sum_{i=0}^n {\bf E}^{ P}\Big\{
\log\Big(\frac{dQ_i(\cdot|B^{i-1},A_{i-L}^i)}{d\Pi_i^{ P}(\cdot|B^{i-1})}(B_i)\Big)
\Big\}  \geq \sup_{ {\cal P}_{[0,n]} } \sum_{i=0}^{n} {\bf E}^{ P}\Big\{  \ell_i^{\pi^{A.L}}\Big(A_i, \overline{S}_i\Big)\Big\}    .  \label{equation15a_VE_40}
\end{align}
Since for each $i$, the pay-off $\ell_i^{\pi^{A.L}}\Big(a_i,\cdot\Big)$ depends on $\overline{s}$, for $i=0, \ldots, n$, and the controlled object is Markov, i.e., (\ref{ext_MP1}) holds, then from Markov decision theory \cite{kumar-varayia1986}, or by  making use of recursions (\ref{rec_g1})-(\ref{rec_g3}), the supremum in the right hand side of (\ref{equation15a_VE_40}) occurs in $\overline{\cal P}_{[0,n]}^{A.L}$ and   the following lower bound is obtained. 
\begin{align}
\sup_{ {\cal P}_{[0,n]}}  \sum_{i=0}^n {\bf E}^{ P}\Big\{
\log\Big(\frac{dQ_i(\cdot|B^{i-1},A_{i-L}^i)}{d\Pi_i^{ P}(\cdot|B^{i-1})}(B_i)\Big)
\Big\} 
\geq {C}_{A^n \rar B^n}^{FB,A.L}. \label{equation102_VE_5}
\end{align}

 Combining (\ref{equation102_VE_31}) and (\ref{equation102_VE_5}) we establish the claims in (a). \\
(b)  Since by condition (\ref{ISDS_6cc}), the constraint problem is equivalent to an unconstraint problem,  we repeat the steps in (a), for   the augmented pay-off given by the following expression. 
\begin{align}
I(A^n \rar B^n)-\lambda  {\bf E}^P\Big(\sum_{i=0}^n \gamma_i^{A.N}(A_{i-N}^i,B^{i})\Big)= \sum_{i=0}^n {\bf E}^P \Big\{ 
\log\Big(\frac{dQ_i(\cdot|B^{i-1}, A_{i-L}^i)}{d\Pi_i^{ P}(\cdot|B^{i-1})}(B_i)\Big) -\lambda \gamma_i^{A.N}(A_{i-N}^i,B^{i}) \Big\}. \label{aug_CC_AA}
\end{align}
Note that the term $\lambda  (n+1)\kappa$ is not included, because it does not affect the derivation of information structures of optimal channel input conditional distribution. Similarly as in the unconstraint case, we have the following.
\begin{align}
&I(A^n \rar B^n)-\lambda  {\bf E}^P\Big(\sum_{i=0}^n \gamma_i^{A.N}(A_{i-N}^i,B^{i})\Big)\label{CIS_6d_a_CC_AAA}  \\
 =& \sum_{i=0}^{n}  {\bf E}^{ P} \bigg\{  {\bf E}^{ P}\Big\{\log \Big(\frac{dQ_i(\cdot| B^{i-1},A_{i-L}^i)}
  {\Pi_i^{ P}(\cdot| B^{i-1})}(B_i)\Big) -\lambda \gamma_i^{A.N}(A_{i-N}^i,B^{i})       \bigg{|}A^i, B^{i-1}    \Big\}\bigg\}\label{CIS_6d_a_CC_AA}   \\
%  =& \sum_{i=0}^{n}  {\bf E}^{ P}\bigg\{   {\bf E}^{ P}\Big\{\log \Big(\frac{dQ_i(\cdot| B^{i-1},A_{i-L}^i)}
%  {d\Pi_i^{ P}(\cdot| B^{i-1})}(B_i)\Big)\bigg{|}A_{i-L}^i, B^{i-1}\Big\}   -s  \gamma_i^{A.N}(A_{i-N}^i,B^{i})        \bigg\}\label{CIS_6e_a_CC_AA}\\
%=& \sum_{i=0}^{n} {\bf E}^{ P}\bigg\{  \ell_i^{ P}\big(A_i,A_{i-L}^{i-1}, B^{i-1}\big) -\lambda  \gamma_i^{A.N}(A_{i-N}^i,B^{i-1})        \bigg\} \label{CIS_6f_a_CC_AA}\\
=& \sum_{i=0}^{n} {\bf E}^{ P}\bigg\{  \overline{\ell}_i^{ P}\big(A_i,\widehat{S}_i\big) \bigg\}, \hst \widehat{S}_i \tri (A_{i-I}^{i-1}, B^{i-1}),\hso \overline{S}_i\tri (A_{i-L}^{i-1},B^{i-1}),  \hso S_i\tri B^{i-1}, \hso I\tri \max\{L, N\}\; i=0, \ldots, n,  \label{CIS_6f_a_CC_N_AA}
\end{align}
where 
\begin{align}
 \overline{\ell}_i^{ P}\big(a_i, \widehat{s}_i \big)  = \int_{ {\mb  B}_i } \Big[ \log \Big(\frac{Q_i(db_i|\overline{s}_i, a_{i})}
  {\Pi_i^{ P}(db_i|s_i)}\Big)- \lambda  \gamma_i^{A.N}(A_{i-N}^i,B^{i})    \Big]Q_i(db_i|\overline{s}_{i}, a_i), \hso i=0, \ldots, n. \label{BAM31_a_CCC_AA}
  \end{align}
Note that unlike  part (a), the augmented pay-off function $\big\{\overline{\ell}_i^{ P}\big(a_i, \cdot\big): i=0, \ldots, n\big\}$ defined by (\ref{BAM31_a_CCC_AA})  depends on  $\big\{\widehat{s}_i=(a_{i-I}^{i-1},b^{i-1}): i=0, \ldots, n\big\}$, via the channel and cost function,  and that if $I= L$ then $\hat{s}_i=\overline{s}_i, i=0, \ldots, n$ (same as in (a)).  \\
It is easy to verify that the variational equalities of Theorem~\ref{thm-var} (see Theorem~IV.1 in \cite{charalambous-stavrou2013aa}) are also valid, when transmission cost constraints are imposed. Thus, Theorem~\ref{cor-ISR_VE} holds, with the supremum over    ${\cal P}_{[0,n]}$ replaced by  ${\cal P}_{[0,n]}^A(\kappa)$. Note that if $I=L$, the optimal channel input distribution has exactly the same  form as in  part (a).\\
Keeping in mind the dependence, for each $i$, of  the unconstraint pay-off  function $\overline{\ell}_i^{ P}(a_i, \cdot)$  on  $\widehat{s}_i=(a_{i-I}^{i-1},b^{i-1})$, for $i=0, \ldots, n$,  we    repeat the derivation of the upper bound in (a), by invoking the first variational equality, and then  remove the infimum, to deduce
\begin{align}
\sup_{  {\cal P}_{[0,n]}^A(\kappa)}  \sum_{i=0}^n {\bf E}^{ P}\Big\{
\log\Big(\frac{dQ_i(\cdot|B^{i-1},A_{i-L}^i)}{d\Pi_i^{ P}(\cdot|B^{i-1})}(B_i)\Big)
\Big\}  \leq\sup_{  {\cal P}_{[0,n]}^A(\kappa)} \sum_{i=0}^{n} {\bf E}^{ P}\Big\{  \ell_i^{V}\big(A_i,\overline{S}_i\big)\Big\}  . \label{equation15a_VE_30dd_new}
\end{align}
Further, by setting  $V_i(db_i|b^{i-1}) = \Pi_i^{\pi^{A.I}}(db_i | b^{i-1}), i=0, \ldots, n$  defined by  (\ref{cor-ISR_31_c_B}),  
%then  the following analog of the upper bounds (\ref{equation15a_VE_30}), (\ref{equation102_VE_31}) are obtained. 
%\begin{align}
%\sup_{  {\cal P}_{[0,n]}^A(\kappa)}  \sum_{i=0}^n {\bf E}^{ P}\Big\{
%\log\Big(\frac{dQ_i(\cdot|B^{i-1},A_{i-L}^i)}{d\Pi_i^{ P}(\cdot|B^{i-1})}(B_i)\Big)
%\Big\}  \leq\sup_{  {\cal P}_{[0,n]}^A(\kappa)} \sum_{i=0}^{n} {\bf E}^{ P}\Big\{  \ell_i^{\pi^{A.I}}\big(A_i,\overline{S}_i\big)\Big\}   \label{equation15a_VE_30dd}
%\end{align}
%where 
%\begin{align}
%\ell_i^{\pi^{A.I}}\big(a_i, \overline{s}_i \big) = \int_{ {\mb  B}_i } \log \Big(\frac{Q_i(db_i|\overline{s}_i, a_i)}
%  {\Pi_i^{ \pi^{A.I}}(db_i| s_{i})}\Big) Q_i(db_i|\overline{s}_i, a_i), \hso  \Pi_i^{ \pi^{A.I}}(db_i| s_{i})=(\ref{cor-ISR_31_c_B}), \hso  i=0, \ldots, n.\label{equation15a_VE_30_bc}
%\end{align}
%Further, by stochastic optimal control theory 
the maximization of the right hand side of  (\ref{equation15a_VE_30dd_new})  occurs in the subset $\overline{\cal P}_{[0,n]}^{A.I}(\kappa) \subseteq {\cal P}_{[0,n]}^{A}(\kappa)$, hence the following upper bound. 
\begin{align}
\sup_{ {\cal P}_{[0,n]}^{A}(\kappa)   }  \sum_{i=0}^n {\bf E}^{ P}\Big\{
\log\Big(\frac{dQ_i(\cdot|B^{i-1},A_{i-L}^i)}{d\Pi_i^{ P}(\cdot|B^{i-1})}(B_i)\Big)
\Big\}  \leq {C}_{A^n \rar B^n}^{FB,A.I}(\kappa). \label{equation102_VE_31_dd}
\end{align}
From this point forward, by repeating the  derivation of part (a), if necessary, it is easy to deduce that  the information structure of the channel input distribution, which maximizes directed information, for each $i$, is ${\cal I}_i^P=\{a_{i- I}^{i-1}, b^{i-1}\}$, for $i=0,1, \ldots, n$, which then  implies (\ref{cor-ISR_25a_a_c_N})-(\ref{cor-ISR_32_c_B}).\\
We note that if $N >L$, the upper bound    corresponds to $\{\Pi_i^{\pi^{A.N}}(db_i | b^{i-1}): i=0, \ldots, n\}$ defined by (\ref{cor-ISR_31_c_B}), (with $I=N$), which  depends on the channel input conditional distribution $\{\pi_i^{A.N}(da_i|a_{i-N}^{i-1}, b^{i-1}): i=0, \ldots, n\}$. 
%However,  if instead another distribution is used, such as, $\{\Pi_i^{\pi^{A.L}}(db_i | b^{i-1}): i=0, \ldots, n\}$   or any other conditional distribution, which depends on  $A_{i-\overline{I}}^{i-1}$, for any $I < \overline{I} \leq i-1$, then the joint distribution of the corresponding upper bound  (\ref{equation102_VE_31_dd}) is (\ref{cor-ISR_32_c_B}), with $I=N$, in view of the transmission cost constraint, while  (\ref{cor-ISR_31_c_B}) will be inconsistent with the joint distribution.  This means the minimum information structure of the channel input distribution, which corresponds to the upper bound, for each $i$, is ${\cal I}_i^P=\{a_{i- I}^{i-1}, b^{i-1}\}$, for $i=0,1, \ldots, n$.  
This completes the prove.
\end{proof}

We make the following comments regarding the derivation of the theorem. \\

\begin{remark}(Comments on Theorem~\ref{cor-ISR})\\
%(a) As illustrated prior to the statement of the theorem and in the derivation, we do not need to apply the variational equality to determine the information structures of optimal channel input distributions for channel distribution of Class A  and transmission cost function of Class A. The conclusion follows directly from the standard theory of Markov Decision.  \\
(a) Recall the functional defined by (\ref{BAM52a_new}), below   Theorem~\ref{thm-var}, specialized to channel distribution Class A, with $\overrightarrow{Q}(db^n|a^n)$ replaced by  $\overrightarrow{Q}^A(db^n|a^n)\tri \otimes_{i=0}^n Q_i(db_i|b^{i-1}, a_{i-L}^i)$.   The pay-off functional in  (\ref{von_n_2}), is equivalent to 
\bea
{\mb I}(\overrightarrow{V}_{0,n}, \overleftarrow{P}_{0,n}, \overrightarrow{Q}_{0,n}^A)\tri \sum_{i=0}^{n} {\bf E}^{ P}\Big\{  \ell_i^{\overrightarrow{V}}(A_i, \overline{S}_i)\Big\}
\eea
 and this functional   is convex in $\overrightarrow{V}_{0,n}(db^n) \in {\cal M}({\mb B}_0^n)$ for fixed $\overleftarrow{P}_{0,n}(da^n|b^{n-1}) \in {\cal M}({\mb A}_0^n)$ (since the channel $\overrightarrow{Q}^A(db^n|a^n)\in {\cal }{\cal M}({\mb B}_0^n)$ is always fixed),  and concave in  $\overleftarrow{P}_{0,n}(da^n|b^{n-1})\in {\cal M}({\mb A}_0^n)$ for a fixed $\overrightarrow{V}_{0,n}(db^n) \in {\cal M}({\mb B}_0^n)$. Hence,  if we also impose sufficient conditions  so that ${\mb I}(\overrightarrow{V}_{0,n},  \overleftarrow{P}_{0,n}, \overrightarrow{Q}_{0,n}^A)$ is lower semicontinuous in  $\overrightarrow{V}_{0,n}(db^n) \in {\cal M}({\mb B}_0^n)$ and upper semicontinuous in  $\overleftarrow{P}_{0,n}(da^n|b^{n-1})\in {\cal M}({\mb A}_0^n)$, then the saddle point inequalities hold \cite{sion1958}, and we have   
\begin{align}
 \sup_{ {\cal P}_{[0,n]}    } \inf_{{\cal V}_{[0,n]}} \sum_{i=0}^{n} {\bf E}^{ P}\Big\{  \ell_i^{\overrightarrow{V}}(A_i, \overline{S}_i)\Big\}  = \inf_{{\cal V}_{[0,n]}}  \sup_{ {\cal P}_{[0,n]}    } \sum_{i=0}^{n} {\bf E}^{ P}\Big\{  \ell_i^{\overrightarrow{V}}(A_i, \overline{S}_i)\Big\}.  \label{von_n_3}
\end{align}
For the special case of finite alphabet spaces $\{({\mb A}_i, {\mb B}_i): i=\ldots, -1, 0, \ldots, n \}$, all conditions for validity of  (\ref{von_n_3}) hold. However, for countable or Borel spaces (i.e., continuous alphabet spaces) we need to impose conditions for upper and lower semicontinuity of the functional ${\mb I}(\overrightarrow{V}_{0,n},  \overleftarrow{P}_{0,n}, \overrightarrow{Q}_{0,n}^A)$. Such conditions are identified  in \cite{charalambous-stavrou2013aa} using the topology of weak convergence of probability distributions.
\end{remark}

In the next remark, we illustrate  that when the class A channel distributions  and transmission cost functions are specialized to  $L=N=0$,  then the last theorem gives as  degenerate case,  one of the information structures   derived \cite{kourtellaris-charalambousIT2015_Part_1}. Moreover,  for memoryless channels with feedback, we also illustrate that the  derivation based on variational equalities, gives an  alternative approach to derive the memoryless property of capacity achieving distribution, to the one given in \cite{cover-thomas2006},  which is based on first showing that feedback does not increase capacity. 

\ \

\begin{remark}(Fundamental differences between $I \neq 0$ and $I=0$)\\
\label{SC_REM}
(a) 
If $L=N=0$ then $I=0$,  which corresponds to a  channel distribution and transmission cost function,  that do not depend on past channel input symbols,  and  (\ref{cor-ISR_31_c_B}) and (\ref{cor-ISR_32_c_B}) are induced by the channel and channel input distribution $\{\pi_i^{A.0}(da_i|b^{i-1}): i=0, \ldots, n\big\}$, and all statements of Theorem~\ref{cor-ISR}, (b) specialize to one of the results derived in \cite{kourtellaris-charalambousIT2015_Part_1}, as follows. \\
The characterization of FTFI capacity is given by 
\begin{align}
{C}_{A^n \rar B^n}^{FB,A.0}
= \sup_{\overline{\cal P}_{[0,n]}^{A.0}(\kappa)} \sum_{i=0}^n {\bf E}^{ \pi^{A.0}}\Big\{
\log\Big(\frac{dQ_i(\cdot|B^{i-1},A_i)}{d\Pi_i^{ \pi^{A.0}}(\cdot|B^{i-1})}(B_i)\Big)
\Big\}  \label{cor-ISR_25a_a_c_N-SC}
\end{align}
where 
\begin{align}
\overline{\cal P}_{[0,n]}^{A.0}(\kappa) \tri \Big\{\pi_i^{A.0}(da_i |b^{i-1}), i=0,\ldots,n: \frac{1}{n+1} {\bf E}^{\pi^{A.0}}\Big( \sum_{i=0}^n \gamma_{i}^{A.0}(A_i, B^{i})\Big) \leq \kappa   \Big\}
\end{align}
and  the joint and marginal distributions  are 
% induced by  $\{Q_i(db_i|a_{i-L}^{i-1}, b^{i-1}), {\pi }_i^{A.I}(da_{i}|a_{i-I}^{i-1},b^{i-1}): i=0, \ldots, n\big\}$, as follows.
given by 
\begin{align}
  \Pi_i^{\pi^{A.0}}(db_i | b^{i-1}) =& \int_{  {\mb A}_{i} }   Q_i(db_i |b^{i-1}, a_i) \otimes   {\pi}_i^{A.0}(da_i | b^{i-1}), \label{cor-ISR_31_c_B_SC}\\
  {\bf P}^{\pi^{A.0}}(da^i,  d b^{i})=& \otimes_{i=0}^n\Big( Q_i(db_i|b^{i-1}, a_i) \otimes \pi_i^{A.0}(da_i|b^{i-1})\Big), \hso  i=0, \ldots, n. \label{cor-ISR_32_c_B_SC} 
 \end{align} 
In this case, for each $i$, the information structure of the channel input distribution is ${\cal I}_i^P =\{b^{i-1}\}$, and this information is also known at the decoder. In view of this, there is no need for the decoder to estimate any state variable using an \'a posteriori distribution, as in (\ref{cor-ISR_31_c_B}). \\
On the other hand, when $I \neq 0$  the optimal channel input distribution is of the form $\{\pi_i^{A.I}(da_i|a_{i-I}^{i-1}, b^{i-1}): i=0, \ldots, n\big\}$, and hence  for each $i$, the information structure is  ${\cal I}_i^P=\{a_{i-I}, \ldots, a_{i-1}, b^{i-1}\}$, is known to  the encoder, however,  the  additional variables $\{a_{i-I}, \ldots, a_{i-1}\}$ (i.e.,  state variables) are not known to the decoder. Hence, at each time $i$,  the additional  variables  $\{a_{i-I}, \ldots, a_{i-1}\}$  need to be estimated at the decoder, for $i=0, \ldots, n$. On the other hand, when $I=0$, since the  optimal channel input distribution is of the form $\{\pi_i^{A.0}(da_i|b^{i-1}): i=0, \ldots, n\big\}$, then there are  no additional state variables which need  to estimated at the decoder, because for each $i$, the decoder knows $B^{i-1}=b^{i-1}$, for $i=0, \ldots, n$. \\
The fundamental difference is that  the case $I \neq 0$ corresponds to an encoder or strategy  with memory or dynamics, in view of its dependence on past channel inputs, while the case $I=0$ corresponds to an encoder or strategy without memory or dynamics, since it does not depend on past channel input symbols. This fundamental difference needs to be accounted for, when attempting to optimize the characterizations of FTFI capacity. It is illustrated in Section~\ref{exa_gen}, for the application example of Multiple-Input Multiple Output (MIMO) Gaussian Recursive Linear Channel Models.

(b) Application of variational equalities to memoryless channels. If the channel is memoryless, i.e., ${\bf P}_{B_i|B^{i-1}, A^i}(db_i|b^{i-1}, a^i)=Q_i(db_i| a_i)- a.a. (b^{i-1}, a^i), i=0, \ldots, n$, then from Theorem~\ref{cor-ISR}, we obtain 
\begin{align}
\ell_i^{\pi^{A.0}}\big(a_i, {s}_i \big)  =&\int_{ {\mb  B}_i } \log \Big(\frac{Q_i(\cdot|a_i)}
  {\Pi_i^{\pi^{A.0}}(\cdot|b^{i-1})}(b_i)\Big) Q_i(db_i|a_i), \hso {s}\tri b^{i-1}, \hso  i=0, \ldots, n \label{BAM31_newa} \\
  \hso\Pi_i^{\pi^{A.0}} (db_i| b^{i-1})=&  \int_{{\mb A}_i} Q_i(db_i|a_i)\otimes {\pi_i^{A.0}}(da_i|b^{i-1})
\end{align}
 Thus,   for each $i$, the pay-off function $\ell_i^{ \pi^{A.0}}\big(a_i, \cdot \big)$ depends on ${s}=b^{i-1}$ only through the control object $g_i(b^{i-1})\tri {\bf P}(da_i|b^{i-1})\equiv \pi_i^{A.0}(da_i|b^{i-1})$, and not the channel distribution.\\
%, then formal application of   Markov Decision theory, implies  the optimal channel input conditional distribution of memoryless channels satisfies $g_i(b^{i-1})  \equiv  {\bf P}(da_i)-a.a.b^{i-1}$, for $i=0, \ldots, n$, i.e., it is independent of the conditioning information. \\ 
By an application of the variational equality, repeating the steps, starting with (\ref{equation15a_VE_300_b}) and leading to (\ref{equation102_VE_31}), with the corresponding upper bound obtained by using    
\bea
V_i(db_i|b^{i-1}) = \Pi_i^{\pi}(db_i)\tri \int_{{\mb A}_i} Q_i(db_i|a_i)\otimes {\pi}_i(da_i), \hso i=0, \ldots, n
\eea
i.e., corresponding to $\pi_i^{A.0}(da_i| b^{i-1})=\pi_i(da_i)\equiv{\bf P}(da_i), i=0, \ldots, n$, then the following upper bound is obtained. 
\begin{align}
&\sup_{\overline{\cal P}_{[0,n]}^{A.0}  }  \sum_{i=0}^n {\bf E}^{ \pi^{A.0}}\Big\{
\log\Big(\frac{dQ_i(\cdot|A_i)}{d\Pi_i^{ \pi^{A.0}}(\cdot|B^{i-1})}(B_i)\Big)
\Big\}  \leq \sup_{\big\{\pi_i(da_i): i=0,\ldots,n\big\}} \sum_{i=0}^n {\bf E}^{ \pi}\Big\{
\log\Big(\frac{dQ_i(\cdot|A_i)}{d\Pi_i^{ \pi}(\cdot)}(B_i)\Big)
\Big\}. \label{equation102_VE_31_new}
\end{align}
Further, the reverse inequality holds, by restricting the channel input distributions to the smaller conditional independent set $\big\{\pi_i^{A.0}(da_i|b^{i-1})=\pi_i(da_i), i=0, \ldots, n\big\}$, and from (\ref{equation15a_VE_40}) (with $P$ replaced by $\pi^{A.0}$) the following lower bound is obtained.
\begin{align}
 \sup_{  \overline{\cal P}_{[0,n]}^{A.0}   }  \sum_{i=0}^n {\bf E}^{ \pi^{A.0}}\Big\{
\log\Big(\frac{dQ_i(\cdot|A_i)}{d\Pi_i^{\pi^{A.0} }(\cdot|B^{i-1})}(B_i)\Big)
\Big\}  \geq  \sup_{\big\{\pi_i(da_i): i=0,\ldots,n\big\}} \sum_{i=0}^n {\bf E}^{ \pi}\Big\{
\log\Big(\frac{dQ_i(\cdot|A_i)}{d\Pi_i^{ \pi}(\cdot)}(B_i)\Big)
\Big\}  \label{equation15a_VE_40_a}
\end{align}
i.e., the upper bound is achievable, when the process $\big\{(A_i, B_i): i=0, \ldots, n\big\}$ is jointly independent. \\
Note that for memoryless channels with feedback, the standard method often applied to derive the capacity achieving distribution, is via the converse coding theorem, which pre-supposes that it is shown   that feedback does not increase capacity, compared to the case without feedback \cite{cover-thomas2006}.  
 As pointed out by Massey \cite{massey1990}, for channels with feedback, it will be a mistake to use  mutual information $I(A^n; B^n)$,  because by Marko's bidirectional information \cite{marko1973}, mutual information is not a tight bound on any achievable rate for channels with feedback.
% , due to the following decomposition. 
%\begin{align}
%I(A^n ; B^n)=& \sum_{i=0}^n {\bf E}^{ P}\Big\{
%\log\Big(\frac{dQ_i(\cdot|A_i)}{d\Pi_i^{ P}(\cdot|B^{i-1})}(B_i)\Big) 
%\Big\}+ \sum_{i=0}^n {\bf E}^{ P}\Big\{
%\log\Big(\frac{dP_i(\cdot|A^{i-1}, B^{i-1})}{d {\bf P}^{ P}(\cdot|A^{i-1})}(A_i)\Big) 
%\Big\} \\
%=& \sum_{i=0}^n {\bf E}^{ P}\Big\{
%\log\Big(\frac{dQ_i(\cdot|A_i)}{d\Pi_i^{ P}(\cdot|B^{i-1})}(B_i)\Big) 
%\Big\}, \hso \mbox{if and only if the channel is used without feedback.}  \label{nofed}
%\end{align}
%Since (\ref{nofed}) holds if and only if the channel is used without feedback, and this is equivalent to the conditional independence condition,  $P_i(da_i|a^{i-1}, b^{i-1})={\bf P}(da_i|a^{i-1}), i=0, \ldots, n$, then 
 Strictly speaking, for memoryless channels, any derivation of capacity achieving distribution for channels with feedback,  which applies the bound  $I(A^n; B^n) \leq \sum_{i=0}^n I(A_i; B_i)$, pre-supposes that it is already shown that feedback does not increase capacity, i.e., that ${\bf P}(da_i|a^{i-1}, b^{i-1})={\bf P}(da_i)-a.a.(a^{i-1}, b^{i-1}), i=0, \ldots, n$ (see \cite{cover-thomas2006}).
%  because in this case  only the rate through the forward channel $\big\{Q_i(db_i|a_i), i=0, \ldots, n\big\}$ is   accounted for. 
\end{remark}

Next, we give  examples to illustrate the dependence of  the information structures of optimal channel input distributions on $L, N$. \\

\begin{example}(Channel Class A and Transmission Cost Class A)\\
Case 1: $I\neq 0$.  Consider a channel $\big\{Q_i(b_i|b^{i-1},a_i, a_{i-1}): i=0,1, \ldots, n\big\}$. \\
(a) Without Transmission Cost.  By Theorem~\ref{cor-ISR}, (a) (since there is no transmission cost constraint)   the optimal channel input conditional distribution occurs in the subset 
\bea
\overline{\cal P}_{[0,n]}^{A.1} \tri \big\{ \pi_i^{A.1}(da_i|a_{i-1}, b^{i-1}): i=0, \ldots, n\big\}  \subset   {\cal P}_{[0,n]}
\eea
and   the characterization of the FTFI capacity is 
\begin{align}
C_{A^n \rar B^n}^{FB, A.1} \tri & \sup_{ \overline{\cal P}_{[0,n]}^{A.1}} \sum_{i=0}^n {\bf E}^{\pi^{A.1}}\Big\{ \log \Big(   \frac{d Q_i(\cdot|B^{i-1}, A_i, A_{i-1})}{ d\Pi_i^{\pi^{A.1}}(\cdot|B^{i-1})}(B_i)\Big)\Big\} \\
=& \sup_{ \overline{\cal P}_{[0,n]}^{A.1}} \sum_{i=0}^n I(A_{i-1},A_i; B_i| B^{i-1})
\end{align}
where 
\begin{align}
  {\Pi}_i^{\pi^{A.1}}(db_i | b^{i-1})=& \int_{  {\mb A}_{i-1}^i } Q_i(db_i |b^{i-1}, a_i, a_{i-1}) \otimes  \pi_i^{A.1}(da_i |a_{i-1}, b^{i-1})  \otimes {\bf P}^{\pi^{A.1}}(a_{i-1}| b^{i-1})  , \hso i=0, \ldots, n, \label{C-A.2.1}\\
  {\bf P}^{\pi^{A.1}}(db^i,da^i)=& \otimes_{i=0}^n \Big(Q_i(db_i|b^{i-1}, a_i, a_{i-1}) \otimes \pi_i^{\pi^{A.1}}(da_i|a_{i-1}, b^{i-1})\Big), \hso  i=0, \ldots, n, \label{C-A.2.2} \\
{\bf P}^{\pi^{A.1}}(a_{i-1}| b^{i-1})& \hst \mbox{satisfy recursions  (\ref{A_POST_1_n}) and (\ref{A_POST_1_nn}) with $L=1$.}
 \end{align}
{\it (b) With Transmission Cost Function  $\big\{ \gamma_i^{A.2}(a_{i-2}^i, b^{i}), i=0,1, \ldots, n\big\}$, that is, $L=1, N=2$.}  The characterization of the FTFI  capacity is given by the following expression.
\begin{align}
C_{A^n \rar B^n}^{FB, A.2}(\kappa) =& \sup_{ \overline{\cal P}_{[0,n]}^{A.2}(\kappa) } \sum_{i=0}^n I(A_{i-1}, A_i; B_i|B^{i-1}) \label{C-A.2.3}
\end{align}
where 
\begin{align}
 &  \overline{\cal P}_{[0,n]}^{A.2}(\kappa)\tri  \Big\{ \pi_i^{A.2}(da_i|a_{i-1},a_{i-2}, b^{i-1}), i=0, \ldots, n: \frac{1}{n+1} {\bf E}^{\pi^{A.2}} \Big(\sum_{i=0}^n \gamma_i^{A.2}(A_i, A_{i-1},A_{i-2}, B^{i})\Big)\leq \kappa      \Big\}, \\
 & {\Pi}_i^{\pi^{A.2}}(db_i | b^{i-1})= \int_{  {\mb A}_{i-1}^i } Q_i(db_i |b^{i-1}, a_i, a_{i-1}) \otimes  \pi_i^{A.2}(da_i |a_{i-1},a_{i-2}, b^{i-1})  \otimes {\bf P}^{\pi^{A.2}}(a_{i-1}, a_{i-2}| b^{i-1}), \label{C-A.2.1.2}\\
&  {\bf P}^{\pi^{A.2}}(db^i,da^i)= \otimes_{i=0}^n \Big(Q_i(db_i|b^{i-1}, a_i, a_{i-1}) \otimes \pi_i^{A.2}(da_i|a_{i-1},a_{i-2}, b^{i-1})\Big), \hso  i=0, \ldots, n \label{C-A.2.2.2} \\
\\
&{\bf P}^{\pi^{A.2}}(a_{i-1}, a_{i-2}| b^{i-1}) \hst \mbox{satisfy recursions  (\ref{A_POST_1_n}) and (\ref{A_POST_1_nn}) with $L=2$.}
\end{align}
Since, $N=2$ and $L=1$, the dependence of the optimal channel input distribution on past channel input symbols is determined from the dependence of the instantaneous transmission cost  on past channel input symbols. Moreover, although, in both cases, with and without transmission cost, the pay-off $\sum_{i=0}^n I(A_{i-1}, A_i; B_i|B^{i-1})$ is the same, the channel output transition probability distributions  and joint distributions, are different, because these are induced by different optimal channel input conditional distributions.\\
Case 2: $I=0$.  Consider any channel and transmission cost function of Remark~\ref{SC_REM}, (a). Clearly, this is much simpler compared to Case 1, because the characterization of FTFI capacity is not a functional of the \^a posteriori distribution of $A_{i-1}$ or $(A_{i-1}, A_{i-2})$ given $B^{i-1}$, for $i=0, \ldots, n$.  
\end{example}

\subsubsection{\bf Channel Class A and Transmission Cost Class B and Vice-Versa}  
From Theorem~\ref{cor-ISR}, we can also deduce the information structures of optimal channel input conditional distributions for channels of Class A and transmission cost functions of Class B, and vice-versa. These  are stated as a corollary.

\ \

\begin{corollary}(Class A  channels and Class  B transmission cost functions and vice-versa)\\
(a) Suppose the channel distribution is of Class A, as in Theorem~\ref{cor-ISR}, i.e., $\{Q_i(db_i|b^{i-1}, a_{i-L}^i):  i=0, \ldots, n\}$, the transmission cost function is of Class B, specifically, $\{\gamma_i^{B.K}(a^i, b_{i-K}^{i}): i=0, \ldots, n\}$, and the corresponding average transmission cost constraint is   defined by
\begin{align}
{\cal P}_{[0,n]}^{B}(\kappa) \tri     \Big\{   P_i(da_i | a^{i-1}, b^{i-1}),  i=0, 1, \ldots, n:   \frac{1}{n+1} {\bf E}^{P}\Big(\sum_{i=0}^n \gamma_i^{B.K}(A^i, B_{i-K}^{i})\Big) \leq \kappa  \Big\}\subset {\cal P}_{[0,n]}. \label{cor-ISR_29_c_gene} 
\end{align}
Then the optimal channel input conditional distribution, which   maximizes  $I(A^n \rar B^n)$ defined by (\ref{cor-ISR_25a_new})  over ${\cal P}_{[0,n]}^B(\kappa)$, is of the form  $\big\{ P_i(da_i | a^{i-1}, b^{i-1}),  i=0, 1, \ldots, n\big\}$ (i.e., there is no reduction in the  information structure of the optimal channel input distribution).\\
(b) Suppose the channel distribution is of Class B, defined by
\bea
{\bf P}_{B_i|B^{i-1}, A^i}(db_i|b^{i-1}, a^i)=Q_i(db_i|b_{i-M}^{i-1}, a^i)-a.a. (b^{i-1}, a^i), \hso i=0, \ldots, n \label{cor-ISR_29_NR}
\eea
   and the average transmission cost constraint is ${\cal P}_{0,n}^A(\kappa)$ defined by (\ref{cor-ISR_29_cc_CC}) (i.e., it corresponds to a transmission cost function of Class A). Then directed information is given by  
   \begin{align}
I(A^n\rightarrow B^n)= \sum_{i=0}^n {\bf E}^{ P}\Big\{
\log\Big(\frac{dQ_i(\cdot|B_{i-M}^{i-1},A^i)}{d\Pi_i^{ P}(\cdot|B^{i-1})}(B_i)\Big)
\Big\} \label{cor-ISR_25a_new_BA}
\end{align}
where 
 \begin{align}
\Pi_i^{ P} (db_i| b^{i-1})=&  \int_{{\mb A}^i} Q_i(db_i|b_{i-M}^{i-1},  a^i)\otimes P_i(da_i|a^{i-1}, b^{i-1})\otimes {\bf P}^{ P}(da^{i-1}|b^{i-1}), \hso  i=0, \ldots, n \label{CIS_9a_new_BA}
\end{align}
and $\big\{{\bf P}^{ P}(da^{i-1}|b^{i-1}): i=0, \ldots, n\big\}$ satisfies a recursion.
Moreover the optimal channel input distribution, which maximizes (\ref{cor-ISR_25a_new_BA}) over  ${\cal P}_{0,n}^A(\kappa)$ is   of the form  $\big\{ P_i(da_i | a^{i-1}, b^{i-1}),  i=0, 1, \ldots, n\big\}$ (i.e., there is no reduction in information structure).

\end{corollary}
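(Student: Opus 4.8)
The plan is to obtain both statements as direct specializations of the proof of Theorem~\ref{cor-ISR}, part (b), the guiding observation being that in each case the relevant ``state'' process is the \emph{entire} history of past channel inputs and outputs, so that the memory--reduction step which produced a finite--memory information structure $\{a_{i-I}^{i-1},b^{i-1}\}$ in Theorem~\ref{cor-ISR} degenerates, and ``no reduction'' follows.

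\textbf{Part (a).} First I would invoke a condition analogous to (\ref{ISDS_6cc}) (with $\gamma_i^{A.N}$ replaced by $\gamma_i^{B.K}$) to convert the constrained problem over ${\cal P}_{[0,n]}^B(\kappa)$ into an unconstrained one over ${\cal P}_{[0,n]}$ with per--stage augmented pay--off $\log\big(dQ_i(\cdot|B^{i-1},A_{i-L}^i)/d\Pi_i^{P}(\cdot|B^{i-1})\big)(B_i)-\lambda\,\gamma_i^{B.K}(A^i,B_{i-K}^i)$. Re--conditioning on $\{A^i,B^{i-1}\}$ and using the channel conditional independence (\ref{cor-ISR_29}), the channel log--term collapses to a function of $(A_{i-L}^i,B^{i-1})$ exactly as in (\ref{CIS_6e_a}); however the cost term $\gamma_i^{B.K}(A^i,B_{i-K}^i)$ depends on the full past input sequence $A^i$, so the resulting sample--path pay--off has the form $\overline{\ell}_i^P(A_i,\widehat{S}_i)$ with $\widehat{S}_i\tri(A^{i-1},B^{i-1})$. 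Since $\{\widehat{S}_i\}$ is trivially Markov given $(\widehat{S}^{i-1},A^{i-1})$, the argument of Theorem~\ref{cor-ISR} applies verbatim: apply the variational equality of Theorem~\ref{cor-ISR_VE}(a), use $\sup\inf\{\cdot\}\le\inf\sup\{\cdot\}$, evaluate the bound at $V_i=\Pi_i^{P}$, and invoke Markov decision theory to conclude that the maximizing channel input distribution has information structure $\widehat{S}_i=(A^{i-1},B^{i-1})$; the reverse inequality is immediate because this is all of ${\cal P}_{[0,n]}^B(\kappa)$.

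\textbf{Part (b).} Here I would first derive (\ref{cor-ISR_25a_new_BA})--(\ref{CIS_9a_new_BA}) from the Class~B channel structure (\ref{cor-ISR_29_NR}) by the same computation that gives (\ref{cor-ISR_25a_new}), with $\{{\bf P}^P(da^{i-1}|b^{i-1})\}$ obeying a Bayes recursion of the type (\ref{rec_g1})--(\ref{rec_g3}). Then, proceeding as in part (a) with the Class~A cost $\gamma_i^{A.N}(a_{i-N}^i,b^i)$ and re--conditioning on $\{A^i,B^{i-1}\}$, the channel log--term $\log\big(dQ_i(\cdot|B_{i-M}^{i-1},A^i)/d\Pi_i^{P}(\cdot|B^{i-1})\big)(B_i)$ already depends on the full input history $A^i$ through the channel itself, so the sample--path pay--off is once more of the form $\overline{\ell}_i^P(A_i,\widehat{S}_i)$ with $\widehat{S}_i\tri(A^{i-1},B^{i-1})$, and the same variational--equality and Markov--decision reasoning forces the optimal channel input distribution to have information structure equal to the full history.

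The main obstacle I anticipate is not any single computation --- each is a routine repetition of the earlier proof --- but rather making precise the \emph{non--reducibility}: one must verify that in both cases the \'a posteriori distribution entering $\Pi_i^{P}$ is a functional of the full control object $\{P_j(da_j|a^{j-1},b^{j-1})\}$ and not of any strict sub--history, so that the state cannot be coarsened below $(A^{i-1},B^{i-1})$ without destroying either the Markov property or the correctly induced channel output distribution. In part (a) this is because $\gamma_i^{B.K}$ couples all of $a_0,\ldots,a_i$, and in part (b) because the Class~B channel $Q_i$ couples all of $a_0,\ldots,a_i$; in either situation the finite channel memory $L$, respectively the finite cost memory $N$, cannot be exploited to shrink the information structure. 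Establishing this carefully, rather than merely asserting that the full history ``works'', is the one point that needs genuine attention.
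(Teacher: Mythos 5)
Your proposal is correct and follows essentially the same route as the paper, which simply cites the derivation of Theorem~\ref{cor-ISR}: in each case the coupling of the full input history $a^i$ (through the Class~B cost in part (a), through the Class~B channel in part (b)) forces the state to be $(A^{i-1},B^{i-1})$, so the memory-reduction step degenerates and the optimal distribution retains the full information structure. Note only that the corollary asserts the optimal distribution \emph{is of the general form} (i.e., the method yields no reduction), so your final concern about rigorously establishing non-reducibility goes beyond what the statement actually claims.
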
 
\begin{proof} This follows from the derivation of Theorem~\ref{cor-ISR}.
% (a), (b) By repeating the derivation of Theorem~\ref{cor-ISR}, (b),  if necessary, we can verify that if either the pay-off or the channel conditional distribution, depends on the entire history of the channel input process, then the augmented pay-off is a functional of the entire past of channel input symbols. This implies there is no reduction in the information structure of the optimal channel input conditional distribution, which  maximizes directed information. 
\end{proof}

\subsection{Channels Class C and Transmission Cost  Class C, A or B} 
\label{class_C}
In this section, we consider channel  distributions of Class C and transmission cost functions Class C, A or B.\\
Clearly, channel distributions of Class C and transmission cost functions of Class C, depend only on finite channel input and output symbols, when compared to any of the ones treated in previous sections.\\ Since  any channel of Class C is a special case of Channels of Class A, and any transmission cost  of Class C is a special case of transmission costs of Class A, then we can invoke Theorem~\ref{cor-ISR} to conclude that the maximizing channel input conditional distribution occurs in the  subset $\overline{\cal P}_{[0,n]}^{A,I}(\kappa) \subset {\cal P}_{[0,n]}(\kappa)$.

\subsubsection{\bf Channel Class C with Transmission Costs  Class C}
\label{Class_CC}
 Consider a channel distribution  of Class C, i.e., $\big\{Q_i(db_i|b_{i-M}^{i-1},a_{i-L}^i): i=0,1, \ldots, n\big\}$, and an average  transmission cost constraint corresponding to a transmission cost function of Class C, specifically, $\{\gamma_i^{C.N, K}(a_{i-N}^i, b_{i-K}^{i}): i=0, \ldots, n\}$, defined as follows.  
\begin{align}
{\cal P}_{[0,n]}^C(\kappa) \tri 
  \Big\{   P_i(da_i|a^{i-1}, b^{i-1}), i=0, 1, \ldots, n: \frac{1}{n+1} {\bf E}^{P}\Big( \sum_{i=0}^n \gamma_i^{C.N, K}(A_{i-N}^i, B_{i-K}^{i}) \leq \kappa  \Big)\Big\}. \label{cor-ISR_29_cc_C4_CC}
\end{align}
%Since a channel of Class C is a special case of channel of Class A,  and a transmission cost function of Class C,  is a special case of  a transmission cost function of Class A,  
%
From the preliminary discussion above, 
then  Theorem~\ref{cor-ISR}, (b) is directly applicable, hence we obtain the following   characterization of FTFI capacity. 
 \begin{align}
{C}_{A^n \rar B^n}^{FB,C}(\kappa) \tri & \sup_{ {\cal P}_{[0,n]}^C(\kappa) } \sum_{i=0}^n {\bf E}^{P}\Big\{
\log\Big(\frac{dQ_i(\cdot|B_{i-M}^{i-1}, A_{i-L}^i)}{d\Pi_i^{ P}(\cdot|B^{i-1})}(B_i)\Big)\Big\}\label{CM-TC-C_1_CC} \\
=& \sup_{\overline{\cal P}_{[0,n]}^{C.I}(\kappa) } \sum_{i=0}^n {\bf E}^{ \pi^{A.I}}\Big\{
\log\Big(\frac{dQ_i(\cdot|B_{i-M}^{i-1},A_{i-L}^i)}{d\Pi_i^{ \pi^{A.I}}(\cdot|B^{i-1})}(B_i)\Big) \label{CM-TC-C_1_CC_C} 
\Big\}  \equiv {C}_{A^n \rar B^n}^{FB,C.I}(\kappa)
\end{align}
where 
\begin{align}
\overline{\cal P}_{[0,n]}^{C.I}(\kappa) \tri & \Big\{\pi_i^{A.I}(da_i |a_{i-I}^{i-1}, b^{i-1}), i=0,\ldots,n: \frac{1}{n+1} {\bf E}^{\pi^{A.I}}\big( \sum_{i=0}^n \gamma_{i}^{C.N, K}(A_{i-N}^i, B_{i-K}^{i})\big) \leq \kappa   \Big\} \\
  \Pi_i^{\pi^{A.I}}(db_i | b^{i-1}) =& \int_{  {\mb A}_{i-I}^{i} }   Q_i(db_i |b_{i-M}^{i-1}, a_{i-L}^i) \otimes  \pi_i^{A.I}(da_i |a_{i-I}^{i-1}, b^{i-1}) \otimes {\bf P}^{\pi^{A.I}}(da_{i-I}^{i-1} | b^{i-1}),\hso I\tri \max\{L, N\}, \label{CM-TC-C_3_CC}\\
  {\bf P}^{\pi^{A.I}}(da^i,  d b^{i})=& \otimes_{j=0}^i \Big(Q_j(db_j|b_{j-M}^{j-1}, a_{j-L}^j) \otimes \pi_j^{A.I}(da_j|a_{j-I}^{j-1}, b^{j-1})\Big), \hso  i=0, \ldots, n. \label{CM-TC-C_4_cc}
  \end{align}
  The \'a posteriori distribution satisfies the following recursion.
  \begin{align}
  {\bf P}^{\pi^{A.I}}(da_{i-I}^{i-1}|b^{i-1}) =& \tilde{T}_{i-1}\Big(b_{i-1}, b_{i-1-M}^{i-2}, \pi_{i-1}^{A.I}(\cdot|\cdot, b^{i-2}), {\bf P}^{\pi^{A.I}}(\cdot|b^{i-2}) \Big)(da_{i-I}^{i-1}),\hso i=1, \ldots, n, \label{A_POST_2_n}\\
{\bf P}^{\pi^{A.I}}(da_{-I}^{-1}|b^{-1})=&\mbox{given}  \label{A_POST_2_nn}
 \end{align}

{\it Special Case:  $L=N=0$,  $I=0$ and Initial Data $b_{-M \wedge K}^{-1} \tri \max\{b_{-M}^{-1}, b_{-K}^{-1}\}$ Known to the Encoder and Decoder.}   In this case,  we can further apply the variational equalities of directed information and  stochastic optimal control theory (as in Theorem~\ref{cor-ISR}), or invoke \cite{kourtellaris-charalambousIT2015_Part_1},  to  deduce that the supremum over the set of channel input conditional distributions $\overline{\cal P}_{[0,n]}^{C.0}(\kappa)$  in (\ref{CM-TC-C_1_CC_C}), occurs in a smaller subset $\sr{\circ}{\cal P}_{[0,n]}^J(\kappa) \subset  \overline{\cal P}_{[0,n]}^{C.0}(\kappa) \subset {\cal P}_{[0,n]}(\kappa)$, 
%Alternatively, we can apply the variational equalities directly, without using the results of Theorem~\ref{cor-ISR}. This is not pursued any further because it is done in  
%For the other combinations of channels of Class C and transmission cost functions of Class A or B, the characterizations follow directly from the previous characterizations. 
which   satisfies the conditional independence condition, $\pi^{A.I}_i(da_i|a_{i-I}^{i-1}, b^{i-1})= {\bf P}(da_i|b_{i-J}^{i-1})-a.a. (a_{i-I}^{i-1}, b^{i-1}), i=0, \ldots, n, J\tri \max\{M, K\}$. This follows from the fact that, for each $i$, the pay-off functional, i.e.,   $\ell_i^{\pi^{A.L}}(a_i, \cdot)\equiv \ell_i^{\pi^{A.0}}(a_i, \cdot)$,  depends on $b_{i-J}^{i-1}$ via the channel distribution and the cost function, and on the additional symbols $b^{i-1-J}$ only via the control object $g_i(b^{i-1})\tri \pi_i^{A.0}(da_i| b^{i-1})$, for $i=0, \ldots, n$. For completeness we state the main theorem without derivation, since this is given in \cite{kourtellaris-charalambousIT2015_Part_1}.

\ \

\begin{theorem}
\label{cor-ISR_C5_C_MK}
(Channel class C    transmission cost class C, $L=N=0$ and initial data known to the encoder and decoder)\\
 Suppose the channel conditional distribution is of Class C with $L=0$, i.e., $\big\{Q_i(db_i|b_{i-M}^{i-1},a_i): i=0,1, \ldots, n\big\}$,  the transmission cost constraint  is defined by  (\ref{cor-ISR_29_cc_C4_CC}) with $N=0$ defined by 
 \begin{align}
{\cal P}_{[0,n]}^{C.0,K}(\kappa) \tri 
  \Big\{   P_i(da_i|a^{i-1}, b^{i-1}), i=0, 1, \ldots, n: \frac{1}{n+1} {\bf E}^{P}\Big( \sum_{i=0}^n \gamma_i^{C.0, K}(A_i, B_{i-K}^{i}) \leq \kappa  \Big)\Big\} \label{cor-ISR_29_cc_C4_CC_SC}
\end{align}
 the initial data $b_{-J}^{-1} \tri \max\{b_{-M}^{-1}, b_{-K}^{-1}\}$ is known to the encoder and decoder, 
 and the  following  condition holds.
\begin{align}
\sup_{ {\cal P}_{[0,n]}^{C.0,K}(\kappa)} I(A^{n}
\rar {B}^{n}) = \inf_{\lambda\geq 0} \sup_{  {\cal P}_{[0,n]}   } \Big\{ I(A^{n}
\rar {B}^{n})  - \lambda \Big\{ {\bf E}^P\Big(\sum_{i=0}^n \gamma_i^{C.0, K}(A_i,B_{i-K}^{i})\Big)-\kappa(n+1) \Big\}\Big\}. \label{ISDS_6CC_MK}
\end{align}   
Then the following hold.\\
 The characterization of the FTFI capacity is given by the following expression.
\begin{align}
{C}_{A^n \rar B^n}^{FB,C.0,J}(\kappa)
= \sup_{\sr{\circ}{\cal P}_{[0,n]}^{C.0,J}(\kappa)} \sum_{i=0}^n {\bf E}^{ \pi^{C.0,J}}\Big\{
\log\Big(\frac{dQ_i(\cdot|B_{i-M}^{i-1},A_i)}{d\Pi_i^{ \pi^{C.0,J}}(\cdot|B_{i-J}^{i-1})}(B_i)\Big)
\Big\},  \hso   J \tri \max\{M, K\}  \label{cor-ISR_25a_a_c_New_a-n_MK}
\end{align}
 where the  maximizing channel input conditional distribution  occurs in  the subset    
\begin{align}
&\sr{\circ}{\cal P}_{[0,n]}^{C.0,J}(\kappa) \tri 
  \Big\{   \pi_i^{C.0,J}(da_i| b_{i-J}^{i-1}), i=0, 1, \ldots, n: \frac{1}{n+1} {\bf E}^{\pi^{C.0,J}}\Big( \sum_{i=0}^n \gamma_i^{C.0, K}(A_i, B_{i-K}^{i})\Big) \leq \kappa \Big\} \label{cor-ISR_29_cc_C4_a_CCC-n_MK}
\end{align}
and the joint and channel output distributions are given by 
\begin{align}
  \Pi_i^{\pi^{C.0,J}}(db_i | b_{i-J}^{i-1}) =& \int_{  {\mb A}_i }   Q_i(db_i |b_{i-M}^{i-1}, a_i) \otimes   {\pi}_i^{C.0,J}(da_i | b_{i-J }^{i-1}),  \label{cor-ISR_31_c_C5_CC_MK}\\
  {\bf P}^{\pi^{C.0,J}}(da^i,  d b^i)=& \otimes_{j=0}^i \Big(Q_j(db_j|b_{j-M}^{j-1}, a_i) 
\otimes \pi_j^{C.0,J}(da_j| b_{j-J}^{j-1})\Big),  \hso  i=0, \ldots, n. \label{cor-ISR_32_c_C5_CC_MK} 
 \end{align}
\end{theorem}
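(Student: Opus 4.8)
The plan is to obtain the statement as a two–stage reduction that reuses Theorem~\ref{cor-ISR} and then sharpens it by a further application of the variational equality of Theorem~\ref{thm-var}(a). First, since a Class~C channel with $L=0$ is a special case of a Class~A channel and $\gamma_i^{C.0,K}(a_i,b_{i-K}^i)$ is a Class~A cost with $N=0$, we have $I\tri\max\{L,N\}=0$; hence Theorem~\ref{cor-ISR}(b) (cf. Remark~\ref{SC_REM}(a)) already gives that the supremum over ${\cal P}_{[0,n]}^{C.0,K}(\kappa)$ is attained in $\overline{\cal P}_{[0,n]}^{A.0}(\kappa)$, i.e. at inputs $\pi_i^{A.0}(da_i|b^{i-1})$. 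Using the Lagrangian equivalence (\ref{ISDS_6CC_MK}) I would then work with the unconstrained augmented problem whose per–stage pay–off is
\[
\overline\ell_i^{P}(a_i,b^{i-1})=\int_{{\mb B}_i}\Big[\log\Big(\frac{dQ_i(\cdot|b_{i-M}^{i-1},a_i)}{d\Pi_i^{P}(\cdot|b^{i-1})}(b_i)\Big)-\lambda\gamma_i^{C.0,K}(a_i,b_{i-K}^i)\Big]Q_i(db_i|b_{i-M}^{i-1},a_i),
\]
the term $\lambda(n+1)\kappa$ being irrelevant to the information structure.

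Second, I would observe that $\overline\ell_i^{P}(a_i,\cdot)$ depends on the past output string $b^{i-1}$ only through $b_{i-J}^{i-1}$, $J\tri\max\{M,K\}$, via the channel and the cost, and otherwise only through the conditioning of $\Pi_i^{P}$, i.e. through the control object itself — exactly the obstruction treated in Theorem~\ref{cor-ISR}. As there, I would bypass it by freezing the channel–output marginal: applying Theorem~\ref{thm-var}(a) / Theorem~\ref{cor-ISR_VE}(a) together with the inequality $\sup\inf\le\inf\sup$, one gets that for every admissible $\{V_i(db_i|b^{i-1})\}$ the supremum of $\sum_i{\bf E}^{P}\{\overline\ell_i^{P}\}$ over ${\cal P}_{[0,n]}^{C.0,K}(\kappa)$ is bounded above by the supremum of $\sum_i{\bf E}^{P}\{\overline\ell_i^{V}\}$. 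I would then choose $V_i(db_i|b^{i-1})=\Pi_i^{\pi^{C.0,J}}(db_i|b_{i-J}^{i-1})$ of (\ref{cor-ISR_31_c_C5_CC_MK}), so that $\overline\ell_i^{V}(a_i,\cdot)$ depends on the past only through the reduced variable $S_i\tri B_{i-J}^{i-1}$; because the channel is Class~C with $M\le J$ and $b_{-J}^{-1}$ is known to both terminals, Bayes' rule yields the Markov property ${\bf P}(ds_{i+1}|s^i,a^i)={\bf P}(ds_{i+1}|s_i,a_i)$, so $\{S_i\}$ is a controlled Markov process controlled by $\{A_i\}$ with a now fixed, single–letter stage cost. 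Standard Markov decision arguments (as in the proof of Theorem~\ref{cor-ISR}, or via the posterior recursion on $S_i$) then place the maximizer in $\sr{\circ}{\cal P}_{[0,n]}^{C.0,J}(\kappa)$, and evaluating at this $V$ — the only choice consistent with the joint law induced by inputs of the restricted form — gives the upper bound ${C}_{A^n\rar B^n}^{FB,C.0,J}(\kappa)$. The reverse inequality I would get either by restricting the first–stage supremum to $\sr{\circ}{\cal P}_{[0,n]}^{C.0,J}(\kappa)\subset\overline{\cal P}_{[0,n]}^{A.0}(\kappa)$, whereupon the joint and output distributions collapse to (\ref{cor-ISR_32_c_C5_CC_MK}) and (\ref{cor-ISR_31_c_C5_CC_MK}), or via the lower–bound half of the variational equalities (Theorem~\ref{thm-var}(b)) exactly as in Theorem~\ref{cor-ISR}(a). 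Combining the two bounds yields (\ref{cor-ISR_25a_a_c_New_a-n_MK})–(\ref{cor-ISR_32_c_C5_CC_MK}).

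The main obstacle is the one stressed throughout Section~\ref{randomized}: the directed–information stage cost $\ell_i^{P}$ is itself a functional of the channel input distribution through $\Pi_i^{P}$, so neither Feature~1 nor Feature~2 of classical stochastic control applies and Markov decision theory cannot be invoked directly. The argument therefore hinges entirely on the freezing distribution being chosen so that (i) the frozen pay–off depends on the past only through the reduced Markov state $B_{i-J}^{i-1}$ and (ii) $V$ is consistent with the law induced by inputs of the restricted form, so that the minimax upper bound is actually achieved; the technical care needed is in verifying the Markov property of $\{B_{i-J}^{i-1}\}$ from the Class~C structure (with the cost's dependence on the current output $b_i$ absorbed inside the integral defining $\overline\ell_i^{P}$) and in checking that the knowledge of $b_{-J}^{-1}$ at both terminals makes the initial reduced state well defined.
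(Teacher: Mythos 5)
Your proposal is correct and follows essentially the same route the paper takes: a first reduction via Theorem~\ref{cor-ISR} with $I=\max\{L,N\}=0$ to inputs $\pi_i^{A.0}(da_i|b^{i-1})$, followed by a second application of the variational equality with the freezing choice $V_i=\Pi_i^{\pi^{C.0,J}}(\cdot|b_{i-J}^{i-1})$ and the controlled-Markov property of $S_i\tri B_{i-J}^{i-1}$ (valid since $M\le J$ and $b_{-J}^{-1}$ is known to both terminals), which is exactly the argument the paper sketches in the ``Special Case: $L=N=0$'' paragraph before deferring the details to \cite{kourtellaris-charalambousIT2015_Part_1}. Your write-up supplies those deferred details consistently with the method of Theorem~\ref{cor-ISR}, including the correct handling of the cost's dependence on the current output $b_i$ by absorbing it into the integral against $Q_i$.
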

 \begin{proof} The derivation when the transmission cost function is $\{\overline{\gamma}_i^{C.0, K}(A_i, B_{i-K}^{i-1}): i=0, \ldots, n\}$ is given in \cite{kourtellaris-charalambousIT2015_Part_1}. This is easily modified to account  for the transmission cost function $\{\gamma_i^{C.0, K}(A_i, B_{i-K}^{i}): i=0, \ldots, n\}$. 
\end{proof}

\begin{remark}(Initial data unknown to the encoder and decoder)\\
We note that if the initial data $b_{-J}^{-1}\tri \max\{b_{-M}^{-1}, b_{-K}^{-1}\}$ is not available to the encoder and decoder, then these become additional variables which need to be estimated, 
%. In this case, although the channel distribution and transmission cost function are class C, i.e., they depend on limited past channel output symbols, 
and hence Theorem~\ref{cor-ISR_C5_C_MK} is no longer valid.  
%the optimal channel input distribution is  of the form $\{\pi_i^{A.0}(da_i|b^{i-1}): i=0, \ldots, n\}$ and consequently the transition probability distribution of the channel output process is not limited memory Markov; the characterization of FTFI capacity is given as in Remark~\ref{SC_REM}. 
\end{remark}

Next, we present  examples. \\

\begin{example}(Channel Class C) \\ 
\label{ex_c}
(a) Consider a channel $\big\{Q_i(db_i|b_{i-1},a_{i}) : i=0,1, \ldots, n\big\}$, i.e., $M=1, L=0$.\\
  By Theorem~\ref{cor-ISR_C5_C_MK},  the optimal channel input conditional distribution  occurs in the subset 
\bea
\sr{\circ}{ {\cal P}}_{[0,n]}^{C.0,1} \tri     \Big\{ \pi_i^{C.0,1}(da_i| b_{i-1}):i=0, 1, \ldots, n\Big\} 
\eea
and the characterterization of the FTFI capacity is 
\begin{align}
C_{A^n \rar B^n}^{FB, C.0,1} =&\sup_{  \sr{\circ}{ {\cal P}}_{[0,n]}^{C.0,1}   } \sum_{i=0}^n
{\bf E}^{\pi^{C.0,1}}\Big\{\log\big(\frac{dQ_i(\cdot| B_{i-1},A_i)}{d\Pi^{\pi^{c.1}}_i(\cdot| B_{i-1})}(B_i)\big)\Big\}\\
 \equiv & \sup_{  \sr{\circ}{ {\cal P}}_{[0,n]}^{C.0,1}   }   \sum_{i=0}^n I(A_i; B_i|B_{i-1}). \label{YKT-4_C}
\end{align}
where 
\begin{align}
  \Pi_i^{\pi^{C.0,1}}(db_i | b_{i-1}) =& \int_{  {\mb A}_i }   Q_i(db_i |b_{i-1}, a_i) \otimes   {\pi}_i^{C.0,1}(da_i |b_{i-1}), \label{cor-ISR_31_c_C5_C}\\
  {\bf P}^{\pi^{C.0,1}}(da^i,  d b^i)=&\otimes_{j=0}^i \Big( Q_j(db_j|b_{j-1}, a_{j}) 
\otimes \pi_j^{C.0,1}(da_j|b_{j-1})\Big),  \hso  i=0, \ldots, n. \label{cor-ISR_32_c_C5_C} 
 \end{align}
The above characterization of FTFI capacity implies
\begin{description}
\item[(a.i)] the joint process $\{(A_i,B_i): i=0, \ldots, n\}$ is first-order Markov; 

\item[(a.ii)] the channel output process  $\{B_i: i=0, \ldots, n\}$ is  first-order Markov.
\end{description}     
%\begin{align}
%{\bf P}(da_i, db_i|a^{i-1}, b^{i-1})={\bf P}^{\pi^{C.1,1}}(da_i, db_i|a_{i-1}, b_{i-1}), \hso
%{\bf P}(db_i|b^{i-1})={\bf P}^{\pi^{C.1,1}}(db_i|b_{i-1}), \hso i=0,1, \ldots, n.
%\end{align}
(b) Consider a channel $\big\{Q_i(db_i|b_{i-1},b_{i-2}, a_{i} ) : i=0,1, \ldots, n\big\}$, i.e., $M=2, L=0$.\\
  By Theorem~\ref{cor-ISR_C5_C_MK},  the optimal channel input conditional distribution  occurs in the subset 
\bea
\sr{\circ}{ {\cal P}}_{[0,n]}^{C.0,2} \tri     \Big\{ \pi_i^{C.0,2}(da_i|b_{i-1}, b_{i-2}):i=0, 1, \ldots, n\Big\} 
\eea
and the characterization of the FTFI capacity is 
\begin{align}
C_{A^n \rar B^n}^{FB, C.0,2} =&\sup_{  \sr{\circ}{ {\cal P}}_{[0,n]}^{C.0,2}   } \sum_{i=0}^n
{\bf E}^{\pi^{C.0,2}}\Big\{\log\big(\frac{dQ_i(\cdot| B_{i-1},B_{i-2}, A_i)}{d\Pi^{{\pi^{C.0,2}}}_i(\cdot| B_{i-1}, B_{i-2})}(B_i)\big)\Big\}\\
 \equiv & \sup_{  \sr{\circ}{ {\cal P}}_{[0,n]}^{C.0,2}   }   \sum_{i=0}^n I(A_{i-1}, A_i; B_i|B_{i-1}, B_{i-2}). \label{YKT-4_C_2}
\end{align}
The  above characterization of FTFI capacity implies
\begin{description}
\item[(b.i)] the joint process $\{Z_i\tri (B_{i-1}, B_{i-2}, A_{i-1}): i=0, \ldots, n\}$ is first-order Markov; 

\item[(b.ii)] the channel output process  $\{S_i \tri (B_{i-1}, B_{i-2}): i=0, \ldots, n\}$ is  first-order Markov,
\end{description} 
%or equivalently the following hold.     
%\begin{align}
%{\bf P}(dz_i|z^{i-1})={\bf P}^{\pi^{C.2}}(dz_i|z_{i-1}), \hso
%{\bf P}(ds_i|s^{i-1})={\bf P}^{\pi^{C.2}}(ds_i|s_{i-1}), \hso i=0,1, \ldots, n.
%\end{align}
The optimizations of characterizations of FTFI capacity expressions in (a) and (b) over the channel input distributions can be solved by applying dynamic programming, in view of the Markov property of the channel output processes. 
\end{example}

Let us illustrate, in the next  example,  the difference of the information structure of optimal channel input distribution, when the channel or the cost function depend on past channel input symbols, compared to Example~\ref{ex_c}.  \\

\begin{example}(Channel Class C and Transmission Cost Class C with $L\neq 0$ and or $N \neq 0$)\\ 
Consider a channel $\big\{Q_i(db_i|b_{i-1},a_i, a_{i-1}) : i=0,1, \ldots, n\big\}$ and transmission cost function $\big\{\gamma_i^{C.2,1}(a_{i-2}^i, b_{i-1}^i): i=0, \ldots, n\big\}$, i.e., , $M=1, L=1, N=2, K=1, I=\max\{L, N\}=2$.\\ By (\ref{CM-TC-C_1_CC})-(\ref{CM-TC-C_4_cc}),  the optimal channel input conditional distribution  occurs in the subset 
\bea
\overline{ {\cal P}}_{[0,n]}^{A.2}(\kappa) \tri     \Big\{ \pi_i^{A.2}(da_i|a_{i-2}^{i-1},  b^{i-1}), i=0, \ldots, n: \frac{1}{n+1} {\bf E}^{\pi^{A.2}} \big\{\sum_{i=0}^n \gamma_i(A_{i-2}^i, B_{i-1}^i)\big\} \leq \kappa   \Big\} 
\eea
and the characterterization of the FTFI capacity is 
\begin{align}
C_{A^n \rar B^n}^{FB, C.2}(\kappa) =&\sup_{  \overline{ {\cal P}}_{[0,n]}^{A.2}(\kappa)   } \sum_{i=0}^n {\bf E}^{\pi^{A.2}}\Big\{   \log\big(\frac{dQ_i(\cdot| B_{i-1},A_i, A_{i-1})}{d\Pi^{\pi^{A.2}}_i(\cdot| B^{i-1})}(B_i)\big)\Big\}
\end{align}
where 
\begin{align}
  \Pi_i^{\pi^{A.2}}(db_i | b^{i-1}) =& \int_{  {\mb A}_{i-2}^i }   Q_i(db_i |b_{i-1},a_i, a_{i-1}) \otimes   {\pi}_i^{A.2}(da_i |a_{i- 2}^{i-1}, b^{i-1})   \otimes {\bf P}^{\pi^{A.2}}(da_{i-2}^{i-1}| b^{i-1}), \label{cor-ISR_31_c_C5_C}\\
  {\bf P}^{\pi^{A.2}}(da^i,  d b^i)=&\otimes_{j=0}^i \Big( Q_j(db_j|b_{j-1},a_j, a_{j-1}) 
\otimes \pi_j^{A.2}(da_j|a_{j-2}^{j-1},b^{j-1})\Big). \label{cor-ISR_32_c_C5_C} \\
{\bf P}^{\pi^{A.2}}(a_{i-2}^{i-1}| b^{i-1})& \hst \mbox{satisfy recursions  (\ref{A_POST_2_n}) and (\ref{A_POST_2_nn}) with $M=1, I=2$.}
 \end{align}
This example illustrates that the  dependence of the transmission cost function, for each $i$, on $a_{i-2}$ in addition to symbols $\{a_{i-1}, a_i\}$ (i.e, the ones  the channel depends on),   implies the information structure of the optimal channel input conditional distribution is ${\cal I}_i^P \tri \{a_{i-1}, a_{i-2},  b^{i-1}\}$, for $i=0, \ldots, n$.\\
{\bf  Special Case.} If the channel is $\big\{Q_i(db_i|b_{i-1},a_i): i=0,1, \ldots, n\big\}$ then the optimal channel input conditional distribution  occurs in the subset 
$\overline{ {\cal P}}_{[0,n]}^{A.2}(\kappa)$,  which is fundamentally different from the information structures of Example~\ref{ex_c}, (a) (although the channels are identical).
\end{example}

\subsubsection{\bf Channel Class C with Transmission Costs  Class A} Consider a channel distribution  of Class C, i.e., $\big\{Q_i(db_i|b_{i-M}^{i-1},a_{i-L}^i): i=0,1, \ldots, n\big\}$,  and an average  transmission cost constraint ${\cal P}_{[0,n]}^{A}(\kappa)$ defined by (\ref{cor-ISR_29_cc_CC}),  and  corresponding to a transmission cost function of Class A, $\{\gamma_i^{A.N}(a_{i-N}^i, b^{i}): i=0, \ldots, n\}$, with $L \neq 0, N \neq 0$. We can repeat the derivation of Theorem~\ref{cor-ISR}, to obtain   the following characterization of FTFI capacity.  
 \begin{align}
&{C}_{A^n \rar B^n}^{FB,C,A}(\kappa) \tri  \sup_{\big\{P_i(da_i| a^{i-1},  b^{i-1})  :i=0,\ldots,n\big\} \in {\cal P}_{[0,n]}^A(\kappa) } \sum_{i=0}^n {\bf E}^P\Big\{ 
\log\Big(\frac{dQ_i(\cdot|B_{i-M}^{i-1}, A_{i-L}^i)}{d\Pi_i^{ P}(\cdot|B^{i-1})}(B_i)\Big)\Big\}  \label{CM-TC-C_1_CC_A} \\
=& \sup_{\big\{\pi_i^{A.I}(da_i |a_{i-I}^{i-1}, b^{i-1}), i=0,\ldots,n: \frac{1}{n+1} {\bf E}^{\pi^{A.I}}\big( \sum_{i=0}^n \gamma_{i}^{A.N}(A_{i-N}^i, B^{i})\big) \leq \kappa   \big\}} \sum_{i=0}^n {\bf E}^{ \pi^{A.I}}\left\{
\log\Big(\frac{dQ_i(\cdot|B_{i-M}^{i-1},A_{i-L}^i)}{d\Pi_i^{ \pi^{A.I}}(\cdot|B^{i-1})}(B_i)\Big) \label{CM-TC-C_1_CC_C_A}
\right\}  
\end{align}
where  $I\tri \max\{L,N \}$ and 
\begin{align}
  \Pi_i^{\pi^{A.I}}(db_i | b^{i-1}) =& \int_{  {\mb A}_{i-I}^{i} }   Q_i(db_i |b_{i-M}^{i-1}, a_{i-L}^i) \otimes  P_i(da_i |a_{i-I}^{i-1}, b^{i-1}) \otimes {\bf P}^{\pi^{A.I}}(da_{i-I}^{i-1} | b^{i-1}), \label{CM-TC-C_3_CC_A}\\
  {\bf P}^{\pi^{A.I}}(da^i,  d b^{i})=& \otimes_{j=0}^i \Big(Q_j(db_j|b_{j-M}^{j-1}, a_{j-L}^j) \otimes \pi_j^{A.I}(da_j|a_{j-I}^{j-1}, b^{j-1})\Big), \hso  i=0, \ldots, n. \label{CM-TC-C_4_cc-A}  \\
{\bf P}^{\pi^{A.I}}(a_{i-I}^{i-1}| b^{i-1})& \hst \mbox{satisfy recursions  (\ref{A_POST_2_n}) and (\ref{A_POST_2_nn}).}
 \end{align}

\subsubsection{\bf Channel Class C with Transmission Costs  Class B} Consider a channel distribution  of Class C, i.e., $\big\{Q_i(db_i|b_{i-M}^{i-1},a_{i-L}^i): i=0,1, \ldots, n\big\}$, and an average  transmission cost constraint ${\cal P}_{[0,n]}^{B}(\kappa)$ defined by (\ref{cor-ISR_29_c_gene}),  and  corresponding to a transmission cost function of Class B, $\{\gamma_i^{B.K}(a^i, b_{i-K}^{i}): i=0, \ldots, n\}$. \\
Similarly as above,  we can repeat the derivation of Theorem~\ref{cor-ISR}, to obtain   the following  characterization of FTFI capacity.  
 \begin{align}
{C}_{A^n \rar B^n}^{FB,C,B}(\kappa) \tri & \sup_{\big\{P_i(da_i| a^{i-1},  b^{i-1})  :i=0,\ldots,n\big\} \in {\cal P}_{[0,n]}^B(\kappa) } \sum_{i=0}^n {\bf E}^P \Big\{ 
\log\Big(\frac{dQ_i(\cdot|B_{i-M}^{i-1}, A_{i-L}^i)}{d\Pi_i^{ P}(\cdot|B^{i-1})}(B_i)\Big) \Big\} \label{CM-TC-C_1_CC_B} 
\end{align}
where  
\begin{align}
  \Pi_i^{P}(db_i | b^{i-1}) =& \int_{  {\mb A}^{i} }   Q_i(db_i |b_{i-M}^{i-1}, a_{i-L}^i) \otimes  P_i(da_i |a^{i-1}, b^{i-1}) \otimes {\bf P}^P(da^{i-1} | b^{i-1})  , \hso i=0, \ldots, n, \label{CM-TC-C_3_CC_B}\\
  {\bf P}^P(da^i,  d b^{i})=& \otimes_{j=0}^i \Big(Q_j(db_j|b_{j-M}^{j-1}, a_{j-L}^j) \otimes P_j(da_j|a^{j-1}, b^{j-1})\Big), \hso  i=0, \ldots, n \label{CM-TC-C_4_cc-B} 
 \end{align}
and $\big\{{\bf P}^P(da^{i-1} | b^{i-1}):  i=0, \ldots, n\big\}$ satisfies a recursion.

The  characterizations of FTFI capacity presented in this section cover many  channel distributions and transmission cost functions of practical interest.

\ \

\begin{conclusion}(Concluding comments)\\
%We  applied the {\it Two-Step Procedure}, which involves stochastic optimal control and  the variational equality of directed inequality, and we  determined  the optimal channel input conditional distribution set,  which characterizes FTFI capacity,   for many different classes of channels with memory and many different classes of transmission cost functions. 
%The characterizations of the FTFI capacity   are generalizations of many related resu two-letter feedback capacity characterization of  DMC and memoryless continuous alphabet channels.\\
(a) For specific channel distributions and transmission cost functions, it is possible to derive  closed form expressions for the optimal channel input conditional distributions and   corresponding characterizations of  FTFI capacity,  via dynamic programming.\\
% since the the theory of Markov decision can be used, by appropriately defining augmented processes (see \cite{...} for results in this  direction). \\  
(b) Whether feedback increases the characterizations of FTFI capacity compared to that of no feedback can be determined  by investigating  whether there exists a channel input distribution without feedback which induces the joint distribution of the joint process $\{(A_i, B_i): i=0, \ldots,n\}$ and the marginal distribution of the output process $\{B_i: i=0, \ldots, n\}$, corresponding to the characterization of FTFI capacity (see \cite{kourtellaris-charalambous-boutros:2015} for specific application examples).\\
(c) The characterizations of the feedback capacity, are obtained from the per unit time limit of the characterization of  FTFI capacity, provided the optimal channel input distributions induce information stability of the directed information density \cite{pinsker1964}. This  can be shown following \cite{ihara1993,tatikonda-mitter2009} with appropriate modifications.\\
%(d) With the aid of information structures of optimal channel input distributions, then the material of Section~\ref{LCM_G} are easily extended to the recursive NCMs given in Definition~\ref{R-exa_A_D}. We do this is subsequent sections.

%\\
%(d) We can apply the main theorems of this section to identify tighter bounds on feedback capacity expressions given in the literature (i.e., Theorem 1 in  \cite{yang-kavcic-tatikonda2005})). We illustrate this point shortly.
\label{concl-1}
\end{conclusion}

%\section{Dynamic Programming and Sufficient Statistics}

\section{Separation Principle of MIMO Gaussian Recursive Linear Channel Models}
\label{exa_gen}
In this section, we show that the maximization of FTFI capacity over channel input distributions exhibits a  separation principle. We show this separation principle by using the orthogonal decomposition of the realizations of optimal channel input distributions.   \\
The application examples we consider are   general Multiple-Input Multiple-Output Gaussian channel with arbitrary memory on past channel input and output symbols, and quadratic cost constraint, i.e., class C.  Via the separation principle, we derive an expression for the optimal channel input distribution, and we  relate the characterization of FTFI capacity to the well-known Linear-Quadratic-Gaussian partially observable stochastic optimal control problem \cite{kumar-varayia1986}. 

 \subsection{Multiple Input Multiple Output Gaussian Linear Channel Models with Memory One}
 \label{C-LCM_C_one}
 First, we treat a special  case, since the general case with arbitrary memory is handled similraly, with additional notation complexity.

\noi{\bf  Gaussian-Linear Channel Model.} Consider a recursive model, called Gaussian-Linear Channel Model (G-LCM) with quadratic cost function, defined as follows.
 \begin{align}
&B_i  =C_{i,i-1} \; B_{i-1} +D_{i,i} \; A_i +D_{i,i-1} A_{i-1}  + V_{i}, \hst  B_{-1}=b_{-1}, \hso A_{-1}=a_{-1},   \hso i= 0, \ldots, n, \label{G-LCM_1_G} \\
& {\bf P}_{V_i|V^{i-1}, A^i}={\bf P}_{V_i}, \;  V_i \sim N(0, K_{V_i}),\label{G-LCM_2_G}  \\
 &\gamma_i(a_i, b_{i-1}) \tri \langle a_i, R_{i,i} a_i \rangle + \langle b_{i-1}, Q_{i,i-1} b_{i-1} \rangle, \label{G-LCM_4_G}  \\
& C_{i,i-1} \in {\mb R}^{p\times p}, \hso  (D_{i,i}, D_{i,i-1}) \in  {\mb R}^{p\times q} \times {\mb R}^{p\times q}, \hso R_{i,i} \in {\mb S}_{++}^{q\times q}, \hso Q_{i,i-1} \in {\mb S}_{+}^{p\times p}, \hso i=0, \ldots, n \label{G-LCM_6_G} 
\end{align}
where the initial data $(b_{-1}, a_{-1})$ are known to the encoder and decoder.\\
By Section~\ref{class_C},   the optimal strategy is of the form $\{\pi_i^{A.1}(da_i|a_{i-1}, b^{i-1}): i=0, \ldots, n\}$, and the \^a posteriori distribution satisfies the following recursion.
% Moreover, 
%%by invoking the Wolf decomposition of Hilbert space random processes \cite{caines}  
%the process $\{A_i: i=0, \ldots, n\}$ admits the decomposition 
%\begin{IEEEeqnarray}{rCl}
%&&A_i= e_i^1(A_{i-1}, Y^{i-1}) + Z_i, \hso i=0,\ldots, n, \label{DOOB_1} \\
%&&Z_i \perp e_i^1(A_{i-1}, Y^{i-1}), \hst \{Z_i: i=0, \ldots, n\} \hso \mbox{orthogonal} \IEEEeqnarraynumspace \label{DOOB_11} 
%\end{IEEEeqnarray}
%i.e, $\{Z_i: i=0, \ldots, n\}$ is an independent process and independent of $\{V_i: i=0, \ldots, n\}$. 
\begin{align}
{\bf P}^{\pi^{A.1}}(da_{i-1}|b^{i-1}) =& \tilde{T}_{i-1}\Big(b_{i-1}, b_{i-2}, \pi_{i-1}^{A.1}(\cdot|\cdot, b^{i-2}), {\bf P}^{\pi^{A.1}}(\cdot|b^{i-2}) \Big)(da_{i-1}),\hso i=1, \ldots, n, \label{A_POST_1_G}\\
{\bf P}^{\pi^{A.1}}(da_{-1}|b^{-1})=&\mbox{given}.  \label{A_POST_1_Gn}
\end{align}
Next, we show that the presence of $\{C_{i,i-1}: i=0, \ldots, n\}$ in (\ref{G-LCM_1_G})  destoys the Markov property of process $\big\{\xi_{i-1}^{\pi^{A.1}}(B^{i-1})\tri {\bf P}^{\pi^{A.1}}(da_{i-1}|b^{i-1}): i=0, \ldots, n\big\}$.
% hence this \'a posteriori distribution is not a sufficient statistic, in the sense of generating the same inf but an augmented version of it is a sufficient statistic. 

%and that the joint process $\big\{\Big(\xi^{\Pi^{A.1}}(B^{i-1}),B_{i-1}\Big): i=0, \ldots, n\big\}$ is Markov.  

\ \

\begin{theorem}(Markov structure of augmented process)\\
\label{suf_st}
The process $\big\{\xi_{i-1}^{\pi^{A.1}}(B^{i-1})\tri {\bf P}^{\pi^{A.1}}(da_{i-1}|b^{i-1}): i=0, \ldots, n\big\}$ satisfying the recursion (\ref{A_POST_1_G}) and (\ref{A_POST_1_Gn}) is not Markov and furthermore the following holds.\\
For any test function $\varphi: {\mathbb A}_{i-1} \longmapsto {\mathbb R}$, which is continuous and bounded then 
\begin{align}
{\bf E}^{\pi^{A.1}}\Big\{ \int_{{\mb A}_{i-1}} \varphi(a) {\bf P}^{\pi^{A.1}}(da|b^{i-1})\Big|B^{i-2}\Big\} ={\bf E}^{\pi^{A.1}}\Big\{ \int_{{\mb A}_{i-1}} \varphi(a) {\bf P}^{\pi^{A.1}}(da|b^{i-1})\Big|B_{i-2}, \xi_{i-2}^{\pi^{A.1}}(B^{i-2}) \Big\},  \hso i=1, \ldots, n. \label{Mar_REC}
\end{align}
\end{theorem}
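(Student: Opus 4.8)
The plan is to establish (\ref{Mar_REC}) by writing out the conditional expectation on the left explicitly as an iterated integral against the induced joint distribution, and then showing that all the relevant integrands and kernels depend on $B^{i-2}$ only through the pair $(B_{i-2},\xi_{i-2}^{\pi^{A.1}}(B^{i-2}))$. First I would recall from Section~\ref{class_C}, applied to the G-LCM with $M=1, L=1$, that the à posteriori distribution obeys the recursion (\ref{A_POST_1_G}), i.e. $\xi_{i-1}^{\pi^{A.1}}(b^{i-1})=\tilde{T}_{i-1}\big(b_{i-1},b_{i-2},\pi_{i-1}^{A.1}(\cdot|\cdot,b^{i-2}),\xi_{i-2}^{\pi^{A.1}}(b^{i-2})\big)$. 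The key observation is that the update map $\tilde{T}_{i-1}$, as defined in (\ref{rec_g1})--(\ref{A_POST_1}) specialized here, takes as inputs only $b_{i-1}$, $b_{i-2}$ (through the channel kernel $Q_{i-1}(db_{i-1}|b_{i-2},a_{i-1},a_{i-2})$, which for this class depends on $B^{i-2}$ only via $B_{i-2}$), the control kernel $\pi_{i-1}^{A.1}(da_{i-1}|a_{i-2},b^{i-2})$, and the previous à posteriori distribution $\xi_{i-2}^{\pi^{A.1}}(b^{i-2})$. Since the control object is a fixed element of $\overline{\cal P}_{[0,n]}^{A.1}$ and its dependence on $b^{i-2}$ can be absorbed into the dependence on $\xi_{i-2}^{\pi^{A.1}}(b^{i-2})$ (the information structure of the encoder being $\{A_{i-2},B^{i-2}\}$, and $A_{i-2}$ being distributed according to $\xi_{i-2}^{\pi^{A.1}}(B^{i-2})$), the quantity $\int_{{\mb A}_{i-1}}\varphi(a)\,\xi_{i-1}^{\pi^{A.1}}(da\,|\,B^{i-1})$ is a measurable function of $(B_{i-1},B_{i-2},\xi_{i-2}^{\pi^{A.1}}(B^{i-2}))$ alone.

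Next I would compute the conditional expectation over $B_{i-1}$ given $B^{i-2}$: under $\pi^{A.1}$ the conditional law of $B_{i-1}$ given $B^{i-2}$ is $\Pi_{i-1}^{\pi^{A.1}}(db_{i-1}|b^{i-2})=\int_{{\mb A}_{i-2}^{i-1}}Q_{i-1}(db_{i-1}|b_{i-2},a_{i-1},a_{i-2})\otimes\pi_{i-1}^{A.1}(da_{i-1}|a_{i-2},b^{i-2})\otimes\xi_{i-2}^{\pi^{A.1}}(da_{i-2}|b^{i-2})$, which again depends on $b^{i-2}$ only through $(b_{i-2},\xi_{i-2}^{\pi^{A.1}}(b^{i-2}))$. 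Therefore the integral $\int \big(\int\varphi(a)\xi_{i-1}^{\pi^{A.1}}(da|b^{i-1})\big)\Pi_{i-1}^{\pi^{A.1}}(db_{i-1}|b^{i-2})$ is a measurable function of $(B_{i-2},\xi_{i-2}^{\pi^{A.1}}(B^{i-2}))$, and by the tower property of conditional expectation (conditioning further on the coarser $\sigma$-algebra generated by $(B_{i-2},\xi_{i-2}^{\pi^{A.1}}(B^{i-2}))$ leaves the value unchanged, since the function is already measurable with respect to it) the two sides of (\ref{Mar_REC}) coincide. The non-Markovianity claim follows from the explicit presence of the $C_{i,i-1}B_{i-1}$ feedthrough term in (\ref{G-LCM_1_G}): unlike the update of $\xi_{i-1}^{\pi^{A.1}}$, the evolution of $B_{i-1}$ itself drags along $B_{i-2}$ as a genuine extra coordinate, so $\{\xi_{i-1}^{\pi^{A.1}}\}$ alone is not a Markov chain — one exhibits a specific Gaussian-linear strategy for which $\mathrm{Law}(\xi_{i}^{\pi^{A.1}}|\xi_{i-1}^{\pi^{A.1}})\neq\mathrm{Law}(\xi_{i}^{\pi^{A.1}}|\xi_{i-1}^{\pi^{A.1}},\dots,\xi_0^{\pi^{A.1}})$, because the conditional law of $B_{i-1}$ needs $B_{i-2}$ which is not recoverable from $\xi_{i-1}^{\pi^{A.1}}$.

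The main obstacle I anticipate is the bookkeeping in the reduction step: one must argue carefully that the \emph{dependence of the fixed control kernel} $\pi_{i-1}^{A.1}(da_{i-1}|a_{i-2},b^{i-2})$ on the whole history $b^{i-2}$ does not re-introduce dependence on $B^{i-2}$ beyond $(B_{i-2},\xi_{i-2}^{\pi^{A.1}}(B^{i-2}))$ once everything is integrated against the joint law. The cleanest way around this is to fix the control object first, then observe that $\big(\xi_{i-2}^{\pi^{A.1}}(B^{i-2}),B_{i-2}\big)$ is itself a Markov process (this being precisely the ``joint process is Markov'' statement invoked below the Sufficient Statistic discussion in Section~\ref{class_A}), so that the conditional law of $B_{i-1}$ and of the updated statistic $\xi_{i-1}^{\pi^{A.1}}(B^{i-1})$ given $B^{i-2}$ depend only on that pair; then integrating $\varphi$ against $\xi_{i-1}^{\pi^{A.1}}$ and taking conditional expectation preserves this measurability. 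This is the crux, and once it is in place the identity (\ref{Mar_REC}) is immediate from the tower property, while the failure of the Markov property of $\{\xi_{i-1}^{\pi^{A.1}}\}$ in isolation is what forces the augmentation by $B_{i-2}$ (equivalently $S_i=(B_{i-1},B_{i-2})$ as in Example~\ref{ex_c}(b)).
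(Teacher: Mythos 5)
Your proposal is correct and follows essentially the same route as the paper's own (one-sentence) proof: the identity \eqref{Mar_REC} is read off from the recursion \eqref{A_POST_1_G}, whose right-hand side depends on the past only through $\big(b_{i-1},b_{i-2},\xi_{i-2}^{\pi^{A.1}}(b^{i-2})\big)$ together with the control kernel, and the failure of the Markov property of $\{\xi_{i-1}^{\pi^{A.1}}\}$ alone is attributed to the residual explicit dependence on $b_{i-2}$ introduced by the $C_{i,i-1}B_{i-1}$ term. If anything, you are more careful than the paper on the one genuine subtlety — that the fixed kernel $\pi_{i-1}^{A.1}(\cdot\,|\,\cdot,b^{i-2})$ a priori depends on the whole history $b^{i-2}$ and one must argue this does not re-introduce extra dependence — which the paper's proof does not address at all.
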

\begin{proof} For each $i$, the right hand side of  the recursion (\ref{A_POST_1_G}) depends in addition to $\{b_{i-1}, {\bf P}^{\pi^{A.1}}(da_{i-2}|b^{i-2})\}$ on $b_{i-2}$, hence the claim of the non-Markov property. From (\ref{A_POST_1_G}) we deduce (\ref{Mar_REC}). 
\end{proof}

By the channel definition,  directed information pay-off is expressed using conditional entropies  as follows.
\begin{align}
I(A^n \rar B^n)= \sum_{i=0}^n \Big\{ H(B_i|B^{i-1})- H(B_i|A_i, A_{i-1}, B^{i-1})\Big\}  \label{DI_P1}
\end{align}
where 
%Recall that the entropy of an $p-$dimensional Gaussian distributed RV $X\sim N(m, Q), Q=Q^T \succ 0 $ is given by $H(X)=\frac{1}{2}\log(2\pi e)^p|Q|$. Using this or  substituting (\ref{DOOB_1}) into (\ref{G-LCM_1_G}), and then using one of the well-known  properties of conditional entropy \cite{ihara1993} then
\begin{align}
H(B_i|B^{i-1}, A_i, A_{i-1}) =  H(V_i)= \frac{1}{2}\log(2\pi e)^p|K_{V_i}|, \hso i=0, \ldots, n  . \label{entr_11}
\end{align}
Let $\{(A_i^g, B_i^g, V_i): i=0, \ldots, n\}$ denote a jointly Gaussian  process. There are many ways to show that the optimal channel input distribution is Gaussian, i.e., $\{\pi_i^{A.1}(da_i|a_{i-1}, b^{i-1}) =\pi_i^{g,A.1}(da_i|a_{i-1}, b^{i-1}) : i=0, \ldots, n\}$ , which then implies the joint process $\{(A_i, B_i, V_i)=(A_i^g, B_i^g, V_i): i=0, \ldots, n\}$ is Gaussian, and   the average constraint is satisfied. One approach is to assume a Gaussian channel input conditional distribution and to verify the \'a posteriori recursion is satisfied by a Gaussian distribution $\{{\bf P}^{\pi^{A.1}}(da_{i-1}| b^{i-1}) ={\bf P}^{g,\pi^{g,A.1}}(da_{i-1}| b^{i-1}) : i=0, \ldots, n\}$. An alternative approach is to apply the maximum entropy property of Gaussian distributions, as in \cite{cover-pombra1989} (see also \cite{ihara1993}) which states 
\bea
\sum_{i=0}^n H(B_i|B^{i-1})= H(B^n)  \leq H(B^{g,n})
\eea
and this upper bound is achieved if $\{(A_i, B_i, V_i)=(A_i^g, B_i^g, V_i): i=0, \ldots, n\}$,  the average constraint is satisfied, and (\ref{G-LCM_2_G}) also holds.  Since any  linear combination of RVs is Gaussian if and only if each one of them is Gaussian, then we deduce from (\ref{G-LCM_1_G})-(\ref{G-LCM_6_G}),  and the information structures derived in Section~\ref{Class_CC}, 
%(\ref{DOOB_1}), (\ref{DOOB_11}) 
that any realization of the optimal strategies is linear. \\
{\bf Orthogonal Decomposition.} The orthogonal realization of optimal channel input conditional distribution  is  given by 
\begin{align}
  &A_i^g = U_i^g+ \Lambda_{i,i-1} A_{i-1}^g + Z_i^g \equiv e_i^{A.1}(B^{g,i-1}, A_{i-1}^g, Z_i^g), \hso U_i^g \tri \Gamma^{i-1} B^{g,i-1}, \label{OP_STR_1} \\
 &\hst  \equiv \overline{e}_i^{A.1}(B^{g,i-1})+ \Lambda_{i,i-1} A_{i-1}^g + Z_i^g, \label{OP_STR_1_1}
  \\ 
%&A_0^g=U_0^g+Z_0=g_0^{A.1}(y^{-1})+Z_0^g, \\
 &\overline{e}_i^{A.1}(b^{i-1}) \tri  \Gamma^{i-1} b^{i-1}, \hso i=0, \ldots, n,\\
& Z_i^g  \:\:  \mbox{is  independent of}\:\:  \Big(A^{g, i-1}, B^{g,i-1}\Big), \; Z^{g,i} \hso \mbox{is independent of} \hso V^i, i=0, \ldots, n, \label{new_11}\\
& \Big\{Z_i^g \sim N(0, K_{Z_i}) : i=0,1, \ldots, n\Big\} \: \: \mbox{is an independent Gaussian process} \label{new_12}
  \end{align}
%\begin{align}
%  &A_i^g = U_i^g+ \Lambda_{i,i-1} A_{i-1}^g + Z_i^g, \hso U_i^g \tri \Gamma^{i-1} B^{g,i-1},\hso i=0, \ldots, n, \label{OP_STR_1} \\
% &\hst  \equiv g_i^1(B^{g,i-1})+ \Lambda_{i,i-1} A_{i-1}^g + Z_i^g, \label{OP_STR_1_1} \\ 
%%&A_0^g=U_0^g=g_0^1(B^{-1})+Z_0^g, \\
% &g_i^1(b^{i-1}) \tri  \Gamma^{i-1} b^{i-1}, \hso i=0, \ldots, n
%  \end{align}
for some deterministic matrices $\{ (\Gamma^{i-1}, \Lambda_{i,i-1}): i=0, \ldots, n\}$ of appropriate dimensions,  where (\ref{new_11}) follows from the independence condition (\ref{G-LCM_2_G}). \\ 
Next, we shall show a separation principle between the computation of the strategies $\{\overline{e}_i^{A.1}(\cdot)\equiv \Gamma^{i-1}: i=0, \ldots, n\}$ and $\{ (\Lambda_{i,i-1}, K_{Z_i}): i=0, \ldots, n\}$.\\
To determine the expression of   the directed information pay-off  
(\ref{DI_P1}), we need    the conditional density of $B_i^g$ given $B^{g,i-1}$ for $i=0, \ldots, n$. We determine this by  using  (\ref{G-LCM_1_G}) and strategy  (\ref{OP_STR_1}). Since the conditional density is characterized by the conditional mean and the conditional covariance, we introduce the following quantities.
\begin{align*}
&\widehat{B}_{i|i-1} \tri  {\bf E}^{e^{A.1}} \Big\{ B_i^g \Big| B^{g,i-1}\Big\}, \quad \widehat{A}_{i|i} \tri  {\bf E}^{e^{A.1}} \Big\{ A_i^g \Big| B^{g,i}\Big\},  \\
& { K}_{B_i|B^{i-1}} \tri  {\bf E}^{e^{A.1}} \Big\{ \Big(B_i^g -  \widehat{B}_{i|i-1} \Big)  \Big(B_i^g -  \widehat{B}_{i|i-1} \Big)^T  \Big| B^{g,i-1}\Big\}   \nonumber \\
&P_{i|i} =  {\bf E}^{e^{A.1}}\Big(A_i^g- \widehat{A}_{i|i} \Big)\Big (A_i^g-  \widehat{A}_{i|i}   \Big)^T,\quad i=0, \ldots, n .
\end{align*}
Using the properties of the noise processes, i.e., (\ref{G-LCM_2_G}), (\ref{OP_STR_1})-(\ref{new_12}),
 we obtain the following recursive Kalman-filter estimates and recursions \cite{caines1988}.   
\begin{IEEEeqnarray}{rCL}
&& \widehat{A}_{i|i}=\Lambda_{i,i-1} \widehat{A}_{i-1|i-1}+  U_i^g  + \Delta_{i|i-1}\Big(B_i^g{-}\widehat{B}_{i|i-1}\Big), \hso  \widehat{A}_{-1|-1}=\mbox{given}, \hso i=0, \ldots, n, \label{fil_1} \IEEEeqnarraynumspace  \\
&&\widehat{B}_{i|i-1}  =C_{i-1}B_{i-1}^{g}  +D_{i,i} U_i^g +\overline{\Lambda}_{i,i-1} \widehat{A}_{i-1|i-1}, 
 \\
&&{ K}_{B_i|B^{i-1}} =  \overline{\Lambda}_{i,i-1}  P_{i-1|i-1} \overline{\Lambda}_{i,i-1}^T  + D_{i,i} K_{Z_i}D_{i,i}^T+ K_{V_i} \label{A.2_KF_4} 
 \end{IEEEeqnarray}
where $\{\widehat{A}_{-1|-1}, P_{-1|-1}\}$ are the initial data and
\begin{IEEEeqnarray*}{rCl}
&&\overline{\Lambda}_{i,i-1}\tri D_{i,i}\Lambda_{i,i-1}+ D_{i,i-1}, \hso  i=0, \ldots, n, \nonumber \\
&&P_{i|i}=\Lambda_{i,i-1} P_{i-1|i-1} \Lambda_{i,i-1}^T + K_{Z_{i}}  - \Big( K_{Z_{i}}  D_{i,i}^T + \Lambda_{i,i-1} P_{i-1|i-1} \overline{\Lambda}_{i,i-1}^T\Big) \nonumber \\
&&. \Phi_{i|i-1}   \Big( K_{Z_{i}}  D_{i,i}^T+ \Lambda_{i,i-1} P_{i-1|i-1} \overline{\Lambda}_{i,i-1}^T \Big)^T, \hso P_{-1|-1}=\mbox{given},  \\
&&\Phi_{i|i-1} \tri   \Big[ D_{i,i} K_{Z_{i}}D_{i,i}^T 
+ K_{V_{i}} +\overline{\Lambda}_{i,i-1}  P_{i-1|i-1} \overline{\Lambda}_{i,i-1}^T \Big]^{-1}, \\
&&\Delta_{i|i-1}\tri \Big( K_{Z_{i}}  D_{i,i}^T + \Lambda_{i,i-1} P_{i-1|i-1} \overline{\Lambda}_{i,i-1}^T\Big) \Phi_{i|i-1}
 \end{IEEEeqnarray*}
Note that the above recursions are driven by the innovations process defined by $\big\{\nu^{e^{A.1}}\tri B_i^g{-} \widehat{B}_{i|i-1} : i=0, \ldots, n\}$, which is an  an orthogonal process. Moreover, it is easily verified that the innovations process is independent of the strategy  $\{\overline{e}_i^{A.1}(\cdot): i=0, \ldots, n\}$, and satisfies the following identities.
\begin{align}
\nu_i^{e^{A.1}} \tri &  B_i^g{-} \widehat{B}_{i|i-1}=\overline{\Lambda}_{i,i-1}\Big(A_{i-1}^g{-}  \widehat{A}_{i-1|i-1}\Big)+D_{i,i}Z_i^g+V_i \nonumber \\
=& \nu_i^{e^{A.1}}\Big|_{\overline{e}_i^{A.1}=0}\equiv \nu_i^0, \hst \nu_i^0 \sim N(0, K_{\nu_i^0}), \hso K_{\nu_i^0}=K_{B_i|B^{i-1}} \hso i=0, \ldots, n \label{inn_1}
\end{align}
where the notation $\big\{\nu_i^0: i=0, \ldots, n\big\}$ indicates that  the innovations process is independent of the strategy $\big\{\overline{e}_i^{A.1}(\cdot)\equiv \Gamma^{i-1} : i=0, \ldots, n\}$, i.e., it follows from  (\ref{OP_STR_1}) and   (\ref{fil_1}).\\
From the above equations, we deduce that the conditional covariance $K_{B_i|B^{i-1}}$ is independent of $B^{g,i-1}$ for $i=0, \ldots,n$. Hence, the conditional distribution ${\bf P}^{e^{A.1}}(B_i^g\leq b_i|B^{g,i-1})\sim N(\widehat{B}_{i|i-1}, K_{B_i|B^{i-1}}), i=0,\ldots, n$.  Applying the above two observations  to (\ref{DI_P1}), using (\ref{entr_11}) we obtain 
\begin{align}
I(A^{g,n} \rar B^{g,n})=\sum_{i=0}^n I(A_i^g; B_i^g|B^{g,i-1}) =  \frac{1}{2} \sum_{i=0}^n\log \frac{ | K_{B_i|B^{i-1}}|}{|K_{V_i}|}
\equiv  \sum_{i=0}^n \big\{H(\nu_i^0)-H(V_i)\big\}. \label{DI_CC}
\end{align}
Next, we state the main theorem, which establishes a  separation principle between  the computation of optimal strategies, $\{\Gamma^{i-1}: i=0, \ldots, n\}$ and $\{(\Lambda_{i,i-1}, K_{Z_i}): i=0, \ldots, n\}$, and it  is a generalization of the separation principle of  LQG stochastic optimal control problems with partial information \cite{kumar-varayia1986}, when $C_{i,i-1}=0, i=0, \ldots, n$. 
%By repeating the derivation of Theorem~\ref{CC_CP_A}, we show an analogous converse coding theorem, and that the maximization of (\ref{DI_CC}) over all linear strategies defined by (\ref{OP_STR_1})-(\ref{new_12}) and satisfying the average transmission cost constraint  corresponds to the characterization of FTFI capacity. \\
%Below, we state the main theorem, which establishes a separation principle between the computation of strategies $\{\overline{e}_i^{A.1}(b^{i-1}): i=0, \ldots, n\}$ and $\{(\Lambda_{i,i-1}, K_{Z_i}): i=0, \ldots, n\}$.

\ \

\begin{theorem}(Separation principle)\\
\label{gen_exa}
Consider the G-LCM (\ref{G-LCM_1_G})-(\ref{G-LCM_6_G}). 
Then the following hold.\\
(a) {\it Equivalent Extremum Problem.} The  joint process $\{(A_i, B_i, V_i)=(A_i^g, B_i^g, V_i): i=0, \ldots, n\}$, is jointly Gaussian and satisfies the following equations. 
\begin{IEEEeqnarray}{rCl}
 &&A_i^g=  e_i^{A.1}(B^{g,i-1}, A_{i-1}^g)  + Z_i^g, \hst i=0, \ldots, n, \label{LCM-A.2_7_new_n} \\
 &&\hst =U_i^g+ \Lambda_{i,i-1} A_{i-1}^g   + Z_i^g,\hst U_i^g=\overline{e}_i^{A.1}(B^{g,i-1})=\Gamma^{i-1}B^{g,{i-1}},\label{LCM-A.2_88_b} \\
&& B_i^g   =  C_{i,i-1}B_{i-1}^{g}+  D_{i,i}  U_i^g+    \overline{\Lambda}_{i,i-1} A_{i-1}^g    + D_{i,i} Z_i^g + V_{i},  \label{LCM-A.2_8_a_a} \\
 &&i) \hso Z_i^g  \:\:  \mbox{is  independent of}\:\:  \Big(A^{g, i-1}, B^{g,i-1}\Big), i=0, \ldots, n, \nonumber\\
 && ii) \hso Z^{g,i} \hso \mbox{is independent of} \hso V^i, i=0, \ldots, n, \nonumber\\
&& iii) \hso  \Big\{Z_i^g \sim N(0, K_{Z_i}) : i=0,1, \ldots, n\Big\} \: \: \mbox{is an independent Gaussian process}.  \label{LCM-A.2_7_new_n1}
%&& {\bf E}^{e^1}\Big\{\gamma_i(A_i^{g}, Y_{i-1}^{g})\Big\}.
%&&= {\bf E}^{e^1} \Big\{ \langle U_i^g, R_i U_i^g  \rangle  + 2 \langle \Lambda_{i,i-1} {A}_{i-1}^g, R_i U_i^g\rangle  + tr\Big(K_{Z_i} R_i\Big)  \nonumber \\
% &&+ \langle \Lambda_{i,i-1} {A}_{i-1}^g, R_i \Lambda_{i,i-1} {A}_{i-1}^g  \rangle  + \langle Y_{i-1}^{g}, Q_i Y_{i-1}^{g}\rangle  \Big\} \label{cost_12}
\end{IEEEeqnarray}
Moreover, the average cost is given by
\begin{align}
&\frac{1}{n+1} \sum_{i=0}^n {\bf E}^{e^{A.1}}\Big\{\gamma_i(A_i^{g}, B_{i-1}^{g})\Big\}=\frac{1}{n+1} \sum_{i=0}^n {\bf E}^{e^{A.1}} \Big\{ \langle U_i^g, R_{i,i} U_i^g  \rangle  + 2 \langle \Lambda_{i,i-1} \widehat{A}_{{i-1}|{i-1}}, R_{i,i} U_i^g\rangle    \nonumber \\
 &+ \langle \Lambda_{i,i-1} \widehat{A}_{{i-1|i-1}}, R_{i,i} \Lambda_{i,i-1} \widehat{A}_{{i-1}|{i-1}}  \rangle + tr\Big(K_{Z_i} R_{i,i}\Big) +tr\Big(\Lambda_{i,i-1}^T R_i \Lambda_{i,i-1} P_{i-1|i-1}\Big)+ \langle B_{i-1}^{g}, Q_{i,i-1} B_{i-1}^{g}\rangle  \Big\} \nonumber \\
 & \equiv \frac{1}{n+1} \sum_{i=0}^n {\bf E}^{e^{A.1}}\Big\{\overline{\gamma}_i(U_i, \widehat{A}_{i-1|i-1}, B_{i-1}^{g}, \Lambda_{i,i-1}, K_{Z_i})\Big\}. \label{cost_13}
 \end{align}
 The characterization of FTFI capacity given by 
 \begin{align}
&C_{A^n \rar B^n}^{C.1}(\kappa) =\sup_{\overline{\cal P}_{[0,n]}^{A.1}(\kappa) } \frac{1}{2} \sum_{i=0}^n\log \frac{ | K_{B_i|B^{i-1}}|}{|K_{V_i}|} 
  \label{opt_A.1_1} 
\end{align}
where $\{K_{B_i|B^{i-1}}: i=0, \ldots, n\}$ is given by  (\ref{A.2_KF_4}) and  the average constraint set is defined by 
 \begin{align}
  \overline{\cal P}_{[0,n]}^{A.1}(\kappa)\tri \Big\{e_i^{A.1}(\cdot) \tri \big(\overline{e}_i^{A.1}(\cdot, \cdot), \Lambda_{i,i-1}, K_{Z_i}\big), i=0, \ldots, n:   \frac{1}{n+1}\sum_{i=0}^n {\bf E}^{e^{A.1}} \Big(\gamma_i(A_i^g, B^{g,i-1}) \Big) \leq \kappa \Big\}.
 \end{align}
% For the rest of the statements we let $Q_{i,i-1}=0: i=0, \ldots, n\}$.\\
   (b) {\it Separation of Strategies.} If there exists an interior point to the constraint set $\overline{\cal P}_{[0,n]}^{A.1}(\kappa)$ then the optimal strategy denoted by $\{e^{A.1,*}(\cdot)\equiv (\overline{e}_i^{A.1,*}(\cdot), \Lambda_{i,i-1}^*, K_{Z_i}^*): i=0, \ldots,n\}$ is the solution of the following dual optimization problem.
\begin{align}
J_{A^n \rar B^n}(e^{A.1,*}) =& \inf_{\lambda\geq 0} \sup_{ \big\{  e^{A.1}(\cdot) \tri \big(\overline{e}_i^{A.1}(\cdot, \cdot), \Lambda_{i,i-1}, K_{Z_i}\big), i=0,\ldots, n\big\} } \Bigg\{ \frac{1}{2} \sum_{i=0}^n\log \frac{ | K_{B_i|B^{i-1}}|}{|K_{V_i}|}  \nonumber \\
&- \lambda \Big\{ \sum_{i=0}^n {\bf E}^{e^{A.1}}\Big( \gamma_i(A_i^g,B_{i-1}^g)\Big)-\kappa(n+1) \Big\}\Bigg\} \label{ISDS_6cc_New}
\end{align}  
where  $\lambda$ is the Lagrange multiplier associated with the transmission cost constraint.\\
   % \begin{IEEEeqnarray}{rCl}
%\kappa_{0,n}(C\mathrlap{) \tri} &&   \inf_{ \big(g_i^1(\cdot), \Lambda_{i,i-1}, K_{Z_i}\big), i=0, \ldots, n   : \frac{1}{2} \sum_{i=0}^n\log \frac{ | K_{B_i|B^{i-1}}|}{|K_{V_i}|}     \geq (n+1)C}    {\bf E}^{e^1}\Big\{\sum_{i=0}^n  \gamma_i(A_i^{g}, B_{i-1}^{g})\Big\}   . \label{cap_fb_1_TC_1n_NN}
%\end{IEEEeqnarray} 
Moreover, the following separation holds. \\
(i) The optimal strategy $\{\overline{e}_i^{A.1,*}(\cdot): i=0, \ldots, n\}$ is the solution of the optimization problem 
\bea
J_{A^n \rar B^n}(\overline{e}^{A.1,*}(\cdot), \lambda, \Lambda, K_Z) \tri \inf_{\overline{e}_i^{A.1}(\cdot): i=0, \ldots, n} \lambda {\bf E}^{e^{A.1}}\Big\{\sum_{i=0}^n  \gamma_i(A_i^{g}, B_{i-1}^{g})  \Big\}
\eea
for a fixed $\lambda, \{\Lambda_{i,i-1}, K_{Z_i}: i=0, \ldots, n\}$.\\
(ii) The optimal strategy $\{\Lambda_{i,i-1}^*, K_{Z_i}^*: i=0, \ldots, n\}$ is the solution of (\ref{ISDS_6cc_New}) for  $\{\overline{e}_i^{A.1}(\cdot)=\overline{e}_i^{A.1,*}(\cdot): i=0, \ldots, n\}$.\\
(c) {\it Optimal Strategies.} Define the augmented state variable as follows.
\begin{align}
\overline{B}_{i-1}^g \tri \left[ \begin{array}{c} B_{i-1}^g \\ \widehat{A}_{i-1|i-1}\end{array} \right], 
\; i=0, \ldots, n. \nonumber
\end{align}
 Any candidate of the  strategy $\{\overline{e}_i^{A.1}(B^{g,i-1}):i=0, \ldots, n\}$ is of the form 
\begin{align}
\overline{e}_i^{A.1}(B^{g,i-1})  \equiv & \overline{e}_i^{A.1}(\overline{B}_{i-1}^{g})  \tri  \Gamma_{i,i-1}^1 B_{i-1}^g + \Gamma_{i,i-1}^{2}\widehat{A}_{i-1|i-1}, \label{sep_SOC_1} \\
=& \overline{\Gamma}_{i,i-1} \overline{B}_{i-1}^g,  
\; i=0, \ldots, n. \nonumber
\end{align}
where the components of $\{\overline{B}_i^g: i=0, \ldots, n\}$  satisfy (\ref{fil_1}), (\ref{inn_1}), the augmented system is 
\begin{align}
\overline{B}_i^g= &\overline{F}_{i,i-1} \overline{B}_{i-1}^g + \overline{E}_{i,i-1}U_i^g + \overline{G}_{i,i-1}\nu_i^{e^{A.1}}, \hso i=0, \ldots, n, \label{SEP1_SOC} \\
\overline{F}_{i,i-1}\tri & \left[ \begin{array}{cc} C_{i,i-1} & \overline{\Lambda}_{i,i-1} \\ 0 & \Lambda_{i,i-1} \end{array}\right], \; \overline{E}_{i,i-1}\tri  \left[ \begin{array}{c} D_{i,i} \\ I\end{array}\right], \hso 
\overline{G}_{i,i-1}\tri \left[ \begin{array}{c} I \\ \Delta_{i|i-1} \end{array}\right]
\end{align}
the average cost is
\begin{align}
&{\bf E}^{e^{A.1}}\Big\{\sum_{i=0}^n\gamma_i(A_i^{g}, B_{i-1}^{g})\Big\}\equiv {\bf E}^{e^{A.1}}\Big\{\sum_{i=0}^n \overline{\gamma}_i(U_i^{g}, \overline{B}_{i-1}^{g}, \Lambda_{i,i-1}, K_{Z_i})\Big\} \nonumber \\
&  
 \tri  {\bf E}^{e^{A.1}} \Big\{ \sum_{i=0}^n\Big(\left[ \begin{array}{c} \overline{B}_{i-1}^g \\ U_i^g \end{array} \right]^T \left[\begin{array}{cc} \overline{M}_{i,i-1} & \overline{L}_{i,i-1} \\ \overline{L}_{i,i-1}^T & \overline{N}_{i,i-1}\end{array} \right] \left[ \begin{array}{c}\overline{B}_{i-1}^g \\ U_i^g \end{array} \right]  + tr\big(K_{Z_i} R_{i,i}\big) +tr\big(\Lambda_{i,i-1}^T R_{i,i} \Lambda_{i,i-1} P_{i-1|i-1}\big) \Big) \Big\},
\\
& \overline{M}_{i,i-1}\tri \left[ \begin{array}{cc} Q_{i,i-1} & 0 \\ 0 & \Lambda_{i,i-1}^T R_{i,i} \Lambda_{i,i-1}\end{array}\right], \hso \overline{L}_{i,i-1}\tri \left[\begin{array}{c} 0 \\ \Lambda_{i,i-1}^T R_{i,i} \end{array} \right],\hso \overline{N}_{i,i-1} \tri R_{i,i}. \label{CISC}
 \end{align}
and  the following hold. \\
(1) For a fixed $\lambda, \{\Lambda_{i,i-1}, K_{Z_i}: i=0, \ldots, n\}$, the optimal strategy $\{U_i^{g,*}=\overline{e}_i^{A.1,*}(B^{g,i-1})\equiv \overline{e}_i^{A.1,*}(\overline{B}_{i-1}^{g}) :i=0, \ldots, n\}$  is the solution of the fully observable classical stochastic optimal control problem  
\begin{align}
J_{A^n \rar B^n}^1(\overline{e}^{A.1,*}(\cdot), \lambda, \Lambda, K_Z) =   \inf_{\overline{e}_i^{A.1}(\cdot): i=0, \ldots, n } \lambda   {\bf E}^{e^{A.1}}\Big\{\sum_{i=0}^n  \overline{\gamma}_i(U_i^{g}, \overline{B}_{i-1}^{g}, \Lambda_{i,i-1}, K_{Z_i})   \Big\} \label{SOCP_1}
\end{align} 
where $\{\overline{B}_i^g: i=0, \ldots, n\}$ satisfy recursion (\ref{SEP1_SOC}). Moreover, the optimal strategy $\{U_i^{g,*}=\overline{e}_i^{A.1,*}(\overline{B}_{i-1}^{g}):i=0, \ldots, n\}$ is  given by  the following equations.
\begin{align}
&\overline{e}_i^{A.1,*}(\overline{b}_{i-1})=\overline{\Gamma}_{i,i-1} \overline{b}^{i-1}
={-}\Big( \lambda\overline{N}_{i,i-1} {+} \overline{E}_{i,i-1}^T \Sigma(i{+}1)\overline{E}_{i,i-1} \Big)^{-1} \overline{E}_{i,i-1}^T \Sigma(i{+}1) \overline{F}_{i,i-1}\overline{b}_{i-1},  \; i=0, \ldots, n-1,\label{opt_con_1_1_n}  \\ 
 &\overline{e}_n^{A.1,*}(\overline{b}_{n-1})=0 \label{opt_con_1_1_n_n}
\end{align}
where  the symmetric positive semi-definite matrix $\{\Sigma(i): i=0,\ldots, n\}$ satisfies the matrix difference Riccati equation 
\begin{align}
\Sigma(i) {=}& \overline{F}_{i,i{-}1}^T  \Sigma(i{+}1) \overline{F}_{i,i{-}1}  {-} \Big( \overline{F}_{i,i{-}1}^T \Sigma(i{+}1)\overline{E}_{i,i-1}  {+} \lambda\overline{L}_{i,i-1}\Big) \nonumber \\
& .\Big(\lambda \overline{N}_{i,i-1}+ \overline{E}_{i,i-1}^T \Sigma(i+1)\overline{E}_{i,i-1} \Big)^{-1} \Big(\overline{E}_{i,i-1}^T \Sigma_{i,i-1}\overline{F}_{i,i-1} { +}\lambda \overline{L}_{i,i-1}^T\Big)^T{+} \lambda \overline{M}_{i,i-1}^T,\hso i=0, \ldots, n-1, \\ \Sigma(n)=&\lambda \overline{M}_{n,n-1}^T.
\end{align}
and the optimal pay-off is given by
\begin{align}
&J_{A^n \rar B^n}^1(\overline{e}^{A.1,*}(\cdot), \lambda, \Lambda, K_Z)= \sum_{j=0}^n \Big\{ tr\big(\Phi_{j|j-1}\Sigma(j)\big)+ \lambda tr\big(K_{Z_j}R_{j,j}\big) +\lambda tr \big(\Lambda_{j,j-1}^T R_{j,j} \Lambda_{j,j-1} P_{j-1|j-1}\big) \nonumber  \\
&+ tr\big(\Delta_{j|j-1}K_{B_j|B^{j-1}} \Delta_{j,j-1} \Sigma(j)\big) \Big\}+\langle \overline{B}_{0|-1}, \Sigma(0) \overline{B}_{0|-1}\rangle.
\end{align}
(2) The optimal strategies  $\{(\Lambda_{i,i-1}^*, K_{Z_i}^*): i=0, \ldots, n\}$ are the solutions of the optimization problem
\begin{align}
J_{A^n \rar B^n}(e^{A.1,*}) =  \inf_{\lambda\geq 0} \sup_{ \big\{  \big(\Lambda_{i,i-1}, K_{Z_i}\big), i=0,\ldots, n\big\} } \Bigg\{ \frac{1}{2} \sum_{i=0}^n\log \frac{ | K_{B_i|B^{i-1}}|}{|K_{V_i}|}  
- \lambda \Big\{J_{A^n \rar B^n}^1(\overline{e}^{A.1,*}(\cdot), \Lambda, K_Z) - \kappa(n+1) \Big\}\Bigg\}. \label{SEC_S}
\end{align} 
 \end{theorem}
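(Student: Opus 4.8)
First I would argue that an optimal channel input distribution is jointly Gaussian with the channel. By Section~\ref{class_C} (specialized in Section~\ref{Class_CC}) the optimal strategy has information structure $\{a_{i-1},b^{i-1}\}$, i.e.\ it lies in $\overline{\cal P}_{[0,n]}^{A.1}(\kappa)$. Writing $I(A^n\rar B^n)=\sum_i H(B_i|B^{i-1})-\sum_i H(V_i)$ as in (\ref{DI_P1})--(\ref{entr_11}) and invoking the maximum-entropy bound $\sum_i H(B_i|B^{i-1})=H(B^n)\le H(B^{g,n})$ (as in Cover--Pombra \cite{cover-pombra1989}; see also \cite{ihara1993}), the bound is attained, subject to the second-moment constraint (\ref{G-LCM_4_G}) and the noise law (\ref{G-LCM_2_G}), by a jointly Gaussian triple $\{(A_i^g,B_i^g,V_i)\}$; since a linear combination of random variables is Gaussian iff each summand is, and the channel map (\ref{G-LCM_1_G}) is affine, every realization of such a strategy must be affine in $(a_{i-1},b^{i-1})$, which gives the orthogonal decomposition (\ref{OP_STR_1})--(\ref{new_12}) and hence (\ref{LCM-A.2_7_new_n})--(\ref{LCM-A.2_7_new_n1}). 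I would then run the Kalman filter for $(A_i^g)$ with observations $(B^{g,i})$, driven by the innovations $\nu_i^{e^{A.1}}=B_i^g-\widehat B_{i|i-1}$, to obtain (\ref{fil_1})--(\ref{A.2_KF_4}) and (\ref{inn_1}). The two facts needed, both immediate from (\ref{G-LCM_2_G}) and (\ref{new_11})--(\ref{new_12}), are that $K_{B_i|B^{i-1}}$ and $P_{i|i}$ are deterministic (in particular independent of $B^{g,i-1}$), and that $\nu_i^{e^{A.1}}=\nu_i^0$ does not depend on the affine part $\overline e_i^{A.1}(\cdot)\equiv\Gamma^{i-1}$ of the strategy. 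Conditional Gaussianity of $B_i^g$ given $B^{g,i-1}$ then yields (\ref{DI_CC}), i.e.\ the objective in (\ref{opt_A.1_1}); the cost identity (\ref{cost_13}) follows by expanding $\gamma_i(A_i^g,B_{i-1}^g)$ through $A_i^g=U_i^g+\Lambda_{i,i-1}A_{i-1}^g+Z_i^g$ and the orthogonal splitting $A_{i-1}^g=\widehat A_{i-1|i-1}+(A_{i-1}^g-\widehat A_{i-1|i-1})$, using independence of $Z_i^g$ and orthogonality of the estimation error to $\sigma(B^{g,i-1})$.

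\textbf{Part (b): strong duality and decoupling.} Directed information is concave in the channel input distribution and the cost constraint is linear in it (second moments), so the interior-point hypothesis on $\overline{\cal P}_{[0,n]}^{A.1}(\kappa)$ is a Slater condition, giving strong duality and the Lagrangian form (\ref{ISDS_6cc_New}). The decoupling is then a corollary of Part~(a): for fixed $\lambda$ and fixed $\{(\Lambda_{i,i-1},K_{Z_i})\}$, the information term $\frac12\sum_i\log(|K_{B_i|B^{i-1}}|/|K_{V_i}|)$ is a constant, because $K_{B_i|B^{i-1}}$ is a deterministic function of $\{(\Lambda_{j,j-1},K_{Z_j}):j\le i\}$ alone and does not involve the gains $\Gamma^{j-1}$. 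Hence maximizing the Lagrangian over the affine parts $\{\overline e_i^{A.1}(\cdot)\}$ reduces to minimizing $\lambda\,{\bf E}^{e^{A.1}}\{\sum_i\gamma_i(A_i^g,B_{i-1}^g)\}$ over them, which is claim (i); substituting that optimizer leaves the outer optimization over $\{(\Lambda_{i,i-1},K_{Z_i})\}$, which is claim (ii).

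\textbf{Part (c): the inner LQ problem.} I would introduce the augmented state $\overline B_{i-1}^g=[B_{i-1}^g;\widehat A_{i-1|i-1}]$ and verify, by stacking the channel recursion (\ref{LCM-A.2_8_a_a}) with the filter recursion (\ref{fil_1}), that $\overline B_i^g$ obeys the controlled linear recursion (\ref{SEP1_SOC}) with the stated $\overline F,\overline E,\overline G$, driven by the white innovations $\nu_i^{e^{A.1}}$ whose covariance is fixed once $(\Lambda,K_Z)$ is fixed. Rewriting the averaged cost via the same expansion as in Part~(a) puts it in the quadratic form (\ref{CISC}) in $(\overline B_{i-1}^g,U_i^g)$ plus strategy-independent trace terms. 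Since $\overline B^g$ is then a controlled Markov process and the per-stage cost is a function of $(\overline B_{i-1}^g,U_i^g)$ only, the setting matches Feature~1 of the paper below (\ref{SOC_1}), so the minimization in (\ref{SOCP_1}) over all strategies $U_i^g=\overline e_i^{A.1}(B^{g,i-1})$ is a standard fully observed finite-horizon LQ problem; backward dynamic programming (completion of squares) gives the linear feedback (\ref{opt_con_1_1_n})--(\ref{opt_con_1_1_n_n}), the Riccati recursion for $\Sigma(\cdot)$, and the stated optimal value, with part (2) being the remaining outer program (\ref{SEC_S}).

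\textbf{Main obstacle.} The crux is the separation itself, not the LQ computation. One must show rigorously that the affine-in-output component $U_i^g=\Gamma^{i-1}B^{g,i-1}$ contributes nothing to the directed information --- so $K_{B_i|B^{i-1}}$ is independent of the gains $\Gamma^{j-1}$ --- while simultaneously the per-stage cost, after averaging, depends on the true $A_{i-1}^g$ only through the Kalman estimate $\widehat A_{i-1|i-1}$ together with a strategy-independent error covariance $P_{i-1|i-1}$. Establishing both, and thereby that $(B_i^g,\widehat A_{i|i})$ is a sufficient statistic making $\overline B^g$ a genuine Markov state for the inner control problem, is what legitimizes the two-level dual program (\ref{SEC_S}); Theorem~\ref{suf_st} shows that this augmentation is unavoidable, since $\{\xi_{i-1}^{\pi^{A.1}}(B^{i-1})\}$ alone is not Markov.
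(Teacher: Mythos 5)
Your proposal is correct and follows essentially the same route as the paper: Gaussianity via the maximum-entropy bound, the orthogonal decomposition $A_i^g=\Gamma^{i-1}B^{g,i-1}+\Lambda_{i,i-1}A_{i-1}^g+Z_i^g$, the Kalman filter with innovations independent of the gains $\Gamma^{i-1}$ so that $K_{B_i|B^{i-1}}$ depends only on $\{(\Lambda_{j,j-1},K_{Z_j})\}$, Lagrangian duality for the constrained problem, and the augmented state $(B_i^g,\widehat A_{i|i})$ turning the inner problem into a fully observed finite-horizon LQ problem solved by a Riccati recursion. Your treatment of the Slater condition and of why the directed-information term is constant in $\Gamma$ is, if anything, slightly more explicit than the paper's.
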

\begin{proof}(a) Equations  (\ref{LCM-A.2_7_new_n})-(\ref{LCM-A.2_7_new_n1})   follow from the statements prior to the theorem.  The average constraint (\ref{cost_13}) follows from  (\ref{G-LCM_4_G}) and (\ref{LCM-A.2_7_new_n})-(\ref{LCM-A.2_7_new_n1}), using  the reconditioning property of expectation.   (\ref{opt_A.1_1})     is due to  (\ref{DI_CC}).  (b) 
(\ref{ISDS_6cc_New}) follows from duality theory, in view of the convexity of the optimization problem.   (i), (ii) These  follow from the observation that directed information expressed in terms of the logarithm in the right hand side of (\ref{ISDS_6cc_New})  depends on $\{(\Lambda_{i,i-1}, K_{Z_i}): i=0, \ldots, n\}$ and  not on $\{\overline{e}_i^{A.1}(\cdot): i=0, \ldots, n\}$, which then  implies the optimization problem in (\ref{ISDS_6cc_New}) over $\{e_i^{A.1}(\cdot): i=0, \ldots, n\}$ can be decomposed as stated.  (c) Consider the output process   $\{B_i^g: i=0, \ldots, n\}$,   expressed in terms of the innovations process as follows.
\begin{align}
B_i^g=\widehat{B}_{i|i-1}+ \nu_i^{e^{A.1}}=C_{i,i-1}B_{i-1}^g + \overline{e}_i^{A.1}(B^{g,i-1})+ \Lambda_{i,i-1} \widehat{A}_{i-1|i-1}+ \nu_i^{e^{A.1}}, \hso i=0, \ldots, n. \label{aug_1}
\end{align}
Also recall that $\{\widehat{A}_{i|i}: i=0, \ldots, n\}$ satisfies (\ref{fil_1}) and it is driven by the innovations process (\ref{inn_1}), as follows. 
\begin{align}
\widehat{A}_{i|i}=\Lambda_{i,i-1} \widehat{A}_{i-1|i-1}+  U_i^g  + \Delta_{i|i-1} \nu_i^{A.1}, \hso  \widehat{A}_{-1|-1}=\mbox{given}, \hso i=0, \ldots, n.
\end{align}
Since the following Markov property holds
\begin{align}
{\bf P}^{e^{A.1}}&(db_i, d\widehat{a}_{i|i}| \{B_j^{g}= b_j, \widehat{A}_{j|j}= \widehat{a}_{j|j}: j=0, \ldots, i-1\},\{U_j^{g}= u_j: j=0, \ldots, i\}) \nonumber \\
&={\bf P}^{e^{A.1}}(db_i, d\widehat{a}_{i|i}|B_{i-1}^{g}= b_{i-1}, \widehat{A}_{i-1|i-1}, U_i^g=u_i), \hso i=0, \ldots, n
\end{align}
and the average cost   in (\ref{cost_13}), specifically, $\overline{\gamma}_i(U_i, \widehat{A}_{i-1|i-1}, B_{i-1}^{g}, \Lambda_{i,i-1}, K_{Z_i})\Big\}$ is a function of $\{U_i, \widehat{A}_{i-1|i-1}, B_{i-1}^{g}\}$,  
 then  $\{(B_i^g, \widehat{A}_{i|i}): i=0, \ldots, n\}$ is a sufficient statistics for strategy $\{\overline{e}_i^{A.1}(\cdot): i=0, \ldots, n\}$, that is,  for each $i$, then $U_i^g=\overline{e}_i^{A.1}(B_{i-1}^g,\widehat{A}_{i-1|i-1})$, for $i=0, \ldots, n$. This also follows from Theorem~\ref{suf_st}. Consequently, (\ref{sep_SOC_1})-(\ref{CISC}) are obtained by simple algebra. Finally, by (\ref{inn_1})  the innovations process $\{\nu_i^{\overline{e}^{A.1}}\equiv \nu_i^0: i=0, \ldots, n\}$,  is independent of strategy $\{\overline{e}_i^{A.1}(\cdot): i=0, \ldots, n\}$, and hence the rest of the equations follow directly from the solution of  LQG partially obervable stochastic optimal control problems  \cite{caines1988} and (b), that is, the statements under (1) follow from the fact that  $\{(B_i^g, \widehat{A}_{i|i}): i=0, \ldots, n\}$ is the state process of a completely observable stochastic optimal control problem (\ref{SOCP_1}).  
\end{proof}

\ \

\begin{remark}(Comments on the separation principle of Theorem~\ref{gen_exa})\\
(a) The solution methodology given in Theorem~\ref{gen_exa} states that there is a separation principle, in the sense  that the optimal strategy $\{\overline{e}_i^{A.1,*}(\cdot)=: i=0, \ldots, n\}$ is obtained for fixed strategies $\{\Lambda_{i,i-1},K_{Z_i}: i=0, \ldots, n\}$, while the optimal  strategies $\{\Lambda_{i,i-1}^*,K_{Z_i}^*: i=0, \ldots, n\}$ are found from the  optimization problem  (\ref{SEC_S}), evaluated at the optimal strategy $\{\overline{e}_i^{A.1,*}(\cdot)=: i=0, \ldots, n\}$. This is analogous to the concept of Person-by-Person (PbP) optimality of decentralized stochastic optimal control or decision theory in cooperative optimization \cite{charalambousGFIS_2013,charalambous-ahmedFIS_Parti2012}.  Moreover,  this separation principle is a generalization of the separation principle between estimation and control of LQG partially observable stochastic optimal control problems. \\
(b)  If $C_{i,i-1}=0, Q_{i,i=0}=0, i=0, \ldots, n$ then the sample path pay-off in  (\ref{cost_13}) and (\ref{aug_1}) reduce to the following expressions.
\begin{align}
&\overline{\gamma}_i(U_i, \widehat{A}_{i-1|i-1}, B_{i-1}^{g}, \Lambda_{i,i-1}, K_{Z_i})= \tilde{\gamma}_i(U_i, \widehat{A}_{i-1|i-1}, \Lambda_{i,i-1}, K_{Z_i}), \hso i=0, \ldots, n, \\
 &B_i^g=\widehat{B}_{i|i-1}+ \nu_i^{e^{A.1}}=\overline{e}_i^{A.1}(B^{g,i-1})+ \Lambda_{i,i-1} \widehat{A}_{i-1|i-1}+ \nu_i^{e^{A.1}}
\end{align} 
  hence $\{\widehat{A}_{i|i}: i=0, \ldots, n\}$ is a sufficient statistics for strategy $\{\overline{e}_i^{A.1}(\cdot): i=0, \ldots, n\}$, that is, 
$U_i^g=\overline{e}_i^{A.1}(B^{g,i-1})=\tilde{e}_i^{A.1}(\widehat{A}_{i-1|i-1})$, i.e., $\overline{B}_{i-1}^g = \widehat{A}_{i-1|i-1}$, for  $i=0, \ldots, n$ (see (\ref{SEP1_SOC}).
\end{remark}

\subsection{Multiple Input Multiple Output Gaussian Linear Channel Models with Arbitrary Memory}
\label{G-LCM_CP_G}
The methodology and results obtained in Section~\ref{C-LCM_C_one} admit generalizations to the following models.

 %We show claim (ii), by considering  the following Gaussian-LCM-C, which is a  degenerate version of the NCM-D of Example~\ref{gne_exa}\footnote{$L=M=0$ corresponds to a memoryless channel.}.  
\noi{\bf  Gaussian-Linear Channel Model with Arbitrary Memory.} Consider a  (G-LCM) with quadratic cost function, and arbitrary memory, defined as follows.
\begin{align}
&B_i   = \sum_{j=1}^M C_{i,i-j} B_{i-j} + \sum_{j=0}^L D_{i,i-j}  A_{i-j} + V_{i}, \hso B_{-M}^{-1}=b_{-M}^{-1}, \hso A_{-L}^{-1}=a_{-L}^{-1},  \hso i= 0, \ldots, n, \label{G-LCM-C_1} \\
& \hso \equiv C_{M}(i)B_{i-M}^{i-1}+ D_{L}(i)A_{i-L}^i+ V_i, \nonumber \\
&\frac{1}{n+1} \sum_{i=0}^n  {\bf E} \Big\{  \gamma_i^{C.L,M}(A_{i-L}^i, B_{i-M}^{i})     \Big\} \tri  \frac{1}{n+1}\sum_{i=0}^n {\bf E} \Big\{ \langle A_{i-L}^i, R_{L}(i) A_{i-L}^i  \rangle+ \langle B_{i-M}^{i},Q_{M}(i)B_{i-M}^{i} \rangle  \Big\}\leq \kappa,  \label{LCM-A.1_a_B} \\
& R_L(i)=R_L^T(i)\succ 0 \in {\mb R}^{(L+1) q \times (L+1) q},\; Q_{M}(i)=Q_M^T(i)\succeq 0 \in {\mb R}^{(M+1) p \times (M+1) p},  \\
& {\bf P}_{V_i|V^{i-1}, A^i}={\bf P}_{V_i}, \;  V_i \sim N(0, K_{V_i}). \label{LCM-C_4} 
\end{align} 
By (\ref{LCM-C_4}),  the channel distribution   is Gaussian given by 
\begin{align}
&{\mb P}\Big\{B_i \leq b_i | B^{i-1}=b^{i-1}, A^i=a^i\Big\} \sim N(C_{M}(i) \;b_{i-M}^{i-1}+ D_L(i)\; a_{i-L}^i, \: K_{V_i}),  \hso  i=0,1, \ldots, n \label{LCM-A.2_1}
\end{align}
From Section~\ref{class_C},  we directly obtain that the optimal channel input conditional distribution occurs in the following set.
\begin{align}
\overline{\cal P}_{[0,n]}^{A.L}(\kappa) \tri & \Big\{ \pi_i^{A.L}(da_i|a_{i-L}^{i-1}, b^{i-1}), i=0, \ldots, n: \frac{1}{n+1}\sum_{i=0}^n {\bf E}^{\pi^{A.L}} \Big(\gamma_i^{C.L,M}(A_{i-L}^i, B_{i-M}^{i}) \Big) \leq \kappa \Big\} \label{NCM-L2_1}
\end{align}
and that the characterization of FTFI capacity is given by the following expression.
\begin{align}
{C}_{A^n \rar B^n}^{FB,C.L} (\kappa) 
 = \sup_{\overline{\cal P}_{[0,n]}^{A.L}(\kappa)}  \Big\{ \sum_{i=0}^n H(B_i|B^{i-1})\Big\} - H(V^n) \label{LCM_L.2_3}
\end{align}
where 
\begin{align}
{\mb P}\Big\{B_i \leq b_i | B^{ i-1}=&b^{i-1}\Big\}=\int_{{\mathbb A}_{i-L}^{i}} {\mb P}\Big\{V_{i}\leq b_i - C_{M}(i) \;b_{i-M}^{i-1}+ D_L(i)\; a_{i-L}^i  \Big\} \nonumber \\
& \otimes \pi_i^{A.L}(da_i|a_{i-L}^{i-1}, b^{i-1})\otimes {\bf P}^{\pi^{A.L}}(da_{i-L}^{i-1}|b^{i-1}),  \hso  i=0,1, \ldots, n. \label{LCM-A.2_2}
\end{align}
We can   show, as in Section~\ref{C-LCM_C_one}, that the optimal channel input distribution satisfying the average transmission cost constraint is Gaussian.\\

% i.e., $\Big\{ \pi_i^{A.L}(da_i | a_{i-L}^{i-1}, b^{i-1})=\pi_i^{g, A.L}(da_i | a_{i-L}^{i-1}, b^{i-1}): i=0,\ldots,n\Big\}$, and the corresponding joint process $\Big\{(A_i, B_i, V_i)=(A_i^g, B_i^g, V_i): i=0, \ldots, n\Big\}$ is jointly Gaussian. \\ 

\begin{theorem}(MIMO G-LCM with arbitrary memory)\\
\label{C-LCM_C}
 Consider the G-LCM defined by (\ref{G-LCM-C_1})-(\ref{LCM-C_4}). \\
 Then the following hold.\\
 (a) The optimal channel input conditional distribution is Gaussian distributed, denoted by $\Big\{ \pi_i^{A.L}(\cdot | a_{i-L}^{i-1}, b^{i-1})=\pi_i^{g, A.L}(\cdot | a_{i-L}^{i-1}, b^{i-1}): i=0,\ldots,n\Big\}$, and the corresponding joint process is jointly Gaussian distributed, denoted by $\Big\{(A_i, B_i, V_i)=(A_i^g, B_i^g, V_i): i=0, \ldots, n\Big\}$. \\
Moreover, the characterization of FTFI capacity is given by 
\begin{align}
{C}_{A^n \rar B^n}^{FB,G-A.L} (\kappa) 
 =&  \sup_{  \overline{\cal P}_{[0,n]}^{G-A.L}(\kappa)  }  \Big\{ \sum_{i=0}^n H(B_i^g|B^{g,i-1})\Big\} - H(V^n)    \label{LCM_A.2_3} 
 \end{align}
 where 
\begin{align}
&\overline{\cal P}_{[0,n]}^{G-A.L}(\kappa) \tri  \Big\{ \pi_i^{g,A.L}(da_i|a_{i-L}^{i-1}, b^{i-1}), i=0, \ldots, n: \frac{1}{n+1}\sum_{i=0}^n {\bf E}^{\pi^{g,A.L}} \Big(\gamma_i^{C.L,M}(A_{i-L}^{g,i}, B_{i-M}^{g,i}) \Big) \leq \kappa \Big\} \\ 
& {\mb P}\Big\{B_i^g \leq b_i | B^{g, i-1}=b^{i-1}\Big\}=\int_{{\mathbb A}_{i-L}^{i}} {\mb P}\Big\{V_{i}\leq b_i - C_{M}(i) \;b_{i-M}^{i-1}+ D_L(i)\; a_{i-L}^i  \Big\} \nonumber \\
& \hst \otimes \pi_i^{g, A.L}(da_i|a_{i-L}^{i-1}, b^{i-1})\otimes {\bf P}^{g, \pi^{A.L}}(da_{i-L}^{i-1}|b^{i-1}),  \hso  i=0,1, \ldots, n. \label{LCM-A.2_4} 
\end{align}
(b) An equivalent FTFI characterization is given by the following expressions.
\begin{align} 
 &A_i^g= \Gamma^{i-1} B^{g,i-1} + \sum_{j=1}^L \Lambda_{i, i-j} A_{i-j}^g+ Z_i^g,   \hst  i=0,1, \ldots, n  \label{LCM-A.2_7} \\
 &\hst   = \Gamma^{i-1} B^{g,i-1} + \Lambda_{L}(i) A_{i-L}^{g,i-1}+ Z_i^g,  \label{LCM-A.2_7_new} \\
 &B_i^g   =\sum_{j=1}^{M} C_{i,i-j} B_{i-j}^g +\sum_{j=0}^L D_{i,i-j}A_{i-j}^g + V_{i},     \hso i= 0, \ldots, n, \label{LCM-A.2_8}  \\
 & \hst =  C_{M}(i) B_{i-M}^{g,i-1} +D_{L}(i) A_{i-L}^{g,i} + V_{i} \\ 
 &i) \hso Z_i^g  \:\:  \mbox{is  independent of}\:\:  \Big(A^{g, i-1}, B^{g,i-1}\Big), i=0, \ldots, n, \\
 & ii) \hso Z^{g,i} \hso \mbox{is independent of} \hso V^i, i=0, \ldots, n \\
& iii) \hso  \Big\{Z_i^g \sim N(0, K_{Z_i}), K_{Z_i} \succeq 0 : i=0,1, \ldots, n\Big\} \: \: \mbox{is a Gaussian process}\\
&{C}_{A^n \rar B^n}^{FB,G-A.L} (\kappa) 
 = \sup_{\big\{  \big(\Gamma^{i-1}, \Lambda_{L}(i), K_{Z_i}\big), i=0,\ldots, n    \big\} \in {\cal E}_{[0,n]}^{IL-G-A.L}(\kappa)  }   \sum_{i=0}^n H(B_i^g|B^{g,i-1}) - H(V^n)  \label{opt_A.1}\\
 & {\cal E}_{[0,n]}^{IL-G-A.L}(\kappa)\tri \Big\{\big(\Gamma^{i-1}, \Lambda_{L}(i), K_{Z_i}\big), i=0, \ldots, n:    \frac{1}{n+1}\sum_{i=0}^n {\bf E} \Big(\gamma_i^{C.L, M}(A_{i-L}^{g,i}, B_{i-M}^{g,i}) \Big) \leq \kappa \Big\}, \\
& \gamma_i^{C.L,M}(A_{i-L}^{g,i}, B_{i-M}^{g,i})
 \tri  \langle A_{i-L}^{g,i}, R_L(i) A_{i-L}^{g, i} \rangle +  \langle B_{i-M}^{g,i}, Q_M(i) B_{i-M}^{g,i}\rangle, \hso i=0, \ldots, n .
\end{align}
\end{theorem}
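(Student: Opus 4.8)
The plan is to mirror the memory--one argument of Section~\ref{C-LCM_C_one} (Theorems~\ref{suf_st} and \ref{gen_exa}), replacing single symbols by the stacked vectors $A_{i-L}^{i-1}$ and $B_{i-M}^{i-1}$ throughout. I would first record the reduction already available: by Section~\ref{class_C} the extremum problem is $C_{A^n\rar B^n}^{FB,C.L}(\kappa)=\sup_{\overline{\cal P}_{[0,n]}^{A.L}(\kappa)}\sum_{i=0}^nH(B_i|B^{i-1})-H(V^n)$ as in (\ref{LCM_L.2_3}), where I have used that by (\ref{LCM-C_4}) each $V_i$ is independent of $(A^i,B^{i-1})$ (and the $V_i$ are mutually independent), so that $H(B_i|B^{i-1},A^i)=H(V_i)$, hence $I(A^n\rar B^n)=H(B^n)-H(V^n)$ on this set. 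So the problem becomes: maximize $H(B^n)$ over feasible inputs.

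For part~(a), I would invoke the maximum--entropy property of the Gaussian law: for any feasible $\pi\in\overline{\cal P}_{[0,n]}^{A.L}(\kappa)$, writing $K_{B^n}$ for the covariance of $B^n$, one has $H(B^n)\le\frac12\log\big((2\pi e)^{(n+1)p}|K_{B^n}|\big)=H(B^{g,n})$, with equality iff $B^n$ is jointly Gaussian. It then remains to show this bound is attained within $\overline{\cal P}_{[0,n]}^{A.L}(\kappa)$. I would do this by (i) positing a Gaussian input $\pi_i^{g,A.L}(da_i|a_{i-L}^{i-1},b^{i-1})$ whose conditional mean is affine in $(a_{i-L}^{i-1},b^{i-1})$ and whose conditional covariance is input--independent; (ii) substituting into the linear recursion (\ref{G-LCM-C_1}) and inducting on $i$, using that a linear combination of jointly Gaussian random variables is Gaussian, to conclude $\{(A_i^g,B_i^g,V_i)\}$ is jointly Gaussian and hence the bound $H(B^n)\le H(B^{g,n})$ is met; (iii) verifying that the \'a posteriori distribution ${\bf P}^{g,\pi^{A.L}}(da_{i-L}^{i-1}|b^{i-1})$ appearing in (\ref{LCM-A.2_4}) stays Gaussian, which follows by writing the Kalman filter for the stacked linear--Gaussian system, extending (\ref{fil_1})--(\ref{A.2_KF_4}); and (iv) noting that the cost $\gamma_i^{C.L,M}(A_{i-L}^i,B_{i-M}^i)$ in (\ref{LCM-A.1_a_B}) is a quadratic form, hence a function of first and second moments only, so that a Gaussian input whose second moments match those of an extremizing configuration is again feasible. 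Combining (i)--(iv) with the entropy inequality gives (a).

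For part~(b), once it is known that the optimal $\pi_i^{g,A.L}(\cdot|a_{i-L}^{i-1},b^{i-1})$ is Gaussian, its realization is $A_i^g=(\text{affine function of }(a_{i-L}^{i-1},b^{i-1}))+Z_i^g$ with $Z_i^g$ zero--mean Gaussian independent of $(A^{g,i-1},B^{g,i-1})$. Subtracting means (which only lowers the quadratic cost and leaves $H(B_i^g|B^{g,i-1})$ unchanged) I may take the realization linear and split the linear part as $\Gamma^{i-1}B^{g,i-1}+\Lambda_L(i)A_{i-L}^{g,i-1}$, which is (\ref{LCM-A.2_7})--(\ref{LCM-A.2_7_new}); substituting into (\ref{G-LCM-C_1}) yields the coupled system (\ref{LCM-A.2_8}). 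The three orthogonality claims follow as in the memory--one case: (i) holds by construction; (ii) $Z^{g,i}\perp V^i$ because $V^{i-1}$ is a measurable (linear) function of $(A^{g,i-1},B^{g,i-1})$ while $V_i$ is independent of $(A^i,B^{i-1})$, so every $Z_j^g$ is uncorrelated with every $V_k$ and joint Gaussianity upgrades this to independence; (iii) likewise $Z_i^g\perp Z_j^g$ for $j<i$, so $\{Z_i^g\}$ is an independent Gaussian sequence. Finally, with this realization $H(B_i^g|B^{g,i-1})=\frac12\log\big((2\pi e)^p|K_{B_i|B^{i-1}}|\big)$ with $K_{B_i|B^{i-1}}$ produced by the stacked filter, the cost becomes the quadratic functional of $\{(\Gamma^{i-1},\Lambda_L(i),K_{Z_i})\}$ displayed in the statement, and the feasible set is exactly ${\cal E}_{[0,n]}^{IL-G-A.L}(\kappa)$, giving (\ref{opt_A.1}).

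The hard part will be steps~(iii)--(iv) of part~(a): showing that achievability of the maximum--entropy output is not lost when one simultaneously insists on the finite--memory conditional--independence structure $\pi_i(\cdot|a_{i-L}^{i-1},b^{i-1})$ and on the quadratic power constraint. Concretely, one must check that the \'a posteriori recursion (\ref{LCM-A.2_4}) is closed within the Gaussian family and that the resulting augmented state (the analogue of the pair $(B_{i-M}^{g,i-1},\widehat A^{g})$ of Theorems~\ref{suf_st} and \ref{gen_exa}) is a sufficient statistic for the strategy; the rest is essentially the arbitrary--memory bookkeeping of stacked vectors rather than new conceptual content, so the argument should go through verbatim once this point is in place.
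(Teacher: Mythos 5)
Your proposal is correct and follows essentially the same route as the paper: the paper's proof of this theorem simply states that the derivation is done precisely as in Section~\ref{C-LCM_C_one}, and your argument reproduces that section's steps (reduction to maximizing $H(B^n)$ via $H(B_i|B^{i-1},A^i)=H(V_i)$, the maximum-entropy bound achieved by a Gaussian input, closure of the \'a posteriori recursion in the Gaussian family via the stacked Kalman filter, and the orthogonal decomposition yielding the independent innovations $\{Z_i^g\}$) with the only change being the bookkeeping of stacked vectors $A_{i-L}^{i-1}$, $B_{i-M}^{i-1}$.
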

\begin{proof} The derivation is done precisely as in Section~\ref{C-LCM_C_one}, hence it is omitted.
\end{proof}

We can proceed further to derive the anlog of the material in Section~\ref{C-LCM_C_one}.

In the next remark we relate the above theorem to the Cover and Pombra \cite{cover-pombra1989} characterization, and illustrate some of the fundamental differences.   \\

\begin{remark}(MIMO G-LCM and relation to Cover and Pombra \cite{cover-pombra1989})\\
By Theorem~\ref{C-LCM_C}, and assuming,  without loss of generality  the initial data are $B_{-M}^{-1}=b_{-M}^{-1}=0, A_{-L}^{-1}=a_{-L}^{-1}=0$, and $P_0(da_0|a_{-L}^{-1}, b^{-1})=P_0(da_0), P_1(da_1|a_0, a_{1-L}^{-1},b_0, b^{-1})=P_1(da_1|a_0, b_0)$, etc., we can express the decomposition   (\ref{LCM-A.2_7}) in terms of $\{(V_i, Z_i^g): i=0, \ldots, n\}$, by simple recursive substitution, as follows. 
\begin{align}
A_i^g=& \sum_{j=0}^{i-1}{\Gamma}_{i,j} B_j^g + \sum_{j=1}^L \Lambda_{i,i-j}A_{i-j}^g+ Z_{i}^g, \hso A_0^g=Z_0^g,   \hst i=1, \ldots, n, \label{R_CP1989_1} \\
=& \sum_{j=0}^{i-1}\overline{\Gamma}_{i,j} V_j + \overline{Z}_{i}, \hso \overline{Z}_i \tri \sum_{j=0}^i \overline{\Delta}_{i,j} Z_{j}^g \label{R_CP1989}
\end{align}
for appropriately chosen matrices $\{\overline\Gamma_{i,j}: i=0, \ldots, n, j=0, \ldots, i-1\}, \{\overline{\Delta}_{i,j}: i=0, \ldots, n, j=0, \ldots, i\}$. \\
Clearly, in (\ref{R_CP1989_1}) the process $\{Z_i^g: i=0, \ldots, n\}$ is an orthogonal or independent innovations process, while in     the alternative equivalent expression (\ref{R_CP1989}),  the process $\{\overline{Z}_i:i=0, \ldots, n\}$  is not an independent innovations process. In fact, the realization of the process $\{A_i^g: i=0, \ldots, n\}$ given by (\ref{R_CP1989}), in terms of $\{\overline{Z}_i: i=0, \ldots, n\}$,  is analogous to  the realization derived by Cover and Pombra \cite{cover-pombra1989} (see (\ref{c-p1989})-(\ref{cp1989})), where  the process $\{\overline{Z}_i: i=0, \ldots, n\}$ is not an orthogonal process. \\ 
Since the objective is to compute ${C}_{A^n \rar B^n}^{FB,G-A.L} (\kappa) $ given by   (\ref{opt_A.1}), as in Theorem~\ref{gen_exa}, then decomposition   (\ref{LCM-A.2_7}) or (\ref{R_CP1989_1}), where the process $\{Z_i^g: i=0, \ldots, n\}$ is an orthogonal process, is much simpler to analyze, compared to  decomposition  (\ref{R_CP1989}), where $\{\overline{Z}_i: i=0, \ldots, n\}$ is correlated. This is possibly one of the main  reason,  which prevented many of the past attempts to solve the Cover and Pombra \cite{cover-pombra1989} non-stationary non-ergodic characterization explicitly, or any of its stationary ergodic  variants \cite{yang-kavcic-tatikonda2007,kim2010}.    \\
\end{remark}

Finally, we note that although, the emphasis  is to illustrate applications in MIMO G-LCM, the methodology applies  to arbitrary channel models, irrespectively of the type of alphabet spaces and channel noise distributions.

\section{General Discrete-Time Recursive Nonlinear Channel Models}
\label{DTRM}
In this section, we show the following.

\begin{description}
\item[(i)] The information structures of channel distributions (\ref{CD_C2})-(\ref{CD_C5}) are sufficient to derive the information structures of  Nonlinear Channel Models (NCMs) driven by arbitrary distributed and correlated noise processes.  
 
 \item[(ii)] The optimal channel input conditional  distributions of Multiple-Input Multiple Output (MIMO) Gaussian Linear Channel Models (G-LCM), driven by correlated Gaussian noise processes, can be dealt with as in Section~\ref{exa_gen}.
\end{description}
Claim (i) illustrates that  many of the existing channels investigated in the literature, for example,  \cite{cover-pombra1989,kim2008,chen-berger2005,yang-kavcic-tatikonda2005,permuter-cuff-roy-weissman2010,permuter-asnani-weissman2013,elishco-permuter2014,kourtellaris-charalambous2015}, induce channel distributions of Class A, B or C.
Claim  (ii) generalizes the Cover and Pombra \cite{cover-pombra1989} characterization (\ref{cp1989_a}) of feedback capacity of  non-nstationary non-ergodic Additive Gaussian channels driven by correlated noise.

\subsection{Nonlinear Discrete-Time Recursive Channel Models}
\label{NRCM_G}
First, we illustrate that channel distributions of  Class A, B or C, i.e.,  (\ref{CD_C2})-(\ref{CD_C5}), are induced  by various nonlinear channel models (NCM), and include  nonlinear and linear time-varying  Autoregressive models,  nonlinear and linear  channel models expressed in state space form \cite{caines1988}, and many of the existing channels investigated in the literature, for example,  \cite{cover-pombra1989,kim2008,chen-berger2005,yang-kavcic-tatikonda2005,permuter-cuff-roy-weissman2010,permuter-asnani-weissman2013,elishco-permuter2014,kourtellaris-charalambous2015}, and non-nstationary non-ergodic Additive Gaussian channels driven by correlated noise \cite{cover-pombra1989}.

\ \

\begin{definition}(Nonlinear channel models and transmission costs) \\
\label{R-exa_A_D}
(a) NCM-A. Nonlinear Channel Models A (NCM-A) are defined by   nonlinear recursive models and   transmission cost functions, as follows.
\begin{align}
&B_i   =h_i^{A}(B^{i-1}, A_{i-L}^i, V_i),  \hso  B^{-1}=b^{-1},\; A_{-L}^{-1}=a_{-L}^{-1}, \hso  i=0, \ldots, n, \label{NCM-A_D} \\
&\frac{1}{n+1}\sum_{i=0}^n {\bf E} \Big\{ \gamma_i^{A.N}(A_{i-N}^i, B^{i}) \Big\}  \leq \kappa
 \label{NCM-A.D-IC}
\end{align} 
where $\{V_i: i=0,1, \ldots, n\}$ is the noise process, and the following assumption holds.\\
 { Assumption A.(i).}  The alphabet spaces include any of the following.
\begin{align}
&\mbox{Continuous Alphabets:} \;
{\mb B_i}\tri {\mb R}^p, \; {\mb A_i}\tri {\mb R}^q, \;  {\mb V}_i \tri {\mb R}^r, \hso i=0, 1, \ldots, n; \label{CA_A.D} \\
&\mbox{Finite Alphabets:} \;
{\mb B_i}\tri \big\{1, \ldots, p\big\}, \; {\mb A_i}\tri \big\{1, \ldots, q\},  \; {\mb V}_i \tri \big\{1, \ldots, r\big\}, \; i=0, 1, \ldots, n;  \label{DA_A.D} \\
&\mbox{Combinations of Continuous and Discrete (Finite or Countable) Alphabets.}
\end{align}
{Assumption A.(ii).} $h_i^{A}: {\mb B}^{i-1} \times {\mb A}_{i-L}^i  \times {\mb V_i} \longmapsto {\mb B}_i, \gamma_i^{A.N}:  {\mathbb A}_{i-N}^i\times  {\mb B}^{i} \longmapsto {\mb A}_i$  and $h_i^A(\cdot, \cdot, \cdot), \gamma_i^{A.N}(\cdot, \cdot)$ are  measurable functions, for $i=0, 1, \ldots, n$; \\
{Assumption A.(iii).}  The noise process $\{V_i: i=0, \ldots, n\}$  satisfies  conditional independence condition
\bea
{\bf P}_{V_i|V^{i-1}, A^i}(dv_i|v^{i-1}, a^i)=  {\bf P}_{V_i}(dv_i)-a.a.(v^{i-1}, a^i),\hso i=0, \ldots, n. \label{CIC_A}
\eea

(b) Nonlinear Channel Models A.B and B.A are as follows.\\
(b.1) {NCM-A.B.}  Nonlinear Channel Models A.B  (NCM-A.B) correspond to nonlinear recursive models NCM-A, with $\gamma_i^A(A_{i-N}^i, B^{i})$ in  (\ref{NCM-A_D})   replaced by  $\gamma_i^{B.K}(A^i, B_{i-K}^{i}), i=0, \ldots, n$, and Assumptions A.(i)-A.(iii) hold with appropriate changes. \\
(b.2) {NCM-B.A.}  Nonlinear Channel Models B.A (NCM-B.A) correspond to nonlinear recursive models NCM-A, with $h^A(B^{i-1}, A_{i-L}^i, V_i)$ in (\ref{NCM-A_D})    replaced by  $h_i^B(B_{i-M}^{i-1}, A^i, V_i), i=0, \ldots, n$, and Assumptions A.(i)-A.(iii) hold with appropriate changes.

(c) {NCM-C.} Nonlinear Channel Models C (NCM-C) are defined as follows.
\begin{align}
& B_i   = h_i^{C}(B_{i-M}^{i-1}, A_{i-L}^i, V_i), \hso B_{-M}^{-1}=b_{-M}^{-1},\; A_{-L}^{-1}=a_{-L}^{-1}, \; i=0, \ldots, n,  \label{NCM-C_D_C} \\
 &\frac{1}{n+1}\sum_{i=0}^n {\bf E} \Big\{ \gamma_i^{C.N,K}(A_{i-N}^i, B_{i-K}^{i}) \Big\}  \leq \kappa \label{NCM-A.D-IC_C}   %hst I \tri \max\{L, N\}, \: J\tri \max\{M, L, N, K\}, 
\end{align} 
where $\{V_i: i=0,1, \ldots, n\}$ is the noise process,  and  Assumptions A.(i)-A.(iii) hold with appropriate changes.

(d) {NCM-D.}  Nonlinear Channel Models D (NCM-D) correspond to any one of NCM-A, NCM-A.B, NCM-B.A,  NCM-C, with  recursive function  $h_i^D(\cdot, \cdot, \cdot)$ for  $D \in \{A,B, C\}$,  Assumptions A.(i)-A.(ii) hold with appropriate changes, and the noise noise process $\{V_i: i=0, \ldots, n\}$ is correlated with Assumption A.(iii) replaced by the following assumptions. \\
{Assumption  D.(iii)}. The noise process $V^n \tri \{V_i: i=0, \ldots, n\}$ distribution   satisfies conditional independence 
\bea
{\bf P}_{V_i|V^{i-1}, A^i}(dv_i|v^{i-1}, a^i)=  {\bf P}_{V_i|\overline{V}^{i-1}}(dv_i|\overline{v}^{i-1}), \; \overline{v}^{i-1} \in \big\{v_{i-T}^{i-1}, v^{i-1}\big\}-a.a.(v^{i-1}, a^i),\; i=0, \ldots, n \label{CI_M}
\eea
where $T$ is nonnegative and finite. \\
Assumption D.(iv). The inverse of the map 
\bea
h_i^D(\overline{b}^{i-1},\overline{a}^i, \cdot) : {\mathbb V}_i \longmapsto  h_i^D(\overline{b}^{i-1},\overline{a}^i, v_i),\hso  \overline{a}^i \in \{a^i, a_{i-L}^i\}, \hso \overline{b}^{i-1} \in \{b^{i-1}, b_{i-M}^{i-1}\}, \hso i=0, \ldots, n
\eea
 exists and it measurable, i.e., the inverse is $\overline{h}_i^D(b_i, \overline{b}^{i-1}, \overline{a}^i)$, for $i=0, \ldots, n$.
\end{definition}

\ \

Clearly, by (\ref{CIC_A}) the noise process  distribution satisfies  ${\bf P}_{V^n}(dv^n)= \otimes_{i=0}^n {\bf P}_{V_i}(dv_i)$, and the following  consistency condition holds.
\begin{align}
{\mb P}\Big\{B_i \in \Gamma \Big| B^{i-1}=b^{i-1}, A^i=a^i\Big\}=\; & {\bf P}_{V_i}\Big(\big\{V_i: h_i^{A}(b^{i-1}, a_{i-L}^i, V_i) \in \Gamma\big\} \Big), \hso \Gamma \in {\cal B}({\mb B}_i)\label{SCNCMD-A.D} \\
=\; & Q_i(\Gamma |b^{i-1}, a_{i-L}^i),  \hso  i=0,1, \ldots, n.   \label{NCM-A.D_CD}
\end{align}
We use  the convention  that  transmission  starts at time $i=0$,   the  initial data $B^{-1}\tri b^{-1}, A_{-L}^{-1}=a_{-L}^{-1}$ are specified and  known to the encoder and decoder,  and their distribution is fixed. Alternatively, we can assume no information is available for $i \in \{-1, -2, \ldots, \}$, i.e., $\sigma\{B^{-1}, A^{-1}\}= \{\Omega, \emptyset\}$,  which then implies $B_0=h_0^A(A_0, V_0), B_1=h_1^A(B_0, A_1, A_0, V_1), \ldots, B_n=h_n^A(B_{n-1}, \ldots, B_0, A_n, \ldots, A_0, V_n)$.
% For example, these can be taken to   be the null set of data $\emptyset$ or any  available data prior to transmission times $i\in \{-\infty, \ldots, -2,-1\}$. 
% 

Any NCM-A.B induces channel distribution $\{Q_i(db_i|b^{i-1}, a_{i-L}^i): i=0, \ldots, n\big\}$ and any  NCM-B.A induces channel distribution $\{Q_i(db_i|b_{i-M}^{i-1}, a^i): i=0, \ldots, n\big\}$ (i.e., they satisfy  a   consistency conditions as in (\ref{NCM-A.D_CD})).\\
Any NCM-C induces channel distribution
\begin{align}
{\mb P}\Big\{B_i \in \Gamma \Big| B^{i-1}=b^{i-1}, A^i=a^i\Big\}=\; & {\bf P}_{V_i}\Big(\big\{V_i: h_i^{C}(b_{i-M}^{i-1}, a_{i-L}^i, V_i) \in \Gamma \big\}\Big), \hso \Gamma \in {\cal B}({\mb B}_i)\label{SCNCMD-A.D_C} \\
=\; & Q_i(\Gamma |b_{i-M}^{i-1}, a_{i-L}^i),  \hso  i=0,1, \ldots, n.   \label{NCM-A.D_CD_C}
\end{align} 
Since  any NCM-A, NCM-A.B, NCM-B.A., NCM-C, induces a channel distribution of Class A, B, C, then by the converse to the coding theorem the characterization of  FTFI is  the one given in Definition~\ref{def-gsub2sc}, in terms of directed information $I(A^n \rar B^n)$.

However, any NCM-D induces a channel distribution, which depends on past noise symbols. To gain insight into the distribution induced by any  NCM-D with correlated  noise sequence $\{V_i:i=0, \ldots, n\}$  consider the following case.\\
Any NCM-D, with $D=C$ with correlated noise satisfying  Assumptions D.(iii), (iv) induces the following distribution.  
\begin{align}
{\mb P}\Big\{B_i \in \Gamma \Big| B^{i-1}=b^{i-1}, A^i=a^i\Big\}=&{\mb P}\Big\{B_i \in \Gamma \Big| B^{i-1}=b^{i-1}, A^i=a^i, V^{i-1}=v^{i-1}\Big\} \nonumber \\
=& {\bf P}_{V_i| \overline{V}^{i-1}}\Big(\big\{V_i: h_i^{D}(b_{i-M}^{i-1}, a_{i-L}^i, V_i) \in \Gamma\big\}  \Big| \overline{v}^{i-1}\Big), \hso \Gamma \in {\cal B}({\mb B}_i)\label{SCNCMD-A.D_C_CD} \\
=\; & Q_i(\Gamma |b_{i-M}^{i-1}, a_{i-L}^i, \overline{v}^{i-1}), \hso \overline{v}^{i-1} \in \big\{v_{i-T}^{i-1}, v^{i-1}\big\}, \hso  i=0,1, \ldots, n.   \label{NCM-A.D_CD_C_CD}
\end{align}
Clearly, in general, if the  noise  is correlated then  the channel (\ref{NCM-A.D_CD_C_CD}) depends on past noise sequences, and hence we need to identify the  characterization of  FTFI from the converse to the coding theorem.   \\
%However, since  the  inverse of the map $
%v_i \in {\mathbb V}_i \longmapsto h_i^D({b}_{i-M}^{i-1},{a}_{i-L}^i, v_i), \hso i=0, \ldots, n$ 
% exists and it measurable, i.e., the inverse is $\overline{h}_i^D(b_i, {b}_{i-M}^{i-1}, {a}_{i-L}^i)$, for $i=0, \ldots, n$), then in (\ref{NCM-A.D_CD_C_CD}) we can express the conditioning in terms of the inverse function.  For example, if the noise is Markov and $B_i= h_i^D({B}_{i-1},A_i, {A}_{i-1}, v_i), \hso i=0, \ldots, n$, i.e., 
%$T=1, M=L=1$, then $v_i=\overline{h}_i^D(b_i, {b}_{i-1},a_i, {a}_{i-1})$, and (\ref{NCM-A.D_CD_C_CD}) reduces to 
%\begin{align}
%Q_i(db_i|b_{i-1}, a_i, a_{i-1},v_{i-1} ) = Q_i(db_i|b_{i-1}, a_i, a_{i-1},\overline{h}_{i-1}^D(b_{i-1}, {b}_{i-2},a_{i-1}, {a}_{i-2}) ),  \hso  i=0, \ldots, n. \label{Gen_CH}
%\end{align}
%The last identity implies the channel is transformed to another channel described by (\ref{Gen_CH}). 
%In this case, we show that the characterization of FTFI capacity is similar to  the one given in Definition~\ref{def-gsub2sc}. \\
For general NCM-D, in the next section, we identify via the converse coding theorem, the characterization of FTFI capacity and the information structures of optimal channel input distributions.

%In view of these observations, we show via running  application examples  
%
%
%
%
%
%
%
%
%In Theorem~\ref{thm_NCM}, we show under mild conditions, that any NCM driven by correlated noise process $\{V_i: i=0, \ldots, n\}$, can be       equivalently transformed, to one of the NCMs, NCM-A, NCM-A.B, NCM-B.A, or NCM-C,  which then implies channel distributions of Class A, B, C are sufficient to deal with NCMs,  driven by noise processes, which are not necessarily independent.    Indeed, an application of Theorem~\ref{thm_NCM} illustrates that the \cite{cover-pombra1989} nonnstationary nonergodic Additive Gaussian channels driven by correlated noise defined (\ref{c-p1989}) and  (\ref{CR_1}) can be equivalently transformed into  a degenerate  version of NCM-B.A,  while if the noise is finite memory then it becomes a degenerate version of  NCM-C. 

\subsection{Arbitrary Distributed NCM-D: Converse Coding Theorem \& Information Structures}
In this section we illustrate the application of information structures of optimal channel input distributions to NCM-D, with arbitrary alphabets, and correlated noise process.  Subsequently,  we apply the results to generalizations of  the the Cover and Pombra non-stationary non-ergodic Additive Gaussian Noise (AGN) Channel  \cite{cover-pombra1989} defined by (\ref{c-p1989}) and (\ref{CR_1}), including the case when the noise is finite memory.  

{\bf Nonlinear Channel Model-D.} 
Consider the specific NCM-D (see Definition~\ref{R-exa_A_D}) defined as follows.
\begin{align}
& B_i   = h_i^{D}(B_{i-M}^{i-1}, A_{i-L}^i, V_i), \hso B_{-M}^{-1}=b_{-M}^{-1},\; A_{-L}^{-1}=a_{-L}^{-1}, \hso i=0, \ldots, n,  \label{NCM-D_IS_1} \\
 &{\cal P}_{[0,n]}^{C}(\kappa) \tri 
  \Big\{   P_i(da_i|a^{i-1}, b^{i-1}), i=0, 1, \ldots, n:\frac{1}{n+1}\sum_{i=0}^n {\bf E} \Big( \gamma_i^{C.L,M}(A_{i-L}^i, B_{i-M}^{i}) \Big)  \leq \kappa\Big\}, \label{NCM-D_IS_2} \\  %hst I \tri \max\{L, N\}, \: J\tri \max\{M, L, N, K\}, 
& {\bf P}_{V_i|V^{i-1}, A^i}(dv_i|v^{i-1}, a^i)=  {\bf P}_{V_i|V_{i-T}^{i-1}}(dv_i|v_{i-T}^{i-1})-a.a.(v^{i-1}, a^i),\; i=0, \ldots, n, \label{NCM-D_IS_3}\\
&  \big\{{\mb B^i},  {\mb A}^i,  {\mb V}^i\big\}, \hso i=0, \ldots, n \hso \mbox{are arbitrary}, \\
&\mbox{the inverse of the map $ 
v_i \in {\mathbb V}_i \longmapsto h_i^D(b_{i-M}^{i-1},a_{i-L}^i, v_i), \hso i=0, \ldots, n$ exists and it measurable} \label{NCM-D_IS_4}
\end{align}
i.e., the inverse is $\overline{h}_i^D(b_i, {b}_{i-M}^{i-1}, a_{i-L}^i)$, for $i=0, \ldots, n,$, where $\{V_i: i=\ldots, -1, 0,1, \ldots, n\}$ is the noise process, 
%and for each $i$, $V^i \tri \{\ldots, V_{-1},  V_0, \ldots, V_i\}, A^i\tri \{\ldots, A_{-1}, A_0, \ldots, A_i\}, B^i\tri \{\ldots, B_{-1}, B_0, \ldots, B_i\}$, for $i=0,  \ldots, n,$ and  $P_0(da_0|a^{-1}, b^{-1})=P_0(da_0)$, ${\bf P}_{V_0|V^{-1}, A^0}={\bf P}_{V_0}$, i.e.,
 the past   information available  to the encoder and decoder at time $i=0$ is the null set, or the initial data $(b^{-1}, a^{-1})$ are known to the encoder and decoder (an alternative convention can be used  as in \cite{kim2008,kim2010}).

First, we prove  the converse to the coding theorem, as discussed in Section~\ref{NRCM_G}. Then we show that the information structures of maximizing distributions, follow directly either from Section~\ref{class_A} or Section~\ref{class_C}.   \\
The channel distribution is obtained as follows. 
\begin{align}
{\mb P}\Big\{B_i \leq b_i  \Big| B^{i-1}=b^{i-1}, A^i=a^i\Big\}=& {\mb P}\Big\{h_i^{D}(B_{i-M}^{i-1}, A_{i-L}^i, V_i) \leq b_i  \Big| B^{i-1}=b^{i-1}, A^i=a^i, V^{i-1}=v^{i-1}\Big\} \label{CH_D1}  \\  
=&{\bf P}_{V_i|B^{i-1}, A^i, V^{i-1}}\Big( \big\{V_i: V_i\leq \overline{h}_i^{D}(b_i,b_{i-M}^{i-1}, a_{i-L}^i) \big\}\Big) \label{A-NCD-3_n} \\
=& {\bf P}_{V_i|A^i, V^{i-1}}\Big(\big\{V_i:  \overline{h}_i^{D}(b_i,b_{i-M}^{i-1}, a_{i-L}^i)   \leq b_i\big\} \big\}\Big) \label{CH_D2}\\
=& {\bf P}_{V_i|V_{i-T}^{i-1}}\Big(\big\{V_i:  \overline{h}_i^{D}(b_i,b_{i-M}^{i-1}, a_{i-L}^i)   \leq b_i \big\}\Big) \hso \mbox{by (\ref{NCM-D_IS_3})} \label{A-NCD-4_n} \\
\equiv & Q_i\big((-\infty, b_i]  \Big| b_{i-M}^{i-1},a_{i-L}^i, v_{i-T}^{i-1}\big), \hso  i=0,\ldots, n \label{A-NCD-4_nn}
\end{align}
where the first and second identities (\ref{CH_D1}), (\ref{A-NCD-3_n})  follow from (\ref{NCM-D_IS_4}), the third identity (\ref{CH_D2}) follows from (\ref{NCM-D_IS_1}), the fourth identity (\ref{A-NCD-4_n}) is due to (\ref{NCM-D_IS_3}), and the last identity states that the channel distribution $\Big\{{Q}_i\big(db_i| a_{i-L}^{i}, b_{i-M}^{i-1},v_{i-T}^{i-1}\big): i=0, \ldots, n\Big\}$ is induced by the noise distribution and the channel.\\
Next, we use the assumption that the initial data are known to the encoder and decoder. Let ${\bf P}_{A_i|A^{i-1}, B^{i-1}, V^{i-1}}\equiv \overline{P}_i(da_i|a^{i-1}, b^{i-1}, v^{i-1}): i=0, \ldots,n\}$. Then by (\ref{NCM-D_IS_4}) we have $\{P_i(da_i |a^{i-1}, b^{i-1})=\overline{P}_i(da_i|a^{i-1}, v^{i-1}, b^{i-1}): i=0, \ldots, n\}$. In fact, if the initial data are known to the encoder then by knowing $\{a^{i-1}, b^{i-1}\}$ the encoder also knows $v^{i-1}$ for $i=0, \ldots, n$.  \\
Moreover, the induced joint  distributions is  given as follows\footnote{$\delta_{x}(dy)$ is the delta measure concentrated at $y=x$.}.
\begin{align}
{\bf P}^{{P}}&(dv^{i-1}, da^i,   d b^{i})=\otimes_{j=0}^i \Big({\bf P}(db_j| b^{j-1}, a^j, v^{j-1}) \otimes {\bf P}(da_j|a^{j-1},v^{j-1}, b^{j-1})\otimes \delta_{\overline{h}_{j-1}^D(b_{j-1-M}^{j-1}, a_{j-1-L}^{j-1})}(dv_{j-1})\Big) \nonumber  \\  
=& \otimes_{j=0}^i \Big(Q_j(db_j|b_{j-M}^{j-1},a_{j-L}^{j}, v_{j-T}^{j-1}) \otimes \overline{P}_j(da_j|a^{j-1},v^{j-1}, b^{j-1})\otimes \delta_{\overline{h}_{j-1}^D(b_{j-1-M}^{j-1},a_{j-1-L}^{j-1})}(dv_{j-1})\Big)  \label{CM-TC-C_4_cc-B_n} \\
\equiv & {\bf P}^{\overline{P}}(dv^{i-1}, da^i,   d b^{i}), \hso i=1, \ldots, n.
\end{align}
The channel output transition  distribution is thus, given by
\begin{align}
  \Pi_i^{\overline{P}}(db_i | b^{i-1}) =& \int_{  {\mb A}^i\times {\mb V}^{i-1} }   Q_i(db_i |  b_{i-M}^{i-1},a_{i-L}^i, v_{i-T}^{i-1}) \otimes  \overline{P}_j(da_i |a^{i-1}, v^{i-1}, b^{i-1}) \nonumber \\
  &\otimes {\bf P}^{\overline{P}}(da^{i-1}, dv^{i-1} | b^{i-1})  , \hso i=0, \ldots, n. \label{A-CM-TC-C_3_CC_B_n}
  \end{align}
  If the initial data is the null set then we can set 
  \begin{align}
  \Pi_0^{\overline{P}}(db_0 | b^{-1}) = \Pi_0^{P}(db_0)= \int_{  {\mb A}_{0} }   Q_0(db_0 | a_0) \otimes  P_0(da_0),\hso {\bf P}^{\overline P}(da_0,  d b_{0})=Q_0(db_0|a_0) \otimes P_0(da_0).
   \end{align}
Next, we  derive a converse coding theorem, which  shows that the supremum of all achievable codes is bounded above by the supremum over all channel input distribution,  of the conditional mutual information $\sum_{i=0}^n I(A_{i-L}^i, V_{i-T}^{i-1}; B_i|B^{i-1})$, and then we identify the information structures of optimal channel input distributions, so that this upper bound  is tight. 

\ \

\begin{theorem}(Converse coding theorem \& information structures for arbitrary NCM-D)\\
\label{CC_CP_A}
Consider the NCM-D defined by (\ref{NCM-D_IS_1})-(\ref{NCM-D_IS_4}). Then we have the following.\\
{\it (1)  Converse coding theorem.} If there exists a sequence of feedback codes $\big\{(n, { M}_n, \epsilon_n): n=0, 1, \ldots, \}$ as defined in Section~\ref{introduction}, (a), (b) such that  such that $\lim_{n\longrightarrow\infty} {\epsilon}_n=0$ (i.e., the probability of decoding error goes to zero) then\footnote{The superscript notation $I^g(\cdot; \cdot|\cdot)$ indicates that the distributions depend on encoding strategies.} 
\begin{align} 
 R \leq & \liminf_{n \longrightarrow\infty}\frac{1}{n+1}\log{{M}_n} \label{rate_A-LCM_n} \\
{\leq} \: & \liminf_{n \longrightarrow\infty}  \sup_{  \overline{\cal E}_{[0,n]}^{FB}(\kappa)}  \frac{1}{n+1}  \sum_{i=0}^n I^{\overline{g}}(A_{i-L}^i, V_{i-T}^{i-1}; B_i| B^{i-1}), \hso a_i=\overline{g}_i(w, a^{i-1}, v^{i-1},  b^{i-1})
 \label{CCIS_6a_BC_n}\\
{\leq} \: & \liminf_{n \longrightarrow \infty} \sup_{  \overline{\cal P}_{[0,n]}^{D}(\kappa)}  \frac{1}{n+1}\sum_{i=0}^n I(A_{i-L}^i, V_{i-T}^{i-1}; B_i|B^{i-1}) \label{CCIS_8_BC_n} \\
\equiv & \liminf_{n \longrightarrow \infty}\frac{1}{n+1} {C}_{W ; B^n}^{FB,D}(\kappa) 
\end{align}
where 
\begin{align}
&{C}_{W ; B^n}^{FB,D}(\kappa)\tri  \sup_{\overline{\cal P}_{[0,n]}^{C}(\kappa)}\sum_{i=0}^n  I(A_{i-L}^i, V_{i-T}^{i-1}; B_i|B^{i-1}), \label{CCIS_8_BC_D_n}\\
&I(A_{i-L}^i, V_{i-T}^{i-1}; B_i|B^{i-1})= \int_{{\mb B}^i\times {\mb A}^i \times {\mb V}^{i-1} } 
\log\Big(\frac{dQ_i(\cdot|b_{i-M}^{i-1}, a_{i-L}^{i},v_{i-T}^{i-1})}{d\Pi_i^{\overline P}(\cdot|b^{i-1})}(b_i)\Big) Q_i(db_i|b_{i-M}^{i-1},a_{i-L}^i, v_{i-T}^{i-1}), \nonumber \\
& \hst \hst \hst  \otimes \overline{P}_i(da_i |a^{i-1}, v^{i-1},  b^{i-1})\otimes {\bf P}^{\overline{P}}(da^{i-1},dv^{i-1}|b^{i-1})\otimes {\bf P}^{\overline{P}}(db^{i-1}), \hso i=0, \ldots, n,
\\
&\overline{\cal E}_{[0,n]}^{FB}(\kappa) \tri\Big\{ \overline{g}_i(w, a^{i-1}, v^{i-1},  b^{i-1}), i=0,\ldots,n:  \frac{1}{n+1} {\bf E}^{\overline g}\Big( \sum_{i=0}^n \gamma_i^{C.L,M}(A_{i-L}^i, B_{i-M}^{i}) \leq \kappa  \Big)  \Big\},  \\
&\overline{\cal P}_{[0,n]}^{D}(\kappa) \tri \Big\{ \overline{P}_i(da_i| a^{i-1}, v^{i-1},  b^{i-1}), i=0,\ldots,n:  \frac{1}{n+1} {\bf E}^{\overline P}\Big( \sum_{i=0}^n \gamma_i^{C.L, M}(A_{i-L}^i, B_{i-M}^i) \leq \kappa  \Big)  \Big\} 
\end{align} 
provided the following conditions hold.\\
(a) The supremum of $\sum_{i=0}^n I(A_{i-L}^i, V_{i-T}^{i-1}; B_i|B^{i-1})$  over $\overline{\cal P}_{[0,n]}^{C}(\kappa)$ in       (\ref{CCIS_8_BC_n}) for any finite $n$  in achieved in the set  (i.e., the maximizing distribution exists).\\
(b) The $liminf_{n \longrightarrow \infty}$ in (\ref{CCIS_8_BC_n}) is finite.\\
{(2) Information structures.}  The optimal distribution  in   (\ref{CCIS_8_BC_D_n}) satisfies  conditional independence 
\begin{align}
\overline{P}_i(da_i| a^{i-1}, v^{i-1},  b^{i-1})=\pi_i^{D.L, T}(da_i| a_{i-L}^{i-1}, v_{i-T}^{i-1},  b^{i-1})-a.a.(a^{i-1}, v^{i-1}, b^{i-1}), \hso i=0,\ldots,n \label{IS_D_1}
\end{align}
and the corresponding charactrization of FTFI capacity is given by 
\begin{align}
{C}_{W ; B^n}^{FB,D.L,T}(\kappa)\tri & \sup_{\overline{\cal P}_{[0,n]}^{D.L,T}(\kappa)}\sum_{i=0}^n {\bf E}^{\pi^{D.L,T}}\Big\{ \log\Big(\frac{dQ_i(\cdot|B_{i-M}^{i-1}, A_{i-L}^{i},V_{i-T}^{i-1})}{d\Pi_i^{\pi^{A.L,T}}(\cdot|B^{i-1})}(B_i)\Big)\Big\} \label{CCIS_8_BC_D_nn}
\end{align}
where 
\begin{align}
\overline{\cal P}_{[0,n]}^{D.L,T}(\kappa) \tri& \Big\{ \pi_i^{D.L,T}(da_i| a_{i-L}^{i-1}, v_{i-T}^{i-1},  b^{i-1}), i=0,\ldots,n:  \frac{1}{n+1} {\bf E}^{\pi^{D.L,T}}\Big( \sum_{i=0}^n \gamma_i^{C.L, M}(A_{i-L}^i, B_{i-M}^i) \leq \kappa  \Big)  \Big\} 
\end{align}
and the joint and transition probability distributions are given as follows.
\begin{align}
&{\bf P}^{\pi^{D.L,T}}(dv^{i-1}, da^i,   d b^{i})=
 \otimes_{j=0}^i \Big(Q_j(db_j|b_{j-M}^{j-1},a_{j-L}^{j}, v_{j-T}^{j-1}) \otimes \pi^{A.L,T}_j(da_j|a_{j-L}^{j-1},v_{j-T}^{j-1}, b^{j-1}) \nonumber \\
& \hst \hst \hst \hst  \otimes {\bf P}^{\pi^{D.L,T}}(da_{j-L}^{j-1}, dv_{j-T}^{j-1}|b^{j-1}) \otimes \delta_{\overline{h}_{j-1}^D(b_{j-1-M}^{j-1},a_{j-1-L}^{j-1})}(dv_{j-1})\Big), \hso i=0, \ldots, n, \label{CM-TC-C_4_cc-B_nn}  \\
&  \Pi_i^{\pi^{D.L,T}}(db_i | b^{i-1}) = \int_{  {\mb A}_{i-L}^i\times {\mb V}_{i-T}^{i-1} }   Q_i(db_i |  b_{i-M}^{i-1},a_{i-L}^i, v_{i-T}^{i-1}) \otimes  \pi^{D.L,T}_i(da_i |a_{i-L}^{i-1}, v_{i-T}^{i-1}, b^{i-1}) \nonumber \\
  & \hst \hst \hst \hst \otimes {\bf P}^{\pi^{D.L,T}}(da_{i-L}^{i-1}, dv_{i-T}^{i-1} | b^{i-1}) \label{A-CM-TC-C_3_CC_B_nn}
%  \\
%&  \Pi_0^{\pi^{A.L,T}}(db_0 | b^{-1}) = \Pi_0^{P}(db_0)= \int_{  {\mb A}_{0} }   Q_0(db_0 | a_0) \otimes  P_0(da_0). \label{IS_D_2}
   \end{align} 
 where the \'a posteriori distribution $\big\{{\bf P}^{\pi^{D.L,T}}(da_{i-L}^{i-1}, dv_{i-T}^{i-1} | b^{i-1}): i=0, \ldots, n\big\}$ satisfies a recursion.    
 \end{theorem}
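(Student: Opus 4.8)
The plan is to prove Theorem~\ref{CC_CP_A} in two parts, mirroring the structure of its statement. For part (1), the converse coding theorem, I would follow the standard Fano-type argument adapted to feedback codes over channels with memory, as in \cite{massey1990,kramer1998,permuter-weissman-goldsmith2009}, but carefully tracking the extra conditioning variables $V_{i-T}^{i-1}$ that arise because the noise is correlated. First I would observe that since the map $v_i \mapsto h_i^D(b_{i-M}^{i-1}, a_{i-L}^i, v_i)$ is invertible and measurable (Assumption D.(iv), i.e.\ (\ref{NCM-D_IS_4})), the encoder's knowledge of $(w, a^{i-1}, b^{i-1})$ together with the known initial data is equivalent to knowledge of $(w, a^{i-1}, v^{i-1}, b^{i-1})$; this justifies replacing the feedback code class ${\cal E}_{[0,n]}^{FB}(\kappa)$ by the augmented class $\overline{\cal E}_{[0,n]}^{FB}(\kappa)$ without loss. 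Then, starting from $\log M_n = H(W)$ and applying Fano's inequality, $H(W|B^n) \leq 1 + \epsilon_n \log M_n$, I would bound
\begin{align}
(1-\epsilon_n)\log M_n - 1 \leq I(W; B^n) \leq \sum_{i=0}^n I(W, A_{i-L}^i, V_{i-T}^{i-1}; B_i | B^{i-1}). \nonumber
\end{align}
The key chain-rule manipulation is to show $I(W;B_i|B^{i-1}) \leq I(A_{i-L}^i, V_{i-T}^{i-1}; B_i|B^{i-1})$, which uses the Markov chain $W - (A_{i-L}^i, V_{i-T}^{i-1}, B^{i-1}) - B_i$; this chain holds because, conditioned on $(A_{i-L}^i, V_{i-T}^{i-1})$, the channel output $B_i = h_i^D(B_{i-M}^{i-1}, A_{i-L}^i, V_i)$ has distribution $Q_i(\cdot | b_{i-M}^{i-1}, a_{i-L}^i, v_{i-T}^{i-1})$ given by (\ref{A-NCD-4_nn}), which no longer depends on $W$ or on the remaining past. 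Dividing by $n+1$, passing to $\liminf$, and maximizing over the admissible encoder/input-distribution classes yields (\ref{CCIS_6a_BC_n})--(\ref{CCIS_8_BC_n}), with conditions (a), (b) ensuring the bound is finite and the supremum attained.

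For part (2), the information structure, the plan is to invoke the machinery of Section~\ref{randomized} applied to the augmented "channel" whose input at time $i$ is $(A_{i-L}^i, V_{i-T}^{i-1})$ and whose output conditional distribution is $Q_i(db_i|b_{i-M}^{i-1}, a_{i-L}^i, v_{i-T}^{i-1})$. The crucial point is that this augmented channel is precisely of Class C form (\ref{CD_C5}) in the augmented input variable: it depends on finitely many past outputs $b_{i-M}^{i-1}$ and on the finite window $(a_{i-L}^i, v_{i-T}^{i-1})$. The transmission cost $\gamma_i^{C.L,M}(A_{i-L}^i, B_{i-M}^i)$ is likewise of Class C in the channel input $A_{i-L}^i$ (here $N = L$, so $I = \max\{L,N\} = L$). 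Hence Theorem~\ref{cor-ISR} (via the reduction in Section~\ref{class_C}, specifically (\ref{CM-TC-C_1_CC})--(\ref{A_POST_2_nn})) applies directly, after noting that the object being optimized, $\overline{P}_i(da_i|a^{i-1}, v^{i-1}, b^{i-1})$, plays exactly the role of the channel input conditional distribution, and the relevant "state" is $(A_{i-L}^{i-1}, V_{i-T}^{i-1}, B^{i-1})$ with the reconditioning/Markov argument of the proof of Theorem~\ref{cor-ISR} carrying over verbatim. I would first verify the analog of the conditional-independence identity (\ref{joint_p}), namely that the joint process $\{(A_{i-L}^{i-1}, V_{i-T}^{i-1}, B^{i-1})\}$ together with the current input is Markov when controlled, using Bayes' rule, the channel form (\ref{A-NCD-4_nn}), and the noise recursion (\ref{NCM-D_IS_3}); then apply the variational equality of Theorem~\ref{thm-var}(a) as in (\ref{equation15a_VE_300})--(\ref{equation102_VE_31}) to obtain the achievable upper bound that forces the optimal distribution into the conditional-independence subset (\ref{IS_D_1}). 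The expressions (\ref{CCIS_8_BC_D_nn})--(\ref{A-CM-TC-C_3_CC_B_nn}) for the joint and transition distributions, and the recursion for the \'a posteriori distribution, then follow by specializing (\ref{CM-TC-C_4_cc-B_n})--(\ref{A-CM-TC-C_3_CC_B_n}) to the conditionally independent strategy and using the delta-measure bookkeeping $\delta_{\overline{h}_{j-1}^D(\cdot)}(dv_{j-1})$ that records how past noise is reconstructed from past inputs and outputs.

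The main obstacle I anticipate is the careful justification of the equivalence between the original feedback-code class and the $V$-augmented class, and the associated claim that conditioning on $v^{i-1}$ in the input distribution is "free." This is not merely notational: it relies essentially on Assumption D.(iv) (invertibility of $h_i^D$ in its noise argument), on the initial data being known to encoder and decoder, and on the fact that the decoder in the converse has no such $V$-knowledge --- so the mutual information upper bound must be stated in terms of $I(A_{i-L}^i, V_{i-T}^{i-1}; B_i|B^{i-1})$ with $B^{i-1}$ (not an augmented history) as the conditioning variable, which is why the \'a posteriori distribution $\{{\bf P}^{\pi^{D.L,T}}(da_{i-L}^{i-1}, dv_{i-T}^{i-1}|b^{i-1})\}$ appears and must be estimated at the decoder. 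A secondary subtlety is that, unlike the Class C case with $L=N=0$, the state here genuinely carries memory (past inputs and past noise), so the \'a posteriori distribution is not itself a Markov sufficient statistic --- only the pair (\'a posteriori distribution, $B^{i-1}$) is --- exactly as flagged in the Sufficient Statistic remark following Theorem~\ref{cor-ISR}; I would state this explicitly to avoid the error of claiming a Markov recursion where none exists. Apart from these points, the proof is a direct transcription of the Section~\ref{randomized} arguments to the augmented setting, so I would keep the write-up brief and reference Theorem~\ref{cor-ISR} and Theorem~\ref{thm-var} rather than reproducing their proofs.
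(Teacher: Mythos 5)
Your proposal is correct and follows essentially the same route as the paper: part (1) is the Fano-plus-data-processing argument in which the invertibility of $h_i^D$ in its noise argument lets the conditioning be augmented by $v^{i-1}$ and the channel/noise structure collapses the conditioning to $(A_{i-L}^i, V_{i-T}^{i-1}, B_{i-M}^{i-1})$, and part (2) is obtained exactly as you describe, by applying the Class A/C information-structure machinery of Section~\ref{randomized} with the augmented state $(a_{i-L}^{i-1}, v_{i-T}^{i-1})$ in place of $a_{i-L}^{i-1}$. Your additional remarks on the Markov chain $W \leftrightarrow (A_{i-L}^i, V_{i-T}^{i-1}, B^{i-1}) \leftrightarrow B_i$ and on the non-Markovianity of the \'a posteriori distribution alone are consistent with (and slightly more explicit than) the paper's own treatment.
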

\begin{proof}
(1) Consider any sequence of feedback codes as defined in Section~\ref{introduction}, (a), (b).  Suppose $R$ is achievable, so  there exists an  $(n, { M}_n, \epsilon_n)$ block code ${\cal C}_n=(u_0, u_1,u_2,\dots,u_{{M}_n})$ such that $\lim_{n\longrightarrow\infty} {\epsilon}_n=0$ and $\liminf_{n \longrightarrow\infty}\frac{1}{n+1}\log{{M}_n}\geq R$. Then for each $n$, since $W \in {\cal M}_n$ is uniformly distributed and  in view of  Fano's inequality   \cite{cover-thomas2006},  the following inequalities hold\footnote{The superscript $g$ on $H^g(\cdot), I^g(\cdot; \cdot), {\bf E}^g\{\cdot\}$ indicates that the distributions are induced by the channel and $\{g_i(\cdot, \cdot): i=0, \ldots, n\} \in {\cal E}_{[0,n]}^{FB}(\kappa)$.}.
\begin{align}
\log  M_n =& H(W) =H^g(W|B^n)+ I^g(W;B^n), \hso  \forall  \{g_i(\cdot, \cdot): i=0, \ldots, n\} \in {\cal E}_{[0,n]}^{FB}(\kappa) \nonumber \\
\leq & h(\varepsilon_n) + \varepsilon_n \log M_n + I^g(W; B^n), \hso h(z) \tri -z\log z -(1-z)\log (1-z), \hso z \in [0,1]  \nonumber \\
= & h(\varepsilon_n) + \varepsilon_n \log M_n  + \sum_{i=0}^n I^g(W; B_i|B^{i-1})
\end{align}
where 
\begin{align}
I^g(W; B^n) =& \sum_{i=0}^n I^g(W; B_i|B^{i-1}),  \hso a_i=g_i(w, a^{i-1}, b^{i-1}), \; i=0, \ldots, n  \label{EQ_1} \\
=&  \sum_{i=0}^n {\bf E}^g \Big\{ 
\log\Big(\frac{d{\bf P}^g(\cdot|B^{i-1}, W)}{{\bf P}^{g}(\cdot|B^{i-1})}(B_i)\Big) \Big\} \label{EQ_2} \\
=& \sum_{i=0}^{n} {\bf E}^g \Big\{ 
\log\Big(\frac{d{\bf P}^g(\cdot|B^{i-1}, \{g_j(W, A^{j-1}, B^{j-1}): j=0, \ldots, i\}, W)}{{\bf P}^{g}(\cdot|B^{i-1})}(B_i)\Big) \Big\} \label{EQ_3} \\
=& \sum_{i=0}^{n} {\bf E}^g \Big\{ 
\log\Big(\frac{d{\bf P}^g(\cdot|B^{i-1}, \{g_j(W, A^{j-1}, B^{j-1}): j=0, \ldots, i\}, W, V^{i-1})}{{\bf P}^{g}(\cdot|B^{i-1})}(B_i)\Big) \Big\} \label{EQ_4} \\
=& \sum_{i=0}^n  {\bf E}^{\overline{g}} \Big\{ 
\log\Big(\frac{d{\bf P}^{\overline{g}}(\cdot|B^{i-1}, \{\overline{g}_j(W,A^{j-1}, V^{j-1}, B^{j-1}): j=0, \ldots, i\}, W, V^{i-1})}{{\bf P}^{\overline{g}}(\cdot|B^{i-1})}(B_i)\Big) \Big\} \label{EQ_5} \\
\leq &    \sup_{ \overline{\cal E}_{[0,n]}^{FB}(\kappa)}\sum_{i=0}^n {\bf E}^{\overline{g}} \Big\{ 
\log\Big(\frac{d{\bf P}^{\overline{g}}(\cdot|B_{i-M}^{i-1},\{ \overline{g}_j(W, A^{j-1}, V^{j-1}, B^{j-1}):j=i-L, \ldots, i\}, V_{i-T}^{i-1})}{{\bf P}^{\overline{g}}(\cdot|B^{i-1})}(B_i)\Big) \Big\} \label{EQ_6} \\
\leq &  \sup_{ \overline{\cal P}_{[0,n]}^{D}(\kappa) }\sum_{i=0}^n  {\bf E}^{\overline P} \Big\{ 
\log\Big(\frac{dQ_i(\cdot|B_{i-M}^{i-1},A_{i-L}^i, V_{i-T}^{i-1})}{\Pi_i^{\overline P}(\cdot|B^{i-1})}(B_i)\Big) \Big\} \label{EQ_7} \\
\equiv & \sup_{ \overline{\cal P}_{[0,n]}^{D}(\kappa) }\sum_{i=0}^n I(A_{i-L}^i, V_{i-T}^{i-1}; B_i|B^{i-1}) \label{CCT-EQ2_a}
\end{align}
where (\ref{EQ_1}) is due to chain rule of mutual information,    (\ref{EQ_2}) is by definition, (\ref{EQ_3}) is also by definition, i.e., the distributions are evaluated for a fixed encoding strategy, (\ref{EQ_4}) and (\ref{EQ_5}) are  due to the invertibility condition (\ref{NCM-D_IS_4}), 
(\ref{EQ_6}) follows from the channel definition and by taking the supremum, (\ref{EQ_7}) is due to the fact that $\overline{\cal E}_{[0,n]}^{FB}(\kappa) \subseteq    \overline{\cal P}_{[0,n]}^{C}(\kappa)$, since  $\{\overline{P}_i(da_i|a^{i-1}, v^{i-1}, b^{i-1}): i=0, \ldots, n\}$ are not   necessarily generated by uniform RVs $W$. 
%the data processing inequality, due  to  (\ref{CI_Massey}), i.e., (\ref{G-LCM_2_AG_n}). 
 From the above inequalities we extract the following inequalities.  
\begin{align}
\log M_n \leq  & h(\varepsilon_n) + \varepsilon_n \log M_n + \sup_{ \overline{\cal E}_{[0,n]}^{FB}(\kappa)  }\sum_{i=0}^n I^{\overline{g}}(A_{i-L}^{i}, V_{i-T}^{i-1}; B_i|B^{i-1}), \hso a_i=\overline{g}_i(w, v^{i-1}, b^{i-1}), \; i=0, \ldots, n  \nonumber   \\
\leq  & h(\varepsilon_n) + \varepsilon_n \log M_n +        \sup_{ \overline{\cal P}_{[0,n]}^{D}(\kappa)  }\sum_{i=0}^n I(A_{i-L}^i, V_{i-T}^{i-1}; B_i|B^{i-1})  \label{CCT-EQ3_n}
\end{align} 
By conditions (a), (b), there exists a channel input distribution, which achieves the  supremum in (\ref{CCT-EQ3_n}) and its per unit time limit exists and it is finite, hence    
by dividing both sides of the above inequalities  by $(n+1)$ and  taking the limit, as $n \longrightarrow \infty$, then    $\varepsilon_n \longrightarrow 0$ and  $h(\varepsilon_n) \longrightarrow 0$, and moreover the inequalities 
(\ref{rate_A-LCM_n})-
(\ref{CCIS_8_BC_n}) are obtained.  \\
(2) Next, we show the statements regarding the information structure of the optimal distribution  
(\ref{IS_D_1})-(\ref{A-CM-TC-C_3_CC_B_nn}). These are easily  obtained from   
Section~ \ref{Class_A_SOC} or Section~\ref{class_C} with $\{(a_{i-L}^{i-1}, v_{i-T}^{i-1}): i=0, \ldots, n\}$ replacing $\{a_{i-L}^{i-1}: i=0, \ldots, n\}$. This completes the prove.
\end{proof}

Note that for each $i$, the noise sequence $v^{i-1}$ is available to the encoder, but not at the decoder, and hence,  it is a state variable that needs to estimated at the decoder for $i=0, \ldots, n$. This is analogous to the discussion in  Remark~\ref{SC_REM}, (b). 

In general, the noise distribution may be described via another recursive nonlinear dynamical model. We discuss an example, below.

\ \

\begin{example}(Nonlinear and linear noise models)\\
Consider the NCM-D defined by (\ref{NCM-D_IS_1})-(\ref{NCM-D_IS_4}), with $T=1$. Then we can model the noise process as follows.\\
(a) Nonlinear Noise Model. A Nonlinear Noise Model (NNM) is described by the recursion
\begin{align}
V_i=f_i(V_{i-1}, W_i), \hso V_{-1}=v_{-1}, \hso i=0, \ldots, n
\end{align}  
where ${\mathbb V}_i={\mb R}^p, {\mb W}_i={\mathbb R}^r$, $W_i\in {\mathbb R}^r$, and the noise process $\{W_i: i=0, \ldots, n\}$ is independent and identically distributed, independent of $V_{-1}$.  Then ${\bf P}_{V_i|V^{i-1}}={\bf P}_{V_i|V_{i-1}}, i=0, \ldots, n$, i.e., the noise process is Markov.\\
(b) Linear Noise Model. A Linear Noise Model (LNM) is a special case of NNM, described by 
\begin{align}
V_i=A_i V_{i-1}+ B_i W_i, \hso V_{-1}=v_{-1}, \hso i=0, \ldots, n
\end{align} 
(c) Gaussian Linear Noise Model. A Gaussian Linear Noise Model (G-LNM) is a special case of  LNM with  Gaussian distributed  noise, i.e.,  $\{W_i\sim N(0, \Sigma_{W_i}): i=0, \ldots, n\}$. This implies, that  $\{V_i:i=0, \ldots, n\}$ for fixed $V_{-1}=v_{-1}$ is also Gaussian. 
\end{example}

Next, we discuss generalizations of Theorem~\ref{CC_CP_A} to different NCM-D.\\

\begin{remark}(Alternative NCM-D)\\
\label{NCM_CP}
(a) If  the noise distribution of the NCM-D defined by  (\ref{NCM-D_IS_1})-(\ref{NCM-D_IS_4}) is replaced by 
\begin{align}
{\bf P}_{V_i|V^{i-1}, A^i}(dv_i|v^{i-1}, a^i)=  {\bf P}_{V_i|V^{i-1}}(dv_i|v^{i-1})-a.a.(v^{i-1}, a^i),\; i=0, \ldots, n \label{NCM-D_IS_3_nn}
\end{align}
then the analog of  Theorem~\ref{CC_CP_A} is obtained by substituting $v_{i-T}^{i-1} \longmapsto v^{i-1}, i=0, \ldots,n$ in all equations, i.e., the optimal distribution satisfies conditional independence condition
\begin{align}
\overline{P}_i(da_i| a^{i-1}, v^{i-1},  b^{i-1})=\pi_i^{A.L}(da_i| a_{i-L}^{i-1}, v^{i-1},  b^{i-1})-a.a.(a^{i-1}, v^{i-1}, b^{i-1}), \hso i=0,\ldots,n \label{IS_D_1_1}
\end{align}
(b) If the channel of the NCM-D defined by  (\ref{NCM-D_IS_1})-(\ref{NCM-D_IS_4}) is replaced by  
\begin{align}
B_i= h_i^D(B_{i-M}^{i-1}, A^i, V_i), \hso i=0, \ldots, n
\end{align}
then the analog of  Theorem~\ref{CC_CP_A} is  obtained by substituting $a_{i-L}^{i-1} \longmapsto a^{i-1}, i=0, \ldots,n$ in all equations.
\end{remark}

In view of the main theorems obtained thus far, next we relate our characterizations of FTFI capacity to existing results found in the literature, specifically, \cite{cover-pombra1989,kim2010}.

\subsubsection{Arbitrary Distributed  Additive Channel Noise Models}
\label{LCM_G}
Consider a model  called Arbitrary Distributed Additive Channel Noise  (ACN) model   with transmission cost constraint, defined as follows. 
 \begin{align}
&B_i  =A_i + V_{i},   \hso i= 0, \ldots,n, \label{G-LCM_1_AG} \\
& {\bf P}_{V_i|V^{i-1}, A^i}={\bf P}_{V_i|V^{i-1}}, \hso i=0, \ldots, n,   \label{G-LCM_2_AG}  \\
 & {\cal P}_{[0,n]}^{C.0,0}(\kappa) \tri 
  \Big\{   P_i(da_i|a^{i-1}, b^{i-1}), i=0, 1, \ldots, n: \frac{1}{n+1} {\bf E}^{P}\Big( \sum_{i=0}^n \gamma_i^{C.0, 0}(A_i, B_{i}) \leq \kappa  \Big)\Big\}, \label{G-LCM_4_AG}  \\
&  \big\{{\mb B_i},  {\mb A}_i,  {\mb V}_i\big\}, \hso i=0, \ldots, n \hso \mbox{are arbitrary} \label{G-LCM_6_AG} 
\end{align}
where for each $i$, $V^i \tri \{V_0, \ldots, V_i\}, A^i\tri \{A_0, \ldots, A_i\}$, for $i=0,  \ldots, n,$ and  $P_0(da_0|a^{-1}, b^{-1})=P_0(da_0)$, ${\bf P}_{V_0|V^{-1}, A^0}={\bf P}_{V_0}$, i.e., no information is available at time $i=0$ to the encoder and decoder,  the alphabet spaces are arbitrary, and the noise process $\{V_i: i=0, \ldots, n\}$ is arbitrary distributed. The above model is a generalization of the non-stationary non-ergodic AGN channel investigated by Cover and Pombra \cite{cover-pombra1989}.
%Note also that for each $i$, and fixed $b^{i-1}$ then $a_^{i-1}=b^{i-1}-v_0^{i-1}$  is uniquely defined from knowledge of $v_0^{i-1}$ and vice-versa. Hence, by properties of conditional mutual information the following identity holds. 
%\begin{align}
%I(A_i, V_0^{i-1}; B_i|B^{i-1})= I(A_i, A^{i-1}; B_i|B^{i-1}), \hso i=0, \ldots, n.
%\end{align}

The characterization of the FTFI capacity is obtained from Theorem~\ref{CC_CP_A}, as discussed in Remark~\ref{NCM_CP}, (a). 
Next, we express the characterization of FTFI capacity  in terms of random processes $\{A_i, B_i): i=0, \ldots, n\}$.  To simplify the presentation,   we assume all distributions are absolutely continuous with respect to the Lebesgue measures, i.e.,  ${\bf P}_{X|Z}(dx|z)= {\bf p}_{X|Z}(x|z)dx$, thus, lower case functions denote  probability density functions.\\
By Remark~\ref{NCM_CP} and  (\ref{A-NCD-4_n}),  (\ref{A-CM-TC-C_3_CC_B_n}), and since $a_i=b_i-v_i, i=0, \ldots, n$,  we have the following. 
\begin{align}
&{\bf p}_{B_i|B^{i-1},A^i}(b_i|a^i, b^{i-1})={\bf p}_{V_i|V^{i-1}}(b_i-a_i|v^{i-1}), \hso i=0, \ldots, n,\\
&{\bf p}_{B_i|B^{i-1}}(b_i | b^{i-1}) = \int_{  {\mb A}_{i}\times {\mathbb V}^{i-1} } {\bf p}_{V_i|V^{i-1}}(b_i-a_i|v^{i-1}) {\bf p}_{A_i|V^{i-1}, B^{i-1}}(a_i |v^{i-1}, b^{i-1}) {\bf p}_{V^{i-1}|B^{i-1}}(v^{i-1} | b^{i-1}) da_i dv^{i-1}.
\end{align}
By change of variables of integration we obtain the following.
\begin{align}
{\bf p}_{B_i|B^{i-1}}(b_i | b^{i-1})=&\int_{{\mathbb A}^{i} } {\bf p}_{V_i|V^{i-1}}(b_i-a_i|b^{i-1}-a^{i-1}) {\bf p}_{A_i|A^{i-1}, B^{i-1}}(a_i |a^{i-1}, b^{i-1}) {\bf p}_{V^{i-1}|B^{i-1}}(b^{i-1}-a^{i-1} | b^{i-1}) da^i.
\end{align}
Therefore,  the characterization of FTFI capacity, i.e., the analog of  (\ref{CCIS_8_BC_D_nn}) is  the following.
\begin{align}
&{C}_{W ; B^n}^{FB,D.0}(\kappa) = \sup_{\big\{{\bf p}_{A_i| A^{i-1},  B^{i-1}}, i=0,\ldots,n:  \frac{1}{n+1} {\bf E}\Big( \sum_{i=0}^n \gamma_i^{C.0, 0}(A_i, B_{i}) \leq \kappa  \Big)  \big\}  } \sum_{i=0}^n \int_{ {\mb B}^i \times {\mb A}^{i} } 
\log\Big( \frac{  {\bf p}_{V_i|V^{i-1}}(b_i-a_i|b^{i-1}-a^{i-1})}{ {\bf p}_{B_i|B^{i-1}}(b_i|b^{i-1})}\Big) \nonumber \\
& {\bf p}_{V_i|V^{i-1}}(b_i-a_i|b^{i-1}-a^{i-1})    
 {\bf p}_{A_i|A^{i-1}, B^{i-1}}(a_i| a^{i-1},  b^{i-1}) {\bf p}_{V^{i-1}|B^{i-1}}(b^{i-1}-a^{i-1}|b^{i-1}) {\bf p}_{B^{i-1}}(b^{i-1})da^idb^i. \label{NEW_1}
\end{align} 
Note that by the  additive noise channel  property, the distribution   $\{P_i(da_i |v^{i-1}, b^{i-1}): i=0, \ldots, n\}$ uniquely defines $\{P_i(da_i |a^{i-1}, b^{i-1}): i=0, \ldots, n\}$ and vice-versa, and this also holds for general recursive NCMs, under mild conditions, i.e., the invertibility condition  (\ref{NCM-D_IS_4}). 
% then  we obtain   the following  characterization of FTFI capacity.  
% \begin{align}
%{C}_{W ; B^n}^{FB,LCM}(\kappa) \tri & \sup_{\big\{P_i(da_i| v^{i-1},  b^{i-1}), i=0,\ldots,n:  \frac{1}{n+1} {\bf E}^{P}\Big( \sum_{i=0}^n \gamma_i^{C.0, 0}(A_i, B_{i}) \leq \kappa  \Big)  \big\}  } \sum_{i=0}^n {\bf E}^P \Big\{ 
%\log\Big(\frac{dQ_i(\cdot|V^{i-1}, A_i)}{d\Pi_i^{ P}(\cdot|B^{i-1})}(B_i)\Big) \Big\} \label{CM-TC-C_1_CC_B_A} 
%\end{align}

%Then directed information $I(A^n \rar B^n)$ is given by 
%\begin{align}
%I(A^n \rar B^n) = &  \sum_{i=0}^n {\bf E}^P \Big\{ 
%\log\Big(\frac{dQ_i(\cdot|A_i, V_0^{i-1})}{d\Pi_i^{ P}(\cdot|B^{i-1})}(B_i)\Big) \Big\}\equiv \sum_{i=0}^n I(A_i, V_{0}^{i-1}; B_i|B^{i-1}) \label{A-CM-TC-C_1_CC_B} \\
%=& \sum_{i=0}^n H(B_i|B^{i-1})- \sum_{i=0}^n H(V_i|V_{0}^{i-1}).
%\end{align}

\subsubsection{Non-stationary Non-ergodic AGN Channel: Orthogonal Decomposition \& Relation to Cover and Pombra \cite{cover-pombra1989}} Suppose the channel is the non-stationary  non-ergodic  Additive Gaussian Noise (AGN) channels  with memory,  
defined by (\ref{c-p1989}), i.e., the following hold
\begin{align}
\{V_i: i=0,1, \ldots, n\} \sim  N(\mu_{V^n}, K_{V^n}), \hso   \gamma_i^{C.0.0}(a_i,b_i) \tri |A_i|^2,  \hso i=0, \ldots, n. \label{c-p1989_G}    
\end{align}
Then by the entropy maximizing property of Gaussian processes, as in \cite{cover-pombra1989}, it follows from the definition of ${C}_{W ; B^n}^{FB,D.0}(\kappa)$ given by  (\ref{NEW_1}), that   the maximizing   channel input distribution induces a Gaussian joint distribution for the   joint process $\{(A_i, B_i, V_i)=(A_i^g, B_i^g, V_i): i=0, \ldots, n\}$,   the average constraint is satisfied, and condition (\ref{G-LCM_2_AG}) holds. Since  a linear combination of RVs is Gaussian if and only all RVs are Gaussian, then a 
  realization of the  channel input process corresponding to (\ref{NEW_1}) is the following.\\
  {\bf Orthogonal Decomposition.} 
 \begin{align}
  &A_i^g = \sum_{j=0}^{i-1} \gamma_{i,j}^1 B_j^g + \sum_{j=0}^{i-1} \gamma_{i,j}^2 V_j  + Z_i^g, \hso i=1, \ldots, n, \hso A_0=Z_0^g, \label{OP_STR_1_CP_n} \\
&\hst  \equiv  N_i +    M_i, \hso N_i\tri \sum_{j=0}^{i-1} \gamma_{i,j}^1 B_j^g + \sum_{j=0}^{i-1} \gamma_{i,j}^2 V_j, \hso         M_i\tri Z_i^g,   \label{DM_CP_n}\\
& Z_i^g  \:\:  \mbox{is  independent of}\:\:  \Big(A^{g, i-1}, B^{g,i-1}\Big), \; Z^{g,i} \hso \mbox{is independent of} \hso V^i, i=0, \ldots, n, \label{new_11_n}\\
& \Big\{Z_i^g \sim N(0, K_{Z_i}) : i=0,1, \ldots, n\Big\} \: \: \mbox{is an independent  Gaussian process} \label{new_12_n}
  \end{align}
for some deterministic sequences $\{(\gamma_{i,j}^1, \gamma_{i,j}^2): i=0, \ldots, n, j=0, \ldots, i-1\}$. The decomposition (\ref{DM_CP_n}) is unique due to the orthogonality condition (\ref{new_11_n}). 

{\bf Reduction of Orthogonal Decomposition to \cite{cover-pombra1989}.}  From the above decomposition, by recursive substitution, we can obtain the realization of optimal channel input distribution  derived by Cover and Pombra \cite{cover-pombra1989}, i.e.,  (\ref{cp1989_a}),  (\ref{cp1989}), as follows. 
\begin{align} 
&A_i^g = \sum_{j=0}^{i-1} \overline{\gamma}_{i,j}^1 V_j  + \overline{Z}_i^g, \hso   \overline{Z}_i^g \tri\sum_{j=0}^i \overline{\gamma}_{i,j}^2 Z_j^g, \hso  i=1, \ldots, n, \hso A_0=\overline{Z}_0^g, \label{OP_STR_1_CP_SC}\\
%& \hst \equiv  \sum_{j=0}^{i-1} \overline{\gamma}_{i,j}^1 V_j  + \overline{Z}_i^g, \hst \overline{Z}_i^g \tri\sum_{j=0}^i \gamma_{i,j}^2 Z_j^g, \\
& \overline{Z}^{g,i} \hso \mbox{is independent of} \hso V^i,\hso i=0, \ldots, n,\\
& \Big\{\overline{Z}_i^g: i=0,1, \ldots, n\Big\} \: \: \mbox{zero mean correlated   Gaussian process} \label{new_12_SC}
\end{align}
for some deterministic sequences $\{\overline{\gamma}_{i,j}^1: j=0, \ldots, i-1\}$, $ \{\overline{\gamma}_{i,j}^2: j=0, \ldots, i\}$, $i=0, \ldots, n$.  Thus, the realization via the  orthogonal decomposition  (\ref{DM_CP_n}) is equivalent to the Cover and Pombra \cite{cover-pombra1989} realization, i.e.,  (\ref{OP_STR_1_CP_SC}),  (\ref{new_12_SC}). 

However, as illustrated in Theorem~\ref{gen_exa}, since the objective is to compute the  characterization of FTFI capacity given by (\ref{NEW_1}), then  the orthogonal  decomposition  realized by (\ref{OP_STR_1_CP_n})-(\ref{new_12_n}), in which  the process $\{Z_i^g: i=0, \ldots, n\}$ is an orthogonal process, is more convenient  compared to  the non-orthogonal decomposition realized by  (\ref{OP_STR_1_CP_SC}), (\ref{new_12_SC}), in which $\{\overline{Z}_i^g: i=0, \ldots, n\}$ is correlated. This is possibly one of the main  reason,  which prevented many of the past attempts to solve explicitly,  the non-stationary non-ergodic Cover and Pombra \cite{cover-pombra1989} characterization of FTFI capacity or its stationary variants \cite{yang-kavcic-tatikonda2007,kim2010}, and to    generalize it to  Multiple Input Multiple Output (MIMO) Gaussian channels, with past dependence on past channel input and output. 
%In Section~\ref{exa_gen}, we derive  analogous expressions for MIMO Gaussian recursive  LCMs with  quadratic  transmission cost functions.

% defined as follows.
%\begin{align}
%&\gamma_i^{C.0.0}(a_i, b_{i}) \tri \langle a_i, R_{i,i} a_i \rangle + \langle b_{i}, Q_{i,i} b_{i} \rangle, \hso i=0, \ldots, n, \label{G-LCM_4_G_1}  \\
%& {\mathbb A}_i={\mathbb B}_i={\mathbb V}_i\tri   {\mb R}^{p}, \hso R_{i,i} \in {\mb S}_{++}^{q\times q}, \hso Q_{i,i} \in {\mb S}_{+}^{p\times p}, \hso i=0, \ldots, n \label{G-LCM_6_G_new} 
%\end{align}

\subsubsection{Arbitrary Distributed  Limited Memory Noise}
\label{LMN}
Suppose the noise distribution is limited memory defined by 
\begin{align}
 {\bf P}_{V_i|V^{i-1}, A^i}={\bf P}_{V_i|V_{i-L}^{i-1}}, \hso i=0, \ldots, n. \label{LM_G-LCM_2_AG}  
 \end{align}
 Then  by  Theorem~\ref{CC_CP_A},  the optimal channel input distribution  
  is also limited memory, and satisfies conditional independence $\big\{{\bf P}_{A_i|V^{i-1}, B^{i-1}}(da_i|v^{i-1}, b^{i-1})= \pi^{D.0,L}(da_i|v_{i-L}^{i-1}, b^{i-1}): i=0, \ldots, n\}$. Moreover, the characterization of FTFI capacity is given as follows.
\begin{align}
{C}_{W ; B^n}^{FB,D.0,L}(\kappa) \tri & \sup_{ \overline{\cal P}_{[0,n]}^{D.0,L}(\kappa) } \sum_{i=0}^n {\bf E}^{\pi^{D.0,L}} \Big\{ 
\log\Big(\frac{dQ_i(\cdot|V_{i-L}^{i-1}, A_i)}{d\Pi_i^{\pi^{D.0,L}}(\cdot|B^{i-1})}(B_i)\Big) \Big\} \label{CM-TC-C_1_CC_B_A_LM} 
\end{align}
where 
\begin{align}
  \Pi_i^{\pi^{D.0,L}}(db_i | b^{i-1}) =& \int_{  {\mb A}_{i}\times {\mathbb V}_{i-L}^{i-1} }   Q_i(db_i | a_i, v_{i-L}^{i-1}) \otimes  P_i(da_i |v_{i-L}^{i-1}, b^{i-1}) \otimes {\bf P}^{\pi^{D.0,L}}(dv_{i-L}^{i-1} | b^{i-1})  , \hso i=0, \ldots, n, \label{A-CM-TC-C_3_CC_B_1}\\
\overline{\cal P}_{[0,n]}^{D.0,L}(\kappa) \tri& \Big\{\pi_i^{D.0,L}(da_i| v_{i-L}^{i-1},  b^{i-1}) , i=0,\ldots,n:  \frac{1}{n+1} {\bf E}^{\pi^{D.0,L}}\Big( \sum_{i=0}^n \gamma_i^{C.0, 0}(A_i, B_{i}) \leq \kappa  \Big)     \Big\}.
\end{align}
Note that for each $i$, then $v_{i-L}^{i-1}$ are the state variables, known to the encoder but  unknown to the decoder, hence they  need to be estimated at the decoder. Moreover, it is straight forward to    verify that the \'a posteriori distribution $\big\{ {\bf P}^{\pi^{D.0,L}}(dv_{i-L}^{i-1}|b^{i-1}): i=0, \ldots, n\big\}$ satisfies a recursion similar to   (\ref{rec_g1})-(\ref{rec_g3}). \\
The characterization of FTFI capacity, (\ref{CM-TC-C_1_CC_B_A_LM}) holds for  finite and continuous alphabet spaces, and any  combination of them, for  arbitrary distributed noise. 

\subsubsection{Non-stationary Non-ergodic AGN Channel with Limited Noise Memory \& Orthogonal Decomposition}  Suppose the channel is  
defined by (\ref{c-p1989}) and the noise is Gaussian and  limited memory, i.e., (\ref{c-p1989_G}) and (\ref{LM_G-LCM_2_AG}) hold. Then we can  show using the recursion (\ref{A-CM-TC-C_3_CC_B_1}) that   the optimal channel input distribution is Gaussian. Moreover,   we deduce the following realization of the  channel input process.\\
{\bf Orthogonal Decomposition.}
 \begin{align}
  &A_i^g = \sum_{j=0}^{i-1} \gamma_{i,j}^1 B_j^g + \sum_{j=1}^{L} \gamma_{i,i-j}^2 V_{i-j}  + Z_i^g, \hso i=1, \ldots, n, \hso A_0=Z_0^g, \label{OP_STR_1_CP_LM} \\
&\hst  \equiv  N_i +    M_i, \hso M_i\tri Z_i^g,  \hso \mbox{(\ref{new_11}) and (\ref{new_12}) hold.} \label{DM_CP_LM}
  \end{align} 
The above property of optimal channel input distribution, i.e., its dependence on limited memory on the channel noise,  is new and did not appear in the literature. It compliments similar results obtained   by Kim in \cite{kim2010}, for the stationary ergodic case, where the author applied  frequency domain methods to the  Cover and Pombra \cite{cover-pombra1989} decomposition of optimal channel input process,   to show that if the noise power spectral density corresponds to a stationary Gaussian autoregressive moving-average model of order $K$, then a
%the optimal channel input conditional distribution is also of order $K$,
%and that a 
 $K-$dimensional generalization of the Schalkwijk-Kailath coding scheme achieves feedback capacity.\\
However, our analysis is based on the  information structures derived in this paper, it is  strictly probabilistic, and applies to general channels.

\section{Achievability}
\label{ach}
Many  existing coding  theorems  found in  \cite{ihara1993,kramer1998,kramer2003,
chen-berger2005,kim2008,tatikonda-mitter2009,permuter-weissman-goldsmith2009,kim2010}, are either  applicable or can be generalized to show  the  per unit time limiting versions of the characterizations of FTFI capacity, corresponds to feedback capacity, under appropriate conditions. \\
Next, we provide a short elaboration on technical issues, which need to be resolved, in order to ensure, under relaxed conditions (i.e., without imposing stationarity,  ergodicity, or assuming finite alphabet spaces), that  the per unit time  limiting versions of the characterizations of FTFI capacity correspond to the supremum of all achievable feedback codes. \\ 
For Class A, B, C channel distributions and transmission cost functions, it is shown by Massey in \cite{massey1990}, that directed information $I(A^n \rar B^n)$ gives a tight bounds on any achievable code rate (of feedback codes).  This follows from the converse coding theorem \cite{kramer1998,kim2008,tatikonda-mitter2009,permuter-weissman-goldsmith2009}, similar to the converse coding theorem of NCM-D, given in Theorem~\ref{CC_CP_A}.
Via these tight bounds,  the direct part of the coding theorem can be shown,  by investigating the per unit time limit of the characterizations of FTFI capacity, without unnecessary \'a priori assumptions on  the channel, such as, stationarity, ergodicity, or information stability of the joint process $\{(A_i, B_i): i=0, 1, \ldots\}$.\\
Further, through the characterizations of FTFI capacity, several hidden properties of the role of optimal channel conditional distributions   to affect the channel output transition probability distribution can  be  identified. \\
Next,  we   state  the fundamental conditions, in order to make the transition to the per unit time limiting versions of the characterizations of FTFI capacity, and to give an operational meaning to these characterizations.  

{\bf (C1)} For any source process $\big\{X_i: i=0, \ldots, \big\}$ to be encoded and transmitted over the channel,   the  conditional independence  condition (\ref{CI_Massey_N}) is satisfied    \cite{massey1990}. 
%\begin{align}
%{\bf P}_{B_i|B^{i-1}, A^i, X^k}={\bf P}_{B_i|B^{i-1}, A^i} \hso   \forall k \in \{0,1, \ldots, n\},\hso i=0, \ldots, n \label{CI_Massey} 
%\end{align}
As pointed out by Massey  \cite{massey1990}, conditional independence  condition (\ref{CI_Massey_N}), is a necessary condition for  directed information $I(A^n \rar B^n)$ to give a tight upper bound on the information conveyed by the source to the channel output (Theorem~3 in \cite{massey1990}),    and that directed information reduces to mutual information in the absence of feedback, that is, if ${\bf P}_{A_i|A^{i-1}, B^{i-1}}={\bf P}_{A_i|A^{i-1}}, i=0, \ldots, n$, then $I(A^n \rar B^n)=I(A^n ; B^n)$.

{\bf (C2)} For any of the channels and transmission cost functions investigated, there exist channel input conditional distributions denoted by  $\big\{{\pi}_i^*(da_i|{\cal I}^{\bf P}): i=0, \ldots, n\big\} \in {\cal P}_{[0,n]}(\kappa)$ (if transmission cost is imposed),  which achieve the supremum of the characterizations of FTFI capacity, and their per unit time limits exist and they are finite.\\
%, i.e.,  the supremum in , and  $\liminf_{n \longrightarrow \infty} \frac{1}{n+1} C_{A^n \rar  B^n}^{FB, C.I,J}(\kappa)$ is finite.  \\
For the converse part of the channel coding theorem,  existence, i.e., (C2),  is necessary, because it is often shown by invoking  Fano's inequality, which  requires finiteness of $\liminf_{n \longrightarrow \infty} \frac{1}{n+1} C_{A^n \rar  B^n}^{FB}(\kappa)$. Similarly, the direct part of the coding theorem is often shown  by generating channel codes according to the channel input distributions, which achieve  $\liminf_{n \longrightarrow \infty} \frac{1}{n+1} C_{A^n \rar  B^n}^{FB}(\kappa)$.  Hence, the derivation of coding theorems pre-supposes existence of optimal channel input distributions and finiteness of the limiting expression.     \\
Since, for continuous and countable alphabet spaces, $\big\{({\mb A}_i, {\mb B}_i): i=0, \ldots, n\big\}$, information theoretic measures are not necessarily continuous functions on the space of distributions \cite{ho-yeung2009ieeeit}, and that,  directed information is lower semicontinuous,   as a functional of channel input conditional distributions  $\big\{{\bf P}_{A_i|A^{i-1}, B^{i-1}}: i=0, \ldots, n\big\} \in {\cal P}_{[0,n]}$, sufficient conditions for continuity of directed information should be  identified. Such conditions are given in  \cite{charalambous-stavrou2013aa}. However, for finite alphabet spaces such technicalities do not arize, and hence one can invoke the various  coding theorems derived in \cite{kramer1998,kramer2003,
chen-berger2005,kim2008,tatikonda-mitter2009,permuter-weissman-goldsmith2009,kim2010} are applicable.

{\bf (C3)} The optimal channel input distributions  $\big\{\pi_i^*(da_i| {\cal I}_i^P): i=0,1,\ldots, n\big\} \in  {\cal P}_{[0,n]}(\kappa)$, which achieve the supremum of the characterizations of FTFI capacity,  induce stability in the sense of Dobrushin \cite{pinsker1964}, of the  directed information density, that is,  
\begin{align}
 \lim_{n \longrightarrow \infty} {\bf P}^{\pi^*} \Big\{(A^n ,B^n) \in {\mb A}^n \times {\mb B^n}:\frac{1}{n+1} \Big|{\bf E}^{\pi^*} \big\{ {\bf i}^{\pi^*}(A^n, B^n)\big\} - {\bf i}^{\pi^*}(A^n, B^n) \Big| > \varepsilon \Big\} =0  \label{IS-O_1}
\end{align}
and stability of  the transmission cost constraint, that is,  
  \begin{align}
 \lim_{n \longrightarrow \infty} {\bf P}^{\pi^*} \Big\{ (A^n ,B^n) \in {\mb A}^n \times {\mb B^n}: \frac{1}{n+1} \Big| {\bf E}^{\pi^*} \Big\{\sum_{i=0}^n \gamma_i(T^i A^n, T^iB^{n})\Big\} -  \sum_{i=0}^n \gamma_i(T^i A^n, T^iB^{n}) \Big| > \varepsilon \Big\} =0. \label{IS-O_2}
\end{align}
For example, for any channel distribution of Class C, and any transmission cost of Class C, the directed information density is  
\bea
{\bf i}^{\pi^*}(A^n, B^n)\equiv {\bf i}^{\pi^{*,A.I}}(A^n, B^n)  \tri \sum_{i=0}^n \log\Big(  \frac{{Q}_i(\cdot|B_{i-M}^{i-1}, A_{i-L}^i)}{\Pi_i^{\pi^{*,A.I}}(\cdot|B^{i-1})}(B_i)\Big), \hso i=0, \ldots, n, \hso I\tri \max\{L, N\}
\eea
and  similarly for the rest of the characterizations of FTFI capacity derived in the paper. The important research question of showing (\ref{IS-O_1}) and (\ref{IS-O_2}) requires extensive analysis, especially, for abstract alphabet spaces (i.e., continuous), and this is beyond the scope of this paper. For finite alphabet spaces  various  coding theorems derived in \cite{kramer1998,kramer2003,
chen-berger2005,kim2008,tatikonda-mitter2009,permuter-weissman-goldsmith2009,kim2010} are applicable. 

Condition (C1) implies the well-known data processing inequality, while condition (C2) implies  existence of the optimal channel input distributions and finiteness of the corresponding characterizations of the FTFI capacity and their per unit time limits. Condition (C3) is sufficient to ensure the AEP holds, and hence   standard random coding arguments hold,   i.e., following  Ihara \cite{ihara1993}, by replacing the information density of mutual information by the directed information density.  \\
Finally, we note that, for specific application examples,  it is possible to  invoke  the characterizations of FTFI capacity derived in this paper, to compute the expressions of error exponents derived in \cite{permuter-weissman-goldsmith2009}, and establish coding theorems via this alternative direction.

\section{Conclusion}
We derived structural properties of optimal channel input conditional distributions, which maximize   directed information from channel input RVs to channel output RVs, for general channel distributions with memory, with and without transmission cost constraints, and we obtained the corresponding characterizations of FTFI capacity. These are characterized by channel input distributions, which satisfy  conditional independence. We have also derived similar structural properties for general Nonlinear Channel Models (NCM) driven by correlated noise processes. \\
%These structural properties generalize the structural properties of  Memoryless Channels with feedback,  and Shannon's two-letter characterization of channel capacity,    to  channels with memory.  \\
We have applied one of the characterizations of FTFI capacity  to recursive Multiple Input Multiple Output Gaussian Linear Channel Models, with  limited  memory on channel input and output sequences,  under  general transmission cost constraints, and we have established a separation principle. The separation principle is based on  realizing optimal channel input distributions by randomized strategies, using  orthogonal decompositions. The  feedback capacity can be  obtained via its per unit time limiting version and standard results on  ergodic Markov Decision theory.\\
In future work, it is of interest to understand the role of feedback to control the channel output process, to derive, for specific channel models,  closed form expressions for the characterizations of FTFI capacity  and feedback capacity, and to determine whether feedback increases capacity, and by how much.   \\
Whether the methodology of this paper  can be applied to extremum problems of network information theory, to identify information structures of optimal distributions and achievable upper bounds, remains, however, a subject for further research.

\bibliographystyle{IEEEtran}
\bibliography{Bibliography_capacity}

\end{document}